\documentclass[11pt]{article}
\usepackage{amsmath, amssymb, dsfont, mathrsfs}
\usepackage{amsthm}
\usepackage{fullpage}
\usepackage{float}
\usepackage{clrscode3e}
\usepackage{mdframed}
\usepackage[hidelinks]{hyperref}

\newcommand{\oldLinespread}{\linespread{1}}
\newcommand{\newLinespread}{\linespread{1.2}}
\newLinespread


\title{Nearly Optimal Deterministic Algorithm for \\ Sparse Walsh-Hadamard Transform}

\author{{\sc Mahdi Cheraghchi}\thanks{%
Email: $\langle$cheraghchi@berkeley.edu$\rangle$. Work supported in part by a Qualcomm fellowship at
Simons Institute for the Theory of Computing at UC Berkeley, and a Swiss National Science Foundation research grant PA00P2-141980. 
Part of work was done while the author was with 
MIT Computer Science and Artificial Intelligence Laboratory. } \\%
  University of California\\
  Berkeley, CA 94720
 \and %
{\sc Piotr Indyk}\thanks{
Email: $\langle$indyk@mit.edu$\rangle$.} %
 \\
  MIT\\
  Cambridge, MA 02139
}


\date{}

\date{}


\newcommand{\N}{\mathds{N}}

\newcommand{\cX}{\mathcal{X}}
\newcommand{\cY}{\mathcal{Y}}

\newcommand{\cT}{\mathcal{T}}

\newcommand{\U}{\mathcal{U}}
\newcommand{\F}{\mathds{F}}

\newcommand{\Z}{\mathbb{Z}}
\newcommand{\R}{\mathbb{R}}
\newcommand{\E}{\mathds{E}}
\newcommand{\supp}{\mathsf{supp}}

\newcommand{\eps}{\epsilon}
\newcommand{\innr}[1]{\langle #1 \rangle}

\newcommand{\xx}{{x}}
\newcommand{\hxx}{{\hat{x}}}
\newcommand{\DHT}{\mathsf{DHT}}
\newcommand{\Good}{{\mathcal{G}}}

\newtheorem{thm}{Theorem}
\newtheorem{prop}[thm]{Proposition}

\newtheorem{lem}[thm]{Lemma}

\newtheorem{coro}[thm]{Corollary}
\theoremstyle{definition}

\newtheorem{defn}[thm]{Definition}
\newtheorem{remark}[thm]{Remark}
\newtheorem{constr}{Construction}


\newenvironment{Proof}{\begin{proof}}{\end{proof}}

\newcommand{\zo}{\{0,1\}}
\newcommand{\pmo}{\{-1,+1\}}

\newcommand{\hx}{\hat{x}}
\newcommand{\tx}{\tilde{x}}

\newcommand{\Mz}{M}
\newcommand{\Gz}{G}

\floatstyle{boxed}
\newfloat{constr}{htb!}{lop}
\floatname{constr}{Construction}

%
%


\newenvironment{eqn}{\[}{\]}

\usepackage{url}
\begin{document}

\maketitle

\begin{abstract}
For every fixed constant $\alpha > 0$, we design an algorithm for computing the 
$k$-sparse Walsh-Hadamard transform of an $N$-dimensional vector $x \in \R^N$ in time $k^{1+\alpha} (\log N)^{O(1)}$.
Specifically, the algorithm is given query access to $x$ and computes a $k$-sparse $\tx \in \R^N$ satisfying
$\|\tx - \hx\|_1 \leq c \|\hx - H_k(\hx)\|_1$, for an absolute constant $c > 0$, where
$\hx$ is the transform of $x$ and $H_k(\hx)$ is its best $k$-sparse approximation. Our algorithm
is fully deterministic and only uses non-adaptive queries to $x$ (i.e., all queries are determined and performed in parallel
when the algorithm starts). 

An important technical tool that we use is a construction of nearly optimal and linear
lossless condensers which is a careful instantiation of the GUV condenser (Guruswami,
Umans, Vadhan, JACM 2009). Moreover, we design a deterministic and non-adaptive $\ell_1/\ell_1$ compressed sensing
scheme based on general lossless condensers that is equipped with a fast reconstruction algorithm
running in time $k^{1+\alpha} (\log N)^{O(1)}$ (for the GUV-based condenser) and is of independent
interest. Our scheme significantly simplifies and improves an earlier expander-based construction due to 
Berinde, Gilbert, Indyk, Karloff, Strauss (Allerton 2008).

Our methods use linear lossless condensers in a black box fashion; therefore, any future improvement
on explicit constructions of such condensers would immediately translate to improved parameters in our framework 
(potentially leading to $k (\log N)^{O(1)}$ reconstruction time with a reduced exponent in the poly-logarithmic factor,
and eliminating the extra parameter $\alpha$).

Finally, by allowing the algorithm to use randomness, while 
still using non-adaptive queries, the running time of 
the algorithm can be improved to $\tilde{O}(k \log^3 N)$.
\end{abstract}

\newpage

\tableofcontents

\newpage

\section{Introduction}

The Discrete Walsh-Hadamard transform (henceforth the Hadamard Transform or DHT) 
of a vector $x \in \R^N$, where $N = 2^n$, is a vector $\hx \in \R^N$ defined as follows:

\begin{equation} \label{eqn:had}
\hx(i) = \frac{1}{\sqrt{N}} \sum_{j \in \F_2^n} (-1)^{\innr{i, j}} x(j)
\end{equation}
where the coordinate positions are indexed by the elements of $\F_2^n$, $x(i)$ denoting
the entry at position $i \in \F_2^n$ and
the inner product $\innr{i, j}$ is over $\F_2$. 
Equivalently, the Hadamard transform is a variation of the Discrete Fourier transform (DFT) defined over the hypercube $\F_2^n$.
We use the notation $\hx = \DHT(x)$.


%
%

The standard divide and conquer approach of Fast Fourier Transform (FFT) can be applied to the Hadamard transform as well
to compute DHT in time $O(N \log N)$.  In many applications, however, most of the Fourier coefficients of a signal are small or equal to zero, i.e., the output of the DFT is (approximately) sparse. In such scenarios one can hope to design an algorithm with a running time that is {\em sub-linear} in the signal length $N$. Such algorithms would significantly improve the performance of systems that rely on processing of sparse signals.

The goal of designing efficient DFT and DHT algorithms for (approximately) sparse signals has been a subject of a large body of research, starting with  
%
%
the celebrated Goldreich-Levin theorem \cite{ref:GL89} in complexity theory\footnote{ 
This result is also known in the coding theory community as a list decoding algorithm for the Hadamard
code, and crucially used in computational learning as a part of the Kushilevitz-Mansour Algorithm for
learning low-degree Boolean functions \cite{ref:KM91}. }. The last decade has witnessed the development of several highly efficient sub-linear time sparse Fourier transform algorithms. These recent algorithms have mostly focused on the Discrete Fourier transform (DFT)
over the cyclic group $\Z_N$ (and techniques that only apply to this group), whereas some (for example, \cite{ref:SHV13}) have
focused on the Hadamard transform.  In terms of the running time, the best bounds to date were obtained in \cite{ref:HIKP12a} which showed that a $k$-sparse approximation of the DFT transform can be computed
in time $O(k (\log N)^2)$, or even in $O(k \log N)$ time if the spectrum of the signal has at most $k$ non-zero coefficients. These developments as well as some of their applications have been summarized in two surveys~\cite{ref:GST08} and~\cite{ref:GIIS14}.

While most of the aforementioned algorithms are randomized,  from both theoretical and practical viewpoints it is desirable to design {\em deterministic}  algorithms for the problem. Although such algorithms have been a subject of several works, including \cite{ref:Aka10,ref:Iwe10,ref:Iwe13}, there is a considerable efficiency gap between the deterministic sparse Fourier Transform algorithms and the randomized ones. Specifically, the best known deterministic algorithm, given in~\cite{ref:Iwe13}, finds a $k$-sparse approximation of the DFT transform of a signal in time $O(k^2 (\log N)^{O(1)})$; i.e., its running time is {\em quadratic} in the signal sparsity.  Designing a deterministic algorithm with reduced run time dependence on the signal sparsity has been recognized as a challenging open problem in the area (e.g., see Question 2 in~\cite{ref:princeton}). 

\subsection{Our result}
In this paper we make a considerable progress on this question, by designing a deterministic algorithm for DHT that runs in time $O(k^{1+\alpha} (\log N)^{O(1)})$. 
Since our main interest is optimizing the exponent of $k$ in the running time of the DHT algorithm, the reader may think of 
a parameter regime where
the sparsity parameter $k$ is not too insignificant compared to the dimension $N$ (e.g., we would like to have
$k \geq (\log N)^{\omega(1)}$, say $k \approx N^{\Theta(1)}$) so that reducing the exponent of $k$ at cost of incurring additional poly-logarithmic factors in $N$ 
would be feasible\footnote{For this reason, and in favor of the clarity and modularity of presentation, for the most part we
do not attempt to optimize the exact constant in the exponent of the $(\log N)^{O(1)}$ factor.
}.

To describe the result formally, we will consider a formulation of the problem when the algorithm is given  a query access to $\hx$ and the goal is to approximate the largest $k$ terms of $x$ using a deterministic sub-linear time algorithm\footnote{Since the Hadamard transform is its own inverse, we can interchange the roles of $x$ and $\hx$, so the same algorithm can be used to approximate the
largest $k$ terms of $\hx$ given query access to $x$. }.
More precisely, given an integer parameter $k$ and query access to $\hx$, we wish to 
compute a vector $\tx \in \F_2^N$ such that for some absolute constant $c > 0$, 
\begin{align} \label{eqn:approx}
\| \tx - x \|_1 \leq c \cdot \| H_k(x) - x \|_1,
\end{align}
where we use $H_k(x)$ to denote the approximation of $x$ to the $k$
largest magnitude coordinates; i.e., $H_k(x) \in \R^N$ is only supported on the $k$ largest
(in absolute value) coefficients of $x$ and is equal to $x$ in those positions.
Note that if the input signal $x$ has at most $k$ non-zero coefficients, then $H_k(x)=x$ and therefore the recovery is exact, i.e., $\tx = x$.
The goal formulated in \eqref{eqn:approx} is the so-called $\ell_1/\ell_1$ recovery in the sparse recovery literature.
In general, one may think of $\ell_p/\ell_q$ recovery where the norm on the left hand side (resp., right hand side) of 
\ref{eqn:approx} is $\ell_p$ (resp., $\ell_q$), such as $\ell_2/\ell_1$ or $\ell_2/\ell_2$. However, in this
work we only address the $\ell_1/\ell_1$ model as formulated in \eqref{eqn:approx} (for a survey of
different objectives and a comparison between them, see \cite{ref:FR13}).

The following statement formally captures our main result. 

\begin{thm} \label{thm:general:GUV} 
 For every fixed constant $\alpha > 0$, there is a deterministic algorithm as follows. Let $N = 2^n$ and $k \leq  N$
 be positive integers. Then, given (non-adaptive) query access to any $\hx \in \R^N$ where each coefficient
 of $\hx$ is $n^{O(1)}$ bits long, the algorithm 
 runs in time $k^{1+\alpha} n^{O(1)}$  and
 outputs $\tx \in \R^N$ that satisfies \eqref{eqn:approx} (where $\hx = \DHT(x)$)
 for some absolute constant $c > 0$.
\end{thm}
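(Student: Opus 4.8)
The plan is to factor the theorem into two reusable pieces and then glue them via the algebraic structure of the Hadamard matrix. The first piece is the explicit \emph{linear lossless condenser} supplied earlier in the paper (the careful, characteristic-$2$ instantiation of the GUV condenser): a map $C\colon \F_2^n\times\{0,1\}^d\to\{0,1\}^m$ such that for each fixed seed $s$ the map $T_s:=C(\cdot,s)\colon\F_2^n\to\F_2^m$ is $\F_2$-linear, and such that the bipartite graph with left set $\F_2^n$, right set $\{0,1\}^{d+m}$, and edges $j\sim(s,C(j,s))$ is a lossless expander for left sets of size up to $K=\Theta(k)$ (every such set has neighborhood of size at least $(1-\eps)2^d$ times its size, for a small constant $\eps$). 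With the GUV parameters one has $2^d=\poly(n)$ and $2^m=K^{1+\alpha'}\poly(n)$ for a suitable $\alpha'<\alpha$, so the entire right side has size $2^{d+m}=k^{1+\alpha}\poly(n)$. The second piece is a deterministic, non-adaptive $\ell_1/\ell_1$ recovery scheme whose measurement matrix $M\in\{0,1\}^{2^{d+m}\times N}$ is the adjacency matrix of this expander, $M_{(s,v),j}=1$ iff $C(j,s)=v$; that such an $M$ supports $\ell_1/\ell_1$ recovery with a constant approximation factor is the standard expander/compressed-sensing connection (à la Berinde--Gilbert--Indyk--Karloff--Strauss), and the new content is to equip it with a \emph{sublinear-time} decoder.

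Computing the measurements from query access to $\hx$ is where the Hadamard structure enters. Fix a linear map $T=T_s$ and consider the ``bucket sums'' $z_s(v):=\sum_{j:\,T_sj=v}x(j)$ for $v\in\F_2^m$, which are precisely the entries of $Mx$ in block $s$. Since $\hx(T_s^\top u)=\frac{1}{\sqrt N}\sum_i(-1)^{\innr{T_s^\top u,\,i}}x(i)=\frac{1}{\sqrt N}\sum_i(-1)^{\innr{u,\,T_s i}}x(i)$, summing this against $(-1)^{\innr{u,v}}$ over $u\in\F_2^m$ collapses the inner sum to the fibre $\{i:T_si=v\}$, so $(z_s(v))_v$ equals, up to the global scalar $\sqrt{N/2^m}$, the $2^m$-point Hadamard transform of the vector $(\hx(T_s^\top u))_{u\in\F_2^m}$. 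Hence block $s$ costs $2^m$ non-adaptive queries to $\hx$ plus one $2^m$-point fast Hadamard transform, i.e.\ $O(m\,2^m)$ arithmetic operations (each on $\poly(n)$-bit numbers); over all $2^d$ seeds this is $2^{d+m}\poly(n)=k^{1+\alpha}\poly(n)$ queries and running time, and the whole query set is fixed before the algorithm starts.

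For recovery I would run an iterative ``identify--estimate--subtract'' loop operating on a candidate list of $O(k)$ coordinates. Given the current residual measurement vector, lossless expansion guarantees that for any coordinate $j$ carrying a large share of the remaining $\ell_1$ mass, a $(1-O(\eps))$-fraction of the blocks $s$ have $j$ alone among the current heavy set in its bucket; in those blocks $z_s(C(j,s))$ is a good estimate of $x(j)$, so a coordinatewise median across blocks yields $\widetilde x(j)$, and subtracting $M\widetilde x$ from the measurements decreases the residual $\ell_1$ norm by a constant factor, exactly as in the expander $\ell_1/\ell_1$ analysis. The nontrivial ingredient is \emph{finding} these coordinates in sublinear time instead of scanning all $N$: here I would exploit the algebraic form of the GUV condenser, where each $j$ is a low-degree polynomial and $C(j,\cdot)$ records a bundle of its evaluations, so that a short list of candidate preimages consistent with the currently heavy buckets can be produced by a list-recovery/interpolation step on the underlying Reed--Solomon/Parvaresh--Vardy code, in time $k^{1+\alpha}\poly(n)$. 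A constant number of such rounds (geometric decrease of the residual) followed by a final pruning to the top $k$ entries delivers a $k$-sparse $\tx$ with $\|\tx-x\|_1\le c\|H_k(x)-x\|_1$. The instantiation at the end just checks that taking the GUV exponent parameter to be a power of $2$ over a characteristic-$2$ field makes each $C(\cdot,s)$ genuinely $\F_2$-linear, and that choosing the min-entropy threshold $\log K$ with $K=\Theta(k)$ and $\eps$ a constant yields the parameters claimed above.

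The step I expect to be the real obstacle is the sublinear-time decoder: one must simultaneously (i) keep the candidate list of size $O(k)$ while provably capturing essentially all of the heavy $\ell_1$ mass, (ii) realize the identification step as a list recovery of the GUV/Parvaresh--Vardy code with the correct agreement parameters and within the $k^{1+\alpha}\poly(n)$ time budget, and (iii) propagate the $\ell_1/\ell_1$ error bound through the iterations using only the deterministic ``for all sparse sets'' expansion property, since there is no randomness to average over and each step's guarantee must be worst-case. Getting the constants in the exponent of $\log N$ and the value of $\eps$ to cooperate is a secondary bookkeeping matter that I would not try to optimize.
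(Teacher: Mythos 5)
Your first two paragraphs match the paper: the linear GUV instantiation and the computation of the bucket sums from $O(2^{d+m})$ non-adaptive queries to $\hx$ via a small fast Hadamard transform on each fibre is exactly the paper's Lemma~\ref{lem:sampling} (Proposition~\ref{prop:sumV} and Lemma~\ref{lem:sumVcompute}). The genuine gap is in the decoder, and it is precisely the step you flag as "the real obstacle": you defer the sublinear-time \emph{identification} of heavy coordinates to an unspecified list-recovery/interpolation procedure for the underlying Parvaresh--Vardy/GUV code, but you neither state the soft/approximate list-recovery guarantee you need (recovery from buckets of which only a $(1-O(\eps))$ fraction are "good", with output list size $O(k)$ and time $k^{1+\alpha}\poly(n)$) nor show the GUV code admits it. This is not a bookkeeping matter --- it is the core new ingredient the theorem requires. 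The paper takes a different route here: it augments the expander measurements with the tensor product $M\otimes B$ against a bit-selection matrix (Lemma~\ref{lem:sampling:tensor}), so that each candidate bucket's heavy coordinate is recovered bit-by-bit by the procedure $\proc{Search}$ comparing $|y^{s,t,b}(j)|$ with $|y^{s,t,0}(j)|/2$; this keeps the condenser entirely black-box (which is also what justifies the paper's claim that any improved condenser plugs in directly), whereas your route ties correctness to unproven algebraic list-recovery properties.

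Two further steps in your recovery loop would fail as stated in the deterministic, worst-case setting. First, the claim that for any coordinate $j$ carrying a large share of the residual mass a $(1-O(\eps))$-fraction of blocks isolate $j$ from the current heavy set does not follow from lossless expansion: expansion bounds only the \emph{aggregate} collision mass over the heavy set (this is Lemma~\ref{lem:BGIKS}), and a fixed pair of heavy coordinates may collide in a constant fraction of the $2^d$ blocks without violating $|\Gamma(S)|\ge(1-\eps)D|S|$, so a coordinatewise median across blocks can be badly corrupted. The paper instead argues about the total $\ell_1$ mass captured by "good" coordinates (Lemmas~\ref{l:Gz} and~\ref{lem:beta}) and, in each iteration, selects a single block $t_0$ by minimizing $\|Mx-M(x^s+\Delta^{s,t})\|_1$, using RIP-1 (Proposition~\ref{prop:xpxpp}) to compare candidates. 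Second, a constant number of rounds is not enough: geometric decay only gives residual $2^{-O(1)}\|x\|_1$, which is not bounded by $c\|x-H_k(x)\|_1$ in general; one needs $\Theta(\log(\|x\|_1/\|x-H_k(x)\|_1))$ iterations, and the paper runs $s_0=\log(NL)+O(1)=n^{O(1)}$ rounds and then uses the $n^{O(1)}$-bit-precision hypothesis together with a rounding argument (Proposition~\ref{prop:round}) to convert the $C\|x-H_k(x)\|_1+\nu\|x\|_1$ guarantee into \eqref{eqn:approx}. Your proposal omits both of these mechanisms.
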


\begin{remark} \label{rem:alpha}
The parameter $\alpha$ in the above result is arbitrary as long as it is an absolute positive constant,
for example one may fix $\alpha = .1$ throughout the paper. We remark that this parameter appears not
because of our general techniques but solely
as an artifact of a particular state-of-the-art family of unbalanced expander graphs 
(due to Guruswami, Umans, and Vadhan \cite{ref:GUV09})
that we use as a part of the algorithm (as further explained below in the techniques section). 
Since we use such expander graphs as a black box, any future progress on construction of unbalanced expander graphs
would immediately improve the running time achieved by Theorem~\ref{thm:general:GUV},
potentially leading to a nearly optimal time of $k n^{O(1)}$, with linear dependence on the sparsity parameter $k$
which would be the best to hope for.   

In the running time $k^{1+\alpha} n^{O(1)}$ reported by Theorem~\ref{thm:general:GUV}, the $O(1)$
in the exponent of $n$ hides a factor depending on $1/\alpha$; i.e., the running time can be
more precisely be written as $k^{1+\alpha} n^{2/\alpha + O(1)}$. However, since $\alpha$ is taken
to be an absolute constant, this in turn asymptotically simplifies to $k^{1+\alpha} n^{O(1)}$.  Since our main focus in this
work is optimizing the exponent of $k$ (and regard the sparsity $k$ to not be too small
compared to $N$, say $k \approx N^{\Theta(1)}$), we have not attempted to optimize the exponent of $\log N$ in the running time.
However, as we will see in Section~\ref{sec:thm:main:random}, if one is willing to use randomness in the
algorithm, the running time can be significantly improved (eliminating the need for the parameter $\alpha$) 
using a currently existing family of explicit expander graphs (based on the Left-over Hash Lemma).  \qed
\end{remark}


As discussed in Remark~\ref{rem:alpha} above, our algorithm employs
state of the art constructions of explicit lossless expander graphs that to this date remain sub-optimal,
resulting in a rather large exponent in the $\log N$ factor of the asymptotic running time estimate.
Even though the main focus of this article is fully deterministic algorithms for fast recovery of the
Discrete Hadamard Transform, we further observe that the same algorithm that we develop can
be adapted to run substantially faster using randomness and sub-optimal lossless expander graphs
such as the family of expanders obtained from the Leftover Hash Lemma. As a result, we obtain
the following improvement over the deterministic version of our algorithm.

\begin{thm} \label{thm:general:random:intro} 
There is a randomized algorithm that, given integers $k, n$ (where $k \leq n$), and 
(non-adaptive) query access to any $\hx \in \R^N$ (where $N := 2^n$ and each coefficient
of $\hx$ is $O(n)$ bits long),
 outputs $\tx \in \R^N$ that, with probability at least $1-o(1)$ over the internal random
 coin tosses of the algorithm, satisfies \eqref{eqn:approx} 
 for some absolute constant $c > 0$ and $\hx = \DHT(x)$. 
Moreover, the algorithm performs a worse-case
$O(k n^3 (\log k) (\log n) ) = \tilde{O}(k (\log N)^3)$ 
arithmetic operations.
\end{thm}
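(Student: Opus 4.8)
The plan is to mirror the deterministic construction behind Theorem~\ref{thm:general:GUV}, but replace the GUV-based linear lossless condenser by the randomized linear lossless condenser obtained from the Leftover Hash Lemma (a random linear map $\F_2^n \to \F_2^d$ with $d = O(\log(k/\eps)) + \text{(entropy)}$), which is lossless with the required parameters with probability $1 - o(1)$. Since the whole framework uses the condenser only as a black box, the only thing that changes is the bit-length $d$ of the condenser output (hence the number $m = 2^d = \poly(k) \cdot \poly(n)$ of measurement ``buckets'' per repetition) and the time needed to evaluate the map; everything downstream — the $\ell_1/\ell_1$ compressed sensing scheme, the fact that the measurements are non-adaptive DHT queries, and the guarantee~\eqref{eqn:approx} — carries over verbatim. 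So the bulk of the argument is a bookkeeping of parameters and running time, plus a union bound to show the random condenser works.

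First I would recall that each measurement in the scheme is an inner product of $\hx$ with a $\pmo$-vector indexed by an affine subspace of $\F_2^n$, which by~\eqref{eqn:had} equals (up to normalization) a single coordinate $x(j)$ for an appropriate $j$; hence all measurements are simply (non-adaptive) queries to $x$, equivalently to $\hx = \DHT(x)$ after swapping roles. Concretely, with the Leftover-Hash condenser the total number of measurements is $O(k \cdot m)$ where $m = 2^{O(\log k + \log n)} = \poly(k,n)$ — one needs to track that the seed length is $O(\log n)$ and the output length is $\log k + O(\log n)$ to get $m = k \cdot n^{O(1)}$, so the measurement count is $k^{2} n^{O(1)}$; the recovery algorithm, however, runs in time proportional to (number of buckets)$\times$(number of iterations)$\times$(cost per bucket), which I would argue is $k \cdot n^{O(1)}$ buckets examined, over $O(\log k)$ iterations of the iterative ``bucketing/voting'' recovery, each bucket costing $O(n)$ bit-operations to decode an index and $O(n)$ to update, and an extra $O(n)$ factor for arithmetic on $O(n)$-bit coefficients. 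Multiplying gives the claimed $O(k n^3 (\log k)(\log n)) = \tilde O(k (\log N)^3)$ arithmetic operations.

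The one genuinely probabilistic step is establishing that a uniformly random $\F_2$-linear map is a lossless condenser with the parameters the analysis of the compressed sensing scheme demands — i.e. that the associated bipartite graph is a $(k,\eps)$-expander — which is exactly the Leftover Hash Lemma / standard probabilistic expander construction; a union bound over the $\binom{N}{\le k}$ relevant supports shows the failure probability is $o(1)$ provided $d \ge \log k + \log(1/\eps) + O(1) + $ (a term that is absorbed, since here the ``source'' is flat on a $k$-set). I would state this as a lemma, prove the expansion by the first-moment method, and then invoke the already-established reduction (the deterministic theorem's machinery) to conclude correctness of~\eqref{eqn:approx} on the event that the map is good. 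The main obstacle — really the only subtlety — is making sure the parameter $d$ (and hence $m$, and hence the running time) comes out to $O(\log k + \log n)$ rather than something larger: one must use that the restriction of $\hx$ to its top-$k$ support behaves like a flat source of min-entropy $\log k$, so the condenser output length need only be $\log k + O(\log(1/\eps)) = \log k + O(1)$ bits for constant $\eps$, plus $O(\log n)$ for the seed, keeping $m = k\cdot n^{O(1)}$ and the final count at $\tilde O(k (\log N)^3)$.
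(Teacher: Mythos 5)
Your proposal diverges from the paper's route in a way that hides a genuine gap: you randomize the \emph{condenser} (pick a random linear map, or a tiny random linear family, and argue by a first-moment/union bound that it is a $(k,\eps)$-lossless expander for \emph{all} $k$-subsets), whereas the paper keeps the \emph{explicit} Leftover-Hash condenser family (seeds $\beta \in \F_{2^n}$, so $D = 2^n$, output length $r = \log k + O(1)$) and instead randomizes the \emph{algorithm}. The for-all-supports claim you need is false as stated for a single random linear map $h\colon \F_2^n \to \F_2^d$ with $d \approx \log k + O(\log n)$: its kernel has dimension $n-d \gg \log k$, so any $k$-subset of a kernel coset collapses to a single bucket, and no union bound can save a statement that fails for every outcome of the randomness. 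Even for a small shared family of random linear maps the neighborhoods of different left vertices are heavily correlated (common kernels), so the standard ``random graph is a lossless expander whp'' first-moment computation does not apply; this is precisely why small explicit/randomized \emph{linear} lossless condensers are nontrivial and why the paper's deterministic result leans on GUV. Your remark that ``only the top-$k$ support of $x$ matters, a flat source of min-entropy $\log k$'' does not rescue the argument either: the iterative recovery needs expansion (RIP-1) on the supports of residuals $x-x^s$, of $H_k(x)-x'$, etc., and these sets depend on the algorithm's own random choices, so they cannot be fixed in advance of the randomness the way a one-shot union-free argument would require.

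The paper's proof instead modifies procedure $\proc{Recover}$: rather than looping over all $D$ seeds (infeasible for $D=2^n$), it draws a fresh multiset of $q = O(\log(s_0/\eta))$ uniformly random seeds in each iteration (and fresh seeds for the two $\ell_1$-minimization selection steps), and proves an error-amplification lemma (a Chernoff bound showing that for any set $S$ \emph{fixed before the seeds are drawn}, a $(1-2\eps)$ fraction of sampled seeds expand $S$ well). Freshness of the seeds at each stage is exactly what makes the relevant sets depend only on earlier coins, so the lemma applies; union bounds over the $s_0 = n^{O(1)}$ (respectively $\log(1/\nu)$) stages give success probability $1-\eta$. The running time then comes from: $2^r = O(k)$ for the LHL family, hash evaluation in $f(n)=O(n\log n)$ via $\F_{2^n}$ multiplication, per-iteration cost $O(2^r f(n))$ over $q\,s_0$ sampled seeds, plus a preprocessing step (computing $M^t x$ and $(M^t\otimes B)x$ from DHT queries via the subspace-sum recursion and Gaussian elimination for each sampled $t$) costing $O(q s_0 (2^r r + n^2) n^2)$; with $\nu = \Theta(1/(NL))$, $\eta = 1/n^{O(1)}$ this yields $O(\log(n/\eta)(k\log k+n^2)n^3)$, which equals the claimed $O(k n^3(\log k)(\log n))$ only under the additional hypothesis $k=\Omega(n^2)$ made in the paper's corollary. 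Your accounting (seed length $O(\log n)$, measurement count $k^2 n^{O(1)}$, no preprocessing cost) does not match these constraints, so even setting aside the expander issue, the stated $\tilde O(k\log^3 N)$ bound is not derived. To repair your approach you would essentially have to reintroduce the paper's mechanism: a seeded explicit (or at least per-iteration re-randomized) condenser family together with an amplification argument for sets determined by prior coins.
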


\subsection{Techniques}

Most of the recent sparse Fourier transform algorithms (both randomized and deterministic) are based  on a form of ``binning''.  At a high level, sparse Fourier algorithms work by mapping (binning)  the coefficients into a small number of bins. Since the signal is sparse, each bin is likely to have only one large coefficient, which can then be located (to find its position) and estimated (to find its value). The key requirement is that the binning process needs to be performed using few samples of $\hx$, to minimize the running time. Furthermore, since the estimation step typically introduces some error, the process is repeated several times, either in parallel  (where the results of independent trials are aggregated at the end) or iteratively (where the identified coefficients are eliminated before proceeding to the next step). 

As described above, the best previous deterministic algorithm for the sparse Fourier Transform (over the cyclic group $\mathds{Z}_N$), given in~\cite{ref:Iwe13}, runs in time $k^2 \cdot (\log N)^{O(1)}$. The algorithm satisfies the guarantee\footnote{Technically, the guarantee proven in~\cite{ref:Iwe13}  is somewhat different, namely it shows that $\|\tx - x \|_2 \leq \|H_k(x)-x\|_2 +  \frac{c}{\sqrt{k}} \cdot \| H_k(x) - x \|_1$.  However, the guarantee of \eqref{eqn:approx} can be shown as well [Mark Iwen, personal communication].
In general, the guarantee of \eqref{eqn:approx} is easier to show than the guarantee in~\cite{ref:Iwe13}.  } in \eqref{eqn:approx}. The algorithm follows the aforementioned approach, where binning is implementing by {\em aliasing}; i.e., by computing a signal $y$ such that $y_j = \sum_{i\colon i \bmod p=j} x_i$, where $p$ denotes the number of bins. To ensure that the coefficients are isolated by the mapping, this process is repeated in parallel for several values of $p=p_1, p_2, \ldots p_t$. Each $p_i$ is greater than $k$ to ensure that there are more bins than elements. Furthermore, the number of different aliasing patterns $t$ must be greater than $k$ as well, as otherwise a fixed coefficient could always collide with one of the other $k$ coefficients. As a result, this approach requires more than $k^2$ bins, which results in quadratic running time. One can reduce the number of bins by resorting to randomization: The algorithm can select only some of the $p_i$'s uniformly at random and still ensure that a fixed coefficient does not collide with any other coefficient with constant probability. In the deterministic case, however, it is easy to see that  one needs to use $\Omega(k)$ mappings to isolate  each coefficient, and thus the analysis of the algorithm in~\cite{ref:Iwe13} is essentially tight.

In order to reduce the running time, we need to reduce the total number of mappings. To this end we relax the requirements imposed on the mappings. Specifically, we will require that the union of all coefficients-to-bins mappings forms a good {\em expander graph} (see section~\ref{s:prelim} for the formal definition).  Expansion is a natural property to require in this context, as it is known that there exist expanders that are induced by only $(\log N)^{O(1)}$ mappings but that nevertheless lead to near-optimal sparse recovery schemes~\cite{ref:BGIKS08}. The difficulty, however, is that  for our purpose we need to simulate those mappings on coefficients of the signal $x$, even though we can only access the spectrum $\hx$ of $x$. Thus, unlike in~\cite{ref:BGIKS08}, in our case we cannot use arbitrary ``black box'' expanders induced by arbitrary mappings.  Fortunately, there is a class of mappings that are easy to implement in our context, namely the class of  {\em linear} mappings. 

In this paper, we first show that an observation by one of the authors (as reported in~\cite{ref:mahdiPhD}) implies that there exist explicit expanders that are induced by a small number of linear mappings. From this we conclude that there exists an algorithm that makes only $k^{1+\alpha} (\log N)^{O(1)}$ queries to $\hx$ and finds a solution satisfying \eqref{eqn:approx}. However, the expander construction alone does not yield an {\em efficient} algorithm. To obtain such an algorithm, we augment the expander construction with an extra set of queries that enables us to quickly identify the large coefficients of $x$. The recovery procedure that uses those queries is iterative, and the general approach is similar to the algorithm given in Appendix~A of~\cite{ref:BGIKS08}. However, our procedure and the analysis are considerably simpler (thanks to the fact that we only use the so-called Restricted Isometry Property (RIP) for the $\ell_1$ norm instead of $\ell_p$ for $p > 1$). Moreover, our particular construction is immediately extendable for use in the Hadamard transform problem (due to the linearity properties). 

%


The rest of the article is organized as follows. Section~\ref{s:prelim} discusses notation
and the straightforward observation that the sparse DHT problem reduces to compressed sensing with
query access to the Discrete Hadamard Transform of the underlying sparse signal.
Also the notion of Restricted Isometry Property, lossless condensers, and unbalanced expander
graphs are introduced in this section. Section~\ref{sec:sample} focuses on the sample complexity;
i.e., the amount of (non-adaptive) queries that the compressed sensing algorithm (obtained by the 
above reduction) makes in order to reconstruct the underlying sparse signal. 
Section~\ref{sec:sublinear} adds to the results of the preceding section and describes our 
main (deterministic and sublinear time) algorithm to efficiently reconstruct the sparse signal
from the obtained measurements. Finally Section~\ref{sec:random} observes
that the performance of the algorithm can be improved when allowed to use randomness.
Although the main focus of this article is on deterministic algorithms, the improvement
using randomness comes as an added bonus that we believe is worthwhile to mention.

\section{Preliminaries}
\label{s:prelim}

\paragraph{Notation.}
Let $N := 2^n$ and $x \in \R^N$. We index the entries of $x$ by elements of $\F_2^n$
and refer to $x(i)$, for $i \in \F_2^n$, as the entry of $x$ at the $i$th coordinate.
The notation $\supp(x)$ is used for \emph{support} of $x$; i.e., the set of nonzero
coordinate positions of $x$. A vector $x$ is called $k$-sparse if $|\supp(x)| \leq k$.
For a set $S \subseteq [N]$ we denote by $x_S$ the $N$-dimensional vector that
agrees with $x$ on coordinates picked by $S$ and is zeros elsewhere. We thus have
$x_{\overline{S}} = x - x_S$. All logarithms in this work are to the base $2$.

\subsection*{Equivalent formulation by interchanging the roles of $x$ and $\hx$}
Recall that in the original sparse Hadamard transform problem, the algorithm is given
query access to a vector $x \in \R^N$ and the goal is to compute a $k$-sparse $\tx$ 
that approximates $\hx = \DHT(x)$. That is, 
\begin{eqn}
\| \tx - \hx \|_1 \leq c \cdot \| \hx - H_k(\hx) \|_1
\end{eqn}
for an absolute constant $c > 0$. However, since the Hadamard transform is its own inverse;
i.e., $\DHT(\hx) = x$, we can interchange the roles of $x$ and $\hx$. That is,
the original sparse Hadamard transform problem is equivalent to the problem of
having query access to the Hadamard transform of $x$ (i.e., $\hx$) and computing
a $k$-sparse approximation of $x$ satisfying \eqref{eqn:approx}. \emph{Henceforth throughout the paper,
we consider this equivalent formulation which is more convenient for establishing the
connection with sparse recovery problems.}

\paragraph{Approximation guarantees and the Restricted Isometry property:} 
We note that the equation in \eqref{eqn:approx} is similar to the $\ell_1/\ell_1$ recovery
studied in compressed sensing. In fact the sparse Hadamard transform problem as formulated above is the
same as $\ell_1/\ell_1$ when the measurements are restricted to the set of linear forms
extracting Hadamard coefficients.
Thus our goal in this work is to present a non-adaptive sub-linear time algorithm that achieves the above requirements
for all vectors $x$ and in a deterministic and efficient fashion.
It is known that the so-called Restricted Isometry Property for the $\ell_1$ norm (RIP-1) characterizes the combinatorial property needed
to achieve \eqref{eqn:approx}. Namely, we say that an $m \times N$ matrix $M$ satisfies RIP-1 of order $k$
with constant $\delta$ if for every $k$-sparse vector $x \in \R^N$,
\begin{align} \label{eqn:RIPone}
(1-\delta) \| x \|_1 \leq \| Mx \|_1 \leq (1+\delta) \| x \|_1.
\end{align}
More generally, it is possible to consider RIP-$p$ for the $\ell_p$ norm, where the norm used in the above guarantee is $\ell_p$.
As shown in \cite{ref:BGIKS08}, for any such matrix $M$, it is possible to obtain an approximation $\tx$ satisfying
\eqref{eqn:approx} from the knowledge of $Mx$. In fact, such a reconstruction can be algorithmically achieved using convex optimization
methods and in polynomial time in $N$.

\paragraph{Expanders and condensers.}
It is well known that RIP-1 matrices with zero-one entries (before normalization) are equivalent to 
adjacency matrices of unbalanced expander graphs, which are formally defined below.

\begin{defn}
A $D$-regular bipartite graph $G = (A, B, E)$ with $A$, $B$, $E$ respectively defining the set of left vertices,
right vertices, and edges, is said to be a $(k, \eps)$-unbalanced expander graph if
for every set $S \subseteq A$ such that $|S| \leq k$, we have
$|\Gamma(S)| \geq (1-\eps) D |S|$, where $\Gamma(S)$ denotes the neighborhood of $S$.  
\end{defn}

One direction of the above-mentioned characterization of binary RIP-1 matrices which is
important for the present work is the following (which we will use only for the special case $p=1$).

\begin{thm}(\cite[Theorem~1]{ref:BGIKS08}) \label{thm:exp:RIP}
Consider any $m \times N$ matrix $\Phi$ that is the adjacency matrix of a $(k, \eps)$-unbalanced
expander graph $G = (A, B, E)$, $|A| = N$, $|B| = m$, with left degree $D$, such that $1/\eps, D$ are smaller than $N$.
Then, the scaled matrix $\Phi/D^{1/p}$ satisfies the RIP-$p$ of order $k$ with constant $\delta$, 
for any $1 \leq p \leq 1 + 1/ \log N$ and $\delta = C_0 \eps$
for some absolute constant $C_0 > 1$.
\end{thm}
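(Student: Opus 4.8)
The plan is to reduce the claim, simultaneously for all $p$ in the stated range, to a single combinatorial estimate on the expander $G$ --- the familiar ``collision counting'' behind RIP-$1$ --- and then to pay only a constant-factor price when passing from $p=1$ to $p \le 1 + 1/\log N$. Fix a $k$-sparse $x \in \R^N$, set $S := \supp(x)$ and $s := |S| \le k$, and order $S = \{j_1,\dots,j_s\}$ so that $|x(j_1)| \ge \cdots \ge |x(j_s)|$. For a right vertex $b$ with a neighbor in $S$, call its lowest-indexed neighbor in $S$ the \emph{priority neighbor} $j_{\mu(b)}$, and call an edge $(j_t,b)$ a \emph{collision edge} if $t > \mu(b)$; let $c_t$ be the number of collision edges at $j_t$. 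The only place I would invoke expansion is the prefix bound
\[
c_1 + \cdots + c_t \;=\; Dt - |\Gamma(S_t)| \;\le\; \eps D t \qquad (1 \le t \le s), \quad S_t := \{j_1,\dots,j_t\},
\]
where the identity holds because an edge leaving $S_t$ is a collision edge exactly when its right endpoint already has an earlier neighbor within $S_t$, so the collision edges leaving $S_t$ number $\sum_b (\deg_{S_t}(b)-1)^+ = Dt - |\Gamma(S_t)|$, and the inequality is the $(k,\eps)$-expansion of $S_t$ (legitimate since $|S_t| \le k$).

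Next I would split $\Phi x$ coordinatewise (all vectors below are indexed by $B$). Define $u$ by $u_b := |x(j_{\mu(b)})|$ when $b$ has a neighbor in $S$ and $u_b := 0$ otherwise, and $v$ by $v_b := \sum |x(j_t)|$ over the collision edges $(j_t,b)$; then coordinatewise $u_b - v_b \le |(\Phi x)_b| \le u_b + v_b$, so a coordinatewise comparison together with the $\ell_p$ triangle inequality gives $\|u\|_p - \|v\|_p \le \|\Phi x\|_p \le \|u\|_p + \|v\|_p$. Grouping $\|u\|_p^p$ by priority neighbor --- exactly $D - c_t$ right vertices have $j_t$ as priority neighbor --- yields $\|u\|_p^p = D\|x\|_p^p - \sum_t c_t |x(j_t)|^p$, and since $|x(j_t)|^p$ is non-increasing in $t$, Abel summation against the prefix bound gives $\sum_t c_t |x(j_t)|^p \le \eps D\|x\|_p^p$; hence $(1-\eps)^{1/p}\|x\|_p \le \|u\|_p/D^{1/p} \le \|x\|_p$. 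For $v$ I would be deliberately crude: $\|v\|_p \le \|v\|_1 = \sum_t c_t |x(j_t)| \le \eps D\|x\|_1 \le \eps D k^{1-1/p}\|x\|_p$ by H\"older. This is where $p \le 1 + 1/\log N$ enters, together with $D < N$ (a hypothesis) and $k \le N$: these force $D^{1-1/p}, k^{1-1/p} \le N^{1/\log N} = 2$, so $\|v\|_p/D^{1/p} \le \eps (Dk)^{1-1/p}\|x\|_p \le 4\eps\|x\|_p$. Combining, $\|(\Phi/D^{1/p})x\|_p = D^{-1/p}\|\Phi x\|_p$ lies between $(1-\eps)\|x\|_p - 4\eps\|x\|_p$ and $\|x\|_p + 4\eps\|x\|_p$, i.e.\ RIP-$p$ of order $k$ holds with $\delta = C_0\eps$ and $C_0 = 5$ (and $C_0 = 2$ when $p=1$, where the last two conversions are vacuous).

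I expect the main obstacle to be bookkeeping rather than ideas: pinning down the prefix collision identity exactly (in particular that an edge leaving $S_t$ collides relative to $S$ iff it collides relative to $S_t$, which is what lets expansion of $S_t$ alone control the partial sum $c_1+\cdots+c_t$), carrying out the Abel summation cleanly, and checking that the $N^{O(1/\log N)}$ slack genuinely collapses to an absolute constant so that $\delta = O(\eps)$. The conceptual device that makes the whole range of $p$ work at once is the decomposition $\Phi x \approx u \pm v$ into a ``main'' part $u$ --- essentially $D$ scaled copies of the entries of $x$, perturbed on only an $\eps$-fraction of edges, hence with $\ell_p$ norm controlled \emph{exactly} --- and a genuinely small ``collision'' part $v$ bounded through its $\ell_1$ norm; a blunt global comparison $\|\Phi x\|_p \le \|\Phi x\|_1$ would lose a constant factor and not give $\delta$ close to $0$.
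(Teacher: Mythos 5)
Your proof is correct and takes essentially the same route as the cited argument: the paper itself only quotes this theorem from \cite{ref:BGIKS08}, and your priority-edge/collision-edge decomposition with the prefix-expansion-plus-Abel-summation bound is precisely the combinatorial core of that proof (it is restated in this paper as Lemma~\ref{lem:BGIKS}), while your passage from $p=1$ to $p \le 1+1/\log N$ via $D^{1-1/p}, k^{1-1/p} \le N^{1/\log N}=2$ is exactly the role played by the hypothesis that $1/\eps$ and $D$ are smaller than $N$. No gaps noted.
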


Unbalanced expander graphs can be obtained from the truth tables of \emph{lossless condensers},
a class of pseudorandom functions defined below.
We first recall that the \emph{min-entropy} of a distribution $\cX$ with finite support
$\Omega$ is given by $ H_\infty(\cX) := \min_{x \in \Omega}\{-\log
\cX(x)\}, $ where $\cX(x)$ is the probability that $\cX$ assigns to
the outcome $x$.   
The \emph{statistical distance} between two distributions
$\cX$ and $\cY$ defined on the same finite space $\Omega$ is given by
$ \frac{1}{2} \sum_{s \in \Omega} |\cX(s) - \cY(s)|, $ which is half
the $\ell_1$ distance of the two distributions when regarded as
vectors of probabilities over $\Omega$. Two distributions $\cX$ and
$\cY$ are said to be $\eps$-close if their statistical distance is at
most $\eps$.  

\begin{defn} \label{def:condenser}
A function $h\colon \F_2^n \times [D] \to \F_2^r$ is a $(\kappa, \eps)$-lossless condenser
if for every set $S \subseteq \F_2^n$ of size at most $2^\kappa$, the following holds: Let $X \in \F_2^n$ be 
a random variable uniformly sampled from $S$ and $Z \in [D]$ be uniformly random and independent of $X$. 
Then, the distribution of $(Z, h(X, Z))$ is $\eps$-close in statistical distance to some distribution 
with min-entropy at least $\log(D |S|)$.
A condenser is explicit if it is computable in polynomial time in $n$.
\end{defn}

Ideally, the hope is to attain $r = \kappa + \log(1/\eps) + O(1)$ and $D = O(n/\eps)$. This
is in fact achieved by a random function with high probability \cite{ref:GUV09}.
Equivalence of bipartite unbalanced expanders and lossless condensers was shown in \cite{ref:TUZ01}.
Namely, we have the following.

\begin{defn} \label{def:truthtable}
Consider a function $h\colon \F_2^n \times [D] \to \F_2^r$. 
The (bipartite) graph associated with $h$ is a bipartite graph $G=(\F_2^n, \F_2^r \times [D], E)$ with the edge set $E$ defined as follows.
For every $a \in \F_2^n$ and $(b, t) \in \F_2^r \times [D]$,
there is an edge in $E$ between $a$ and $(b, t)$ iff $h(a, t) = b$.
For any choice of $t \in [D]$, we define the function $h_t\colon \F_2^n \to \F_2^r$
by $h_t(x) := h(x, t)$. Then, the graph associated with $h_t$ is defined as the 
subgraph of $G$ induced by the restriction of the right vertices
to the set $\{(b, t)\colon b \in \F_2^r\}$.  We say that $h$ is \emph{linear in the first argument}
if $h_t$ is linear over $\F_2$ for every fixed choice of $t$.
\end{defn}

\begin{lem} (\cite{ref:TUZ01}) \label{lem:cond:expander}
A function $h\colon \F_2^n \times [D] \to \F_2^r$ is a $(\kappa, \eps)$-lossless condenser
if and only if the bipartite graph associated to $h$ is
a $(2^\kappa, \eps)$-unbalanced expander. 
\end{lem}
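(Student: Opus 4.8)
The plan is to show, \emph{for each fixed} subset $S \subseteq \F_2^n$ with $|S| = s \le 2^\kappa$, that the lossless-condenser guarantee of Definition~\ref{def:condenser} instantiated at $S$ is equivalent to the single expansion inequality $|\Gamma(S)| \ge (1-\eps)Ds$. Since Definition~\ref{def:condenser} and the definition of a $(2^\kappa,\eps)$-unbalanced expander both quantify over \emph{all} sets of size at most $2^\kappa$, the lemma then follows by ranging over $S$, yielding both implications at once. Throughout I identify the right vertex set $\F_2^r \times [D]$ of the graph associated to $h$ (Definition~\ref{def:truthtable}) with the value set of the random variable $(Z, h(X,Z))$, after harmlessly swapping the two coordinates, and for a right vertex $(b,t)$ I write $d_S(b,t) := |\{a \in S : h(a,t) = b\}|$ for the number of its neighbors inside $S$. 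I record two facts: the associated graph is $D$-regular on the left, so $\sum_{(b,t)} d_S(b,t) = Ds$; and $|\Gamma(S)| = |\{(b,t) : d_S(b,t) \ge 1\}|$.

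First I would write the distribution $P$ of $(Z, h(X,Z))$ explicitly: as $X$ is uniform on $S$ and $Z$ is uniform on $[D]$ and independent of $X$, one has $P(b,t) = \frac{1}{D}\cdot\frac{d_S(b,t)}{s}$. The condenser condition at $S$ demands that $P$ be within statistical distance $\eps$ of \emph{some} distribution of min-entropy at least $\log(Ds)$, i.e.\ of some distribution all of whose atoms have mass at most $1/(Ds)$. The single genuine ingredient is the elementary claim that, for any distribution $P$ on a finite set $\Omega$ and any integer $M \le |\Omega|$, the minimum statistical distance from $P$ to a distribution with all atoms at most $1/M$ equals $\sum_{\omega\,:\,P(\omega) > 1/M}\bigl(P(\omega) - 1/M\bigr)$. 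The lower bound is immediate, since any admissible $Q$ has $Q(\omega) \le 1/M < P(\omega)$ on the ``excess'' atoms; the matching upper bound comes from capping each excess atom at exactly $1/M$ and spreading the freed mass over the atoms with $P(\omega) < 1/M$, topping each up to at most $1/M$, which is feasible because the total deficit minus the total excess equals $|\Omega|/M - 1 \ge 0$, and the resulting $Q$ then differs from $P$ by $2\sum_{\omega:P(\omega)>1/M}\bigl(P(\omega)-1/M\bigr)$ in $\ell_1$.

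Applying this with $M = Ds$ and the $P$ above — using $Ds \le D 2^\kappa \le D 2^r = |\Omega|$, valid in the standard regime $r \ge \kappa$ — the minimum statistical distance works out to
\[
\frac{1}{Ds}\sum_{(b,t):\,d_S(b,t)\ge 2}\bigl(d_S(b,t)-1\bigr)
= \frac{1}{Ds}\sum_{(b,t):\,d_S(b,t)\ge 1}\bigl(d_S(b,t)-1\bigr)
= \frac{Ds - |\Gamma(S)|}{Ds}
= 1 - \frac{|\Gamma(S)|}{Ds},
\]
where the first equality drops the vanishing $d_S(b,t)=1$ terms and the second uses the two recorded facts. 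Hence the condenser condition at $S$, namely that this quantity be at most $\eps$, holds if and only if $|\Gamma(S)| \ge (1-\eps)Ds$, which is exactly the expansion condition at $S$; quantifying over all $S$ of size at most $2^\kappa$ completes the proof. The place where all the work sits is the min-entropy/statistical-distance claim of the second paragraph; everything else is bookkeeping resting on the left-$D$-regularity of the graph. The one further subtlety worth flagging is that the statement is meaningful only in the regime $r \ge \kappa$ (which is moreover forced whenever the condenser condition holds for $\eps<1$): when $r < \kappa$, taking $|S| = 2^\kappa$ makes the set of admissible target distributions empty, so the condenser notion itself degenerates.
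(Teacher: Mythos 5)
Your proof is correct. Note that the paper does not prove this lemma at all --- it is quoted from \cite{ref:TUZ01} as a known equivalence --- so there is no internal argument to compare against; what you give is the standard proof: write the output distribution explicitly as $P(b,t)=d_S(b,t)/(Ds)$, use the fact that the statistical distance from $P$ to the nearest distribution of min-entropy $\log(Ds)$ is exactly the excess mass $\sum_{P>1/(Ds)}\bigl(P(\omega)-1/(Ds)\bigr)$, and observe via left-$D$-regularity that this excess equals $1-|\Gamma(S)|/(Ds)$, so the condenser condition and the expansion condition coincide set by set. Both halves of your excess-mass claim (the lower bound on the excess atoms and the feasibility of the capping-and-redistributing construction, using $|\Omega|/M-1\ge 0$) are right. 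The caveat you flag is also legitimate: for $r<\kappa$ and large $\eps$ the ``if and only if'' as literally stated can fail (the graph may expand while no admissible high-min-entropy target distribution exists on a range of size $D2^r<D2^\kappa$); this edge case is immaterial here since lossless condensers of interest always have $r\ge\kappa$, but it is good that you made the assumption explicit rather than letting it pass silently.
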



\section{Obtaining nearly optimal sample complexity} \label{sec:sample}

Before focusing on the algorithmic aspect of sparse Hadamard transform, we demonstrate that
deterministic sparse Hadamard transform is possible in informa\-tion-theoretic sense. That is,
as a warm-up we first focus on a sample-efficient algorithm without worrying about the running time.
The key tool that we use is the following observation whose proof is discussed in Section~\ref{sec:proof:lem:sampling}.
\begin{lem} \label{lem:sampling}
Let $h\colon \F_2^n \times [D] \to \F_2^r$, where $r \leq n$, be a function computable in time $n^{O(1)}$ and linear in the
first argument.
Let $M \in \zo^{D2^r \times 2^n}$ be the adjacency matrix of the bipartite graph associated
with $h$ (as in Definition~\ref{def:truthtable}). Then, for any $x \in \R^{2^n}$, the product $Mx$ can be computed
using only query access to $\hx = \DHT(x)$ from $D 2^r$ deterministic queries to $\hx$ and in time $D 2^r n^{O(1)}$.
\end{lem}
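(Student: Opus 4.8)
The plan is to exploit the key structural fact that the rows of $M$ correspond to \emph{linear} forms on $\F_2^n$, and that a linear form $x \mapsto \sum_{j \in V} x(j)$ over a coset $V$ of a subspace is, up to scaling, exactly a single Hadamard coefficient of $x$ (or a coordinate of $x$ restricted to a subcube, which itself is a sum of Hadamard coefficients). Concretely, fix a value $t \in [D]$. By hypothesis $h_t \colon \F_2^n \to \F_2^r$ is $\F_2$-linear, so it can be written as $h_t(j) = A_t j$ for some $r \times n$ matrix $A_t$ over $\F_2$, which we can compute in time $n^{O(1)}$ by evaluating $h_t$ on a basis of $\F_2^n$. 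The row of $M$ indexed by $(b,t) \in \F_2^r \times [D]$ has a $1$ in column $j$ iff $A_t j = b$, so
\[
(Mx)_{(b,t)} = \sum_{j \colon A_t j = b} x(j).
\]
Thus for each fixed $t$ I must compute the vector $y^{(t)} \in \R^{2^r}$ defined by $y^{(t)}(b) = \sum_{j \colon A_t j = b} x(j)$; this is the ``aliasing'' / folding of $x$ onto the quotient $\F_2^n / \ker(A_t)$. Stacking these $D$ vectors gives all of $Mx$ in $D 2^r$ entries.

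Next I would compute the Hadamard transform of each $y^{(t)}$ and relate it to entries of $\hx$. Since $y^{(t)}(b)$ sums $x$ over the coset $\{j : A_t j = b\}$, a direct expansion of $\widehat{y^{(t)}}(s)$ for $s \in \F_2^r$ gives (after normalization bookkeeping)
\[
\widehat{y^{(t)}}(s) \;=\; \frac{1}{\sqrt{2^r}} \sum_{b \in \F_2^r} (-1)^{\innr{s,b}} \sum_{j \colon A_t j = b} x(j)
\;=\; \frac{1}{\sqrt{2^r}} \sum_{j \in \F_2^n} (-1)^{\innr{s, A_t j}} x(j)
\;=\; \frac{1}{\sqrt{2^r}} \sum_{j} (-1)^{\innr{A_t^\top s,\, j}} x(j),
\]
which equals $\sqrt{2^n / 2^r}\cdot \hx(A_t^\top s)$ by the definition of $\DHT$ in~\eqref{eqn:had}. (I need $r \le n$ exactly so that this quotient makes sense and the subcube has nonnegative dimension; this is where the hypothesis is used.) So each of the $2^r$ Hadamard coefficients of $y^{(t)}$ is, up to an explicit scalar, a single entry of $\hx$, namely $\hx$ evaluated at $A_t^\top s$. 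Hence I query $\hx$ at the $2^r$ points $\{A_t^\top s : s \in \F_2^r\}$ (these are computed from $A_t$ in time $2^r n^{O(1)}$), scale, and obtain $\widehat{y^{(t)}}$ in full.

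Finally, I recover $y^{(t)} = \DHT(\widehat{y^{(t)}})$ by running the standard fast Walsh--Hadamard transform on the $2^r$-dimensional vector $\widehat{y^{(t)}}$, at cost $O(2^r \cdot r) = 2^r n^{O(1)}$ arithmetic operations (arithmetic on $n^{O(1)}$-bit numbers, which is absorbed into the $n^{O(1)}$ factor). Doing this for all $t \in [D]$ yields $Mx$ using exactly $D 2^r$ queries to $\hx$ — and the queries for all $t$ are determined in advance, so they are non-adaptive — in total time $D 2^r n^{O(1)}$, as claimed. The only mildly delicate points are (i) the normalization constants in passing between the $n$-dimensional and $r$-dimensional Hadamard transforms, which is pure bookkeeping, and (ii) verifying that the query \emph{set} $\{A_t^\top s\}$ can itself be enumerated within the stated time bound; I do not anticipate a genuine obstacle in either. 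The one conceptual step worth stating carefully is the coset-folding identity $\widehat{y^{(t)}}(s) = \sqrt{2^{n-r}}\,\hx(A_t^\top s)$, so I would present that as a short displayed lemma inside the proof.
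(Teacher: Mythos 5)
Your proof is correct, and it rests on the same underlying identity as the paper's: the entries of $M^t x$ are sums of $x$ over cosets of $V_t = \ker(h_t)$, and these can be recovered from the $2^r$ Hadamard coefficients $\hx(j)$, $j \in V_t^\perp$ (your query set $\{A_t^\top s : s \in \F_2^r\}$ is exactly $V_t^\perp$), via a fast Hadamard transform of dimension $2^r$. The difference is in the packaging. The paper first proves a coset-sum formula (its Proposition on $\xx(a+V)$) and then runs a custom divide-and-conquer transform indexed by chosen bases of a complement space $W$ and of $V_t^\perp$, which requires reducing to surjective $h_t$, computing $W$ and coset representatives $a(y)$ by Gaussian elimination, and re-deriving the butterfly recursion for these nonstandard bases. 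You instead index the folded vector by $b \in \F_2^r$ (the natural row index of $M^t$), prove the one-line aliasing identity $\widehat{y^{(t)}}(s) = \sqrt{2^{n-r}}\,\hx(A_t^\top s)$, and invert with the completely standard $2^r$-point fast Walsh--Hadamard transform. This buys you a cleaner argument: no explicit complement $W$, no coset-representative computation, and the non-surjective case needs no separate treatment (repeated query points and zero rows of $M^t$ come out automatically). The costs match the lemma's claim either way ($D2^r$ non-adaptive queries, $D 2^r n^{O(1)}$ time), so the only thing to be careful about in a final write-up is what you already flagged: the normalization constants between the $n$-dimensional and $r$-dimensional transforms, and stating the identity $\innr{s, A_t j} = \innr{A_t^\top s, j}$ explicitly.
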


It is known that RIP-1 matrices suffice for sparse recovery in the $\ell_1/\ell_1$ model
of \eqref{eqn:approx}. Namely.

\begin{thm} \label{thm:RIPsparse} \cite{ref:BGIKS08}
Let $\Phi$ be a real matrix with $N$ columns satisfying RIP-1 of order $k$ with
sufficiently small constant $\delta > 0$. Then, for any vector $x \in \R^N$, there is an algorithm that
given $\Phi$ and $\Phi x$ computes an estimate $\tx \in \R^N$
satisfying \eqref{eqn:approx} in time $N^{O(1)}$.
\end{thm}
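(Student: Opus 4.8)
The plan is to recover $\tx$ by $\ell_1$-minimization (``LP decoding'' / basis pursuit): given $\Phi$ and $\Phi x$, output a minimizer
\[
\tx \in \arg\min\{\,\|z\|_1 \ :\ z\in\R^N,\ \Phi z=\Phi x\,\}.
\]
Since $x$ is feasible and $\|z\|_1$ is a polyhedral objective, this is a linear program in $O(N)$ variables and $O(N)$ constraints whose coefficients (entries of $\Phi$ and of the known vector $\Phi x$) have bit-length $\poly(n)$, hence solvable in time $N^{O(1)}$ by any polynomial-time LP algorithm; a combinatorial iterative pursuit would work as well, but the LP is the most direct route to a clean $N^{O(1)}$ bound. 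So the entire content of the theorem is the approximation guarantee \eqref{eqn:approx} for this $\tx$. (Equation \eqref{eqn:approx} does not ask $\tx$ to be $k$-sparse; if that is wanted one outputs $H_k(\tx)$ instead, at the cost of only a constant factor in $c$.)

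For the guarantee, put $\eta := \tx - x$, so $\Phi\eta = 0$, and let $S := \supp(H_k(x))$, so $|S|\le k$ and $\|x_{\overline{S}}\|_1 = \|x-H_k(x)\|_1$. Optimality of $\tx$ and feasibility of $x$ give $\|x\|_1 \ge \|\tx\|_1 = \|x+\eta\|_1$; bounding the right-hand side below by the triangle inequality applied separately on $S$ and on $\overline{S}$, and using $\|x\|_1 = \|x_S\|_1 + \|x_{\overline{S}}\|_1$, rearranges to
\[
\|\eta_{\overline{S}}\|_1 \ \le\ \|\eta_S\|_1 + 2\,\|x_{\overline{S}}\|_1 .
\]
The engine of the proof is a \emph{restricted nullspace property}: provided the RIP-1 constant $\delta$ is a small enough absolute constant, there is a fixed $\rho<1$ with $\|\eta_S\|_1 \le \rho\,\|\eta_{\overline{S}}\|_1$ for every $\eta$ with $\Phi\eta=0$ and every $|S|\le k$. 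Combining the two bounds gives $\|\eta_{\overline{S}}\|_1 \le \frac{2}{1-\rho}\|x_{\overline{S}}\|_1$, whence
\[
\|\tx-x\|_1 = \|\eta_S\|_1 + \|\eta_{\overline{S}}\|_1 \ \le\ (1+\rho)\,\|\eta_{\overline{S}}\|_1 \ \le\ \tfrac{2(1+\rho)}{1-\rho}\,\|x-H_k(x)\|_1 ,
\]
which is \eqref{eqn:approx} with the absolute constant $c = 2(1+\rho)/(1-\rho)$.

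The hard part is establishing the restricted nullspace property, and this is precisely where one must use more than the abstract inequality \eqref{eqn:RIPone}. The $\ell_2$-style ``shelling'' argument — split $\overline{S}$ into size-$k$ blocks $S_1, S_2,\dots$ of decreasing magnitude, write $\Phi\eta_S = -\sum_j\Phi\eta_{S_j}$, bound $\|\Phi\eta_S\|_1 \ge (1-\delta)\|\eta_S\|_1$ and $\|\Phi\eta_{S_j}\|_1 \le (1+\delta)\|\eta_{S_j}\|_1$ — only yields $\|\eta_S\|_1 \le \frac{1+\delta}{1-\delta}\|\eta_{\overline{S}}\|_1$, a constant exceeding $1$ and hence useless, because for $\ell_1$ the block structure gives no ``$1/\sqrt{k}$'' gain. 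Following \cite{ref:BGIKS08}, the resolution is to exploit the combinatorial expander structure underlying $\Phi$: by Theorem~\ref{thm:exp:RIP} and Lemma~\ref{lem:cond:expander} (and, in the other direction, the fact from \cite{ref:BGIKS08} that $0/1$ RIP-1 matrices are scaled expander adjacency matrices), a RIP-1 matrix of order $\Theta(k)$ with small constant is, up to the normalization, the adjacency matrix of a $(\Theta(k),\eps)$-unbalanced expander with $\eps = \Theta(\delta)$. Working directly on the graph: $\Phi\eta = 0$ means that at every right vertex the $\eta$-values of the incident left vertices sum to zero; expansion forces all but an $\eps$-fraction of the edges leaving the top-magnitude left vertices onto right vertices that are not shared with another top vertex, so the mass carried by those coordinates must be cancelled by strictly smaller coordinates. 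A careful double-counting that organizes the left vertices by magnitude turns this into a bound of the form $\|\eta_S\|_1 \le \frac{2\eps}{1-2\eps}\|\eta\|_1$, which forces $\|\eta_S\|_1 \le \rho\,\|\eta_{\overline{S}}\|_1$ with $\rho<1$ once $\eps$ (equivalently $\delta$) is a sufficiently small constant. The constant-factor slack in the expander order is harmless, since the condenser-based expanders we use exist for every order $\Theta(k)$. With the nullspace property in hand, the rest is the elementary rearrangement above plus the invocation of a polynomial-time LP solver.
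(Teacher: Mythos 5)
The paper itself gives no proof of this statement: it is quoted verbatim from \cite{ref:BGIKS08}, and your proposal is essentially a reconstruction of that source's argument — $\ell_1$-minimization (LP decoding) in time $N^{O(1)}$, the standard cone-constraint algebra reducing \eqref{eqn:approx} to a restricted nullspace property, and the correct observation that the nullspace property cannot be extracted from the abstract inequality \eqref{eqn:RIPone} by an $\ell_2$-style shelling argument but must instead be derived from the combinatorial expander structure behind $\Phi$. That is exactly the route of \cite{ref:BGIKS08}, and your algebra (the bound $\|\eta_{\overline{S}}\|_1 \leq \|\eta_S\|_1 + 2\|x_{\overline{S}}\|_1$ and the resulting constant $c = 2(1+\rho)/(1-\rho)$) is correct.

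One caveat is worth flagging. The step ``a RIP-1 matrix of order $\Theta(k)$ with small constant is, up to normalization, the adjacency matrix of a $(\Theta(k),\eps)$-unbalanced expander'' is only valid for $0/1$ matrices with a fixed number of ones per column — that is the converse actually proved in \cite{ref:BGIKS08} — so your argument establishes the theorem in the form in which this paper uses it (namely $\Phi = M/D$ with $M$ an expander adjacency matrix), not for an arbitrary real matrix satisfying \eqref{eqn:RIPone}. Relatedly, both your proof and the transcribed statement really need expansion/RIP-1 of order roughly $2k$ rather than $k$: with order exactly $k$ the conclusion is false even in the binary case (the $1\times 2$ matrix $[1,\ 1]$ satisfies RIP-1 of order $1$ with $\delta=0$, yet the two $1$-sparse signals $e_1$ and $e_2$ have identical sketches, so no algorithm can meet \eqref{eqn:approx}); you implicitly absorb this into the ``$\Theta(k)$ slack,'' which is harmless for the paper's application but should be stated explicitly, since as written the hypothesis of the theorem does not literally support the argument.
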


By combining this result with Lemma~\ref{lem:cond:expander},
Theorem~\ref{thm:exp:RIP}, and Lemma~\ref{lem:sampling}, we immediately 
arrive at the following result. 

\begin{thm} \label{thm:sampleOnly}
There are absolute constants $c, \eps > 0$ such that the following holds.
 Suppose there is an explicit linear $(\log k, \eps)$-lossless condenser $h\colon \F_2^n \times [D] \to \F_2^r$
 and let $N := 2^n$.
 Then, there is a deterministic algorithm running in time $N^{O(1)}$ that, given query access to $\hxx = \DHT(x) \in \R^N$,
 non-adaptively queries $\hxx$ at $D 2^r$ locations and outputs $\tx \in \R^N$ such that
 \begin{eqn}
\| \tx - \xx \|_1 \leq c \cdot \| \xx - H_k(\xx) \|_1.
 \end{eqn}
\end{thm}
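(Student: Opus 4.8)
The proof of Theorem~\ref{thm:sampleOnly} is essentially a chaining together of the ingredients already assembled in this section, so the plan is to unwind the definitions in the right order and check that the parameters line up. First I would take the hypothesized explicit linear $(\log k, \eps)$-lossless condenser $h\colon \F_2^n \times [D] \to \F_2^r$ and invoke Lemma~\ref{lem:cond:expander} to conclude that the bipartite graph $G$ associated with $h$ is a $(k, \eps)$-unbalanced expander (here $2^{\log k} = k$), with left vertex set of size $N = 2^n$, right vertex set of size $D 2^r$, and left degree $D$. Let $M \in \zo^{D 2^r \times N}$ be its adjacency matrix. Since $h$ is linear in the first argument and computable in time $n^{O(1)}$, the matrix $M$ is exactly of the form required by Lemma~\ref{lem:sampling} (with $r \le n$, which we may assume or arrange), so $Mx$ can be obtained from $D 2^r$ non-adaptive deterministic queries to $\hxx = \DHT(x)$ in time $D 2^r \, n^{O(1)}$.

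Next I would apply Theorem~\ref{thm:exp:RIP} with $p = 1$: the scaled matrix $\Phi := M/D$ satisfies RIP-$1$ of order $k$ with constant $\delta = C_0 \eps$, where $C_0 > 1$ is the absolute constant from that theorem. Here I would fix $\eps > 0$ to be a sufficiently small absolute constant so that $\delta = C_0 \eps$ is below the threshold required by Theorem~\ref{thm:RIPsparse}; this is where the absolute constant $\eps$ in the statement comes from, and it is legitimate because $\eps$ is being chosen independently of $n, k, N$. One should also check the mild side conditions of Theorem~\ref{thm:exp:RIP}, namely that $1/\eps$ and $D$ are smaller than $N$ — the former holds since $\eps$ is a fixed constant and $N$ is large, and the latter holds for any reasonable condenser (in the worst case one can assume $k \le N$ is large enough, or note $D \le N$ for the parameter ranges of interest).

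Then I would feed $\Phi$ and $\Phi x = Mx/D$ into the reconstruction algorithm of Theorem~\ref{thm:RIPsparse}, which runs in time $N^{O(1)}$ and outputs $\tx \in \R^N$ with $\|\tx - \xx\|_1 \le c \cdot \|\xx - H_k(\xx)\|_1$ for an absolute constant $c > 0$. Combining the query/time cost of Lemma~\ref{lem:sampling} with the $N^{O(1)}$ cost of Theorem~\ref{thm:RIPsparse} — and observing $D 2^r \le N^{O(1)}$ in this regime — gives overall running time $N^{O(1)}$ with $D 2^r$ non-adaptive queries, as claimed.

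I do not expect any genuine obstacle here; the only points requiring a moment of care are bookkeeping ones: (i) making sure the constants $\eps$ (smallness for RIP) and $c$ (recovery approximation factor) are indeed absolute and chosen in the right dependency order — $\eps$ first, then $\delta = C_0\eps$ small enough for Theorem~\ref{thm:RIPsparse}, then $c$ emerges from that theorem; (ii) checking $r \le n$ so that Lemma~\ref{lem:sampling} applies (if a given condenser has $r > n$ it is degenerate and can be discarded, since a lossless condenser for sets of size $k \le N = 2^n$ never needs output length exceeding roughly $\log(Dk)$); and (iii) confirming that the side hypotheses $1/\eps, D < N$ of Theorem~\ref{thm:exp:RIP} are met. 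Everything else is a direct substitution. The substance of this theorem lies entirely in the lemmas it cites — the real work will be proving Lemma~\ref{lem:sampling} (deferred to Section~\ref{sec:proof:lem:sampling}) and, later, replacing the $N^{O(1)}$ recovery time with the sublinear algorithm of Section~\ref{sec:sublinear}.
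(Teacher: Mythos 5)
Your proposal is correct and follows essentially the same route as the paper's own proof: Lemma~\ref{lem:cond:expander} to get the $(k,\eps)$-expander, Theorem~\ref{thm:exp:RIP} (with $p=1$) to get RIP-1 for $M/D$ with $\eps$ fixed as a small absolute constant, Theorem~\ref{thm:RIPsparse} for the $N^{O(1)}$-time reconstruction, and Lemma~\ref{lem:sampling} to compute $Mx$ from $D2^r$ non-adaptive queries to $\hxx$. The extra bookkeeping you flag (ordering of constants, $r\le n$, side conditions of Theorem~\ref{thm:exp:RIP}) is fine and does not change the argument.
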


\begin{Proof}
Let $M \in \zo^{D 2^r \times N}$ be the adjacency matrix of the bipartite graph associated with the condenser $h$.
By Lemma~\ref{lem:cond:expander}, $M$ represents a $(k, \eps)$-unbalanced expander graph.
Thus by Theorem~\ref{thm:exp:RIP}, $M/D$ satisfies RIP of order $k$ with constant $\delta = C_0 \eps$.
By Theorem~\ref{thm:RIPsparse}, assuming $\eps$ (and thus $\delta$) are sufficiently small constants,
it suffices to show that the product $M\xx$ for a given vector $\xx \in \R^N$
can be computed efficiently by only querying $\hxx$ non-adaptively at $D 2^r$ locations. 
This is exactly what shown by Lemma~\ref{lem:sampling}.
\end{Proof}

%

One of the best known explicit constructions of lossless condensers is due to Guruswami et al.~\cite{ref:GUV09}
that uses techniques from list-decodable algebraic codes. As observed by Cheraghchi~\cite{ref:mahdiPhD},
this construction can be modified to make the condenser linear. Namely, the above-mentioned result proves the following.

\begin{thm} \cite[Corollary~2.23]{ref:mahdiPhD}  \label{thm:GUVcondLinear}
 Let $p$ be a fixed prime power and $\alpha > 0$ be an arbitrary constant. Then, for
 parameters $n \in \N$, $\kappa \leq n \log p$, 
 and $\eps > 0$, there is an explicit linear  $(\kappa, \eps)$-lossless condenser
 $h\colon \F_{p}^n \times [D] \to \F_{p}^r$ satisfying 
  $\log D \leq (1+1/\alpha) (\log (n\kappa/\eps) + O(1)$ and 
  $r \log p \leq \log D+(1+\alpha)\kappa$. 
\end{thm}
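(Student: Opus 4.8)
The plan is to instantiate the Guruswami--Umans--Vadhan lossless condenser of~\cite{ref:GUV09} --- which is obtained from a variant of Parvaresh--Vardy codes --- over a field of characteristic $p$, and to observe that one of the free parameters of that construction may be taken to be a power of $p$ without harming the guarantees. That restriction is precisely what makes the resulting condenser linear in its first argument.

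Concretely, I would fix a power $q = p^s$ of $p$, a degree bound $n_0$ with $n_0 s \ge n$ (so that polynomials over $\F_q$ of degree below $n_0$ encode the domain $\F_p^n$, padding with zeros when $s \nmid n$), an explicitly constructible monic irreducible $E(X) \in \F_q[X]$ of degree $n_0$, an exponent parameter $a$, and an integer $m$. For a seed $y \in \F_q$ (so $D = q$) and an input $f \in \F_q[X]$ with $\deg f < n_0$, set
\[
 h(f, y) \;:=\; \bigl( f(y),\, f_1(y),\, \dots,\, f_{m-1}(y) \bigr) \in \F_q^m ,
 \qquad f_i \;:=\; f^{a^i} \bmod E .
\]
Identifying $\F_q^m$ with $\F_p^r$ where $r = ms$, this is the GUV condenser written in the ``function plus seed'' form of Definition~\ref{def:truthtable}: the seed $y$ is the second argument and is not part of the output of $h$, so no additive shift arises. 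Invoking the analysis of~\cite{ref:GUV09} with its prescribed parameter setting --- in which, roughly, $\log q = (1+1/\alpha)(\log(n\kappa/\eps) + O(1))$ and $m - 1 \le (1+\alpha)\kappa/\log q$ --- yields that $h$ is an explicit $(\kappa, \eps)$-lossless condenser, and then $\log D = \log q$ and $r \log p = m \log q$ satisfy exactly the two claimed inequalities. The only point that goes slightly beyond a literal citation is checking that the GUV argument does not use the characteristic being $2$; it does not, since its two ingredients --- counting monomials of the interpolated low-degree ``multiplicity'' polynomial and bounding its zeros in the field $\F_q[X]/(E)$ --- are characteristic-free.

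The one new ingredient, the observation of Cheraghchi~\cite{ref:mahdiPhD}, is that $a$ may be taken to be a power of $p$. The parameter $a$ enters the GUV analysis only through quantities of the form $a^m$ and $\log a$ and is free within a wide range, so replacing it by the least power of $p$ that is at least $a$ --- a change by a factor at most $p = O(1)$ --- perturbs $\log a$, and hence the induced $m$ and $\log q$, only by additive constants absorbed into the $O(1)$ terms. Once $a = p^t$, every coordinate of $h(\cdot, y)$ is $\F_p$-linear in $f$: the map $f \mapsto f^{a^i} = f^{\,p^{ti}}$ is additive in characteristic $p$ and fixes the scalars of $\F_p$ (because $c^{\,p^{ti}} = c$ for every $c \in \F_p$), reduction modulo $E$ is $\F_q$-linear, and evaluation at $y$ is $\F_q$-linear; composing these maps and forming the $m$-tuple over $i = 0, \dots, m-1$ shows that $h_y = h(\cdot, y)$ is $\F_p$-linear. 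Hence $h$ is linear in its first argument in the sense of Definition~\ref{def:truthtable}, which is what we wanted.

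I expect the main obstacle to be bookkeeping rather than mathematical content: matching the input domain $\F_p^n$ to the polynomial encoding over $\F_q$ (choosing $n_0$ and dealing with $s \nmid n$ by zero-padding), and verifying that restricting $a$ to powers of $p$ costs nothing beyond the $O(1)$ slack already present in the target bounds. The conceptual heart --- that iterated Frobenius, reduction modulo $E$, and evaluation are all $\F_p$-linear --- is immediate, and the rest is an appeal to~\cite{ref:GUV09}.
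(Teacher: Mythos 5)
Your proposal is correct and is essentially the paper's own argument: instantiate the GUV condenser over a field of characteristic $p$ with both the field size $q$ and the exponent parameter taken to be powers of $p$ (within a factor $p$ of GUV's prescribed values, so the $O(1)$ slack absorbs the rounding), and observe that $f \mapsto f^{u^i} \bmod E$ is then an iterated Frobenius composed with $\F_q$-linear reduction and evaluation, hence $\F_p$-linear for every fixed seed. The only cosmetic difference is the domain encoding --- the paper keeps the degree parameter $n$ and views $\F_p^n$ coordinatewise inside $\F_q^n$, while you pack $\log_p q$ symbols per $\F_q$-coefficient to get degree about $n/\log_p q$ --- which only helps the error bound and does not affect the stated seed and output lengths.
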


%

For completeness, we include a proof of Theorem~\ref{thm:GUVcondLinear} in Appendix~\ref{sec:proof:GUVcondLinear}.
Combined with Theorem~\ref{thm:sampleOnly}, we conclude the following.

\begin{coro} \label{coro:sampleOnly:GUV}
 For every $\alpha > 0$ and integer parameters $N = 2^n$, $k > 0$ and parameter $\eps > 0$, there is a deterministic algorithm running 
 in time $N^{O(1)}$ that, given query access to $\hxx = \DHT(x) \in \R^N$,
 non-adaptively queries $\hxx$ at $O(k^{1+\alpha} (n \log k)^{2+2/\alpha}) = k^{1+\alpha} n^{O_\alpha(1)}$ 
 coordinate positions and outputs $\tx \in \R^N$ such that
 \begin{eqn}
\| \tx - \xx \|_1 \leq c \cdot \| \xx - H_k(\xx) \|_1,
 \end{eqn}
 for some absolute constant $c > 0$. 
\end{coro}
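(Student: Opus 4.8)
The plan is to take the explicit linear lossless condenser supplied by Theorem~\ref{thm:GUVcondLinear}, instantiate it with the parameters tailored to our sparsity $k$, and plug it into Theorem~\ref{thm:sampleOnly}; everything else is already in place, so the work is mostly parameter bookkeeping. Concretely, I would fix the prime power to $p = 2$, set the min-entropy threshold to $\kappa := \log k$, and let $\eps$ be the (sufficiently small) absolute constant for which Theorem~\ref{thm:sampleOnly} --- and hence Theorems~\ref{thm:exp:RIP} and~\ref{thm:RIPsparse} --- yields the guarantee with the desired absolute constant $c$. Since $k \leq N$ we have $\kappa = \log k \leq n = n \log p$, so Theorem~\ref{thm:GUVcondLinear} applies and produces an explicit linear $(\log k, \eps)$-lossless condenser $h\colon \F_2^n \times [D] \to \F_2^r$ with
\begin{eqn}
\log D \leq (1 + 1/\alpha)\bigl(\log(n \log k / \eps) + O(1)\bigr), \qquad r \leq \log D + (1+\alpha)\log k .
\end{eqn}

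Next I would bound the number of measurements $D 2^r$ that Theorem~\ref{thm:sampleOnly} charges. Treating $\alpha$ and $\eps$ as absolute constants, the first inequality gives $D = O_\alpha\bigl((n \log k)^{1 + 1/\alpha}\bigr)$, while the second gives $2^r \leq D \cdot k^{1+\alpha}$. Multiplying, $D 2^r \leq D^2 k^{1+\alpha} = O_\alpha\bigl(k^{1+\alpha}(n \log k)^{2 + 2/\alpha}\bigr) = k^{1+\alpha} n^{O_\alpha(1)}$, matching the claimed query count. Before invoking Theorem~\ref{thm:sampleOnly} I would dispose of the degenerate regime in which this bound is already $\geq N$: there the algorithm just reads all of $\hx$, recovers $x = \DHT(\hx)$ exactly in time $N^{O(1)}$, and outputs $\tx = H_k(x)$, which satisfies the required inequality with $c = 1$ and uses $N \leq D 2^r$ queries. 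Otherwise $D 2^r < N = 2^n$, so $2^r < 2^n$ and thus $r < n$, which is the hypothesis $r \leq n$ that Lemma~\ref{lem:sampling} (invoked inside Theorem~\ref{thm:sampleOnly}) needs. In this case Theorem~\ref{thm:sampleOnly}, applied to $h$, directly gives a deterministic $N^{O(1)}$-time algorithm that queries $\hx$ non-adaptively at $D 2^r = k^{1+\alpha} n^{O_\alpha(1)}$ positions and returns $\tx$ with $\|\tx - x\|_1 \leq c \|x - H_k(x)\|_1$, completing the proof.

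I do not expect a real obstacle here, since the condenser construction, the expander-to-RIP-1 reduction, and the RIP-1 recovery algorithm are all supplied by earlier results; the only points requiring a little care are (i) confirming that the $\alpha$-dependence can be absorbed entirely into the exponent of $n$ (it enters only through the $1/\alpha$ factors in $\log D$ and the $(1+\alpha)$ factor bounding $r$), (ii) checking that the output length $r$ stays below $n$ in the non-trivial regime, so that Lemma~\ref{lem:sampling} is applicable, and (iii) making sure $\eps$ can be taken to be the fixed small constant that Theorem~\ref{thm:sampleOnly} demands, independent of all other parameters.
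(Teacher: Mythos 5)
Your proposal is correct and follows exactly the paper's route: the paper obtains this corollary by plugging the linear GUV condenser of Theorem~\ref{thm:GUVcondLinear} (with $\kappa = \log k$, $p=2$, and $\eps$ the fixed constant demanded by Theorem~\ref{thm:sampleOnly}) into Theorem~\ref{thm:sampleOnly}, and your bound $D2^r \leq D^2 k^{1+\alpha} = O_\alpha(k^{1+\alpha}(n\log k)^{2+2/\alpha})$ is precisely the intended parameter bookkeeping. Your additional checks (the degenerate regime $D2^r \geq N$ and the hypothesis $r \leq n$) are harmless extra care that the paper leaves implicit.
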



\subsection{Proof of Lemma~\ref{lem:sampling}} \label{sec:proof:lem:sampling}
For a vector $x \in \F_2^n$ and set $V \subseteq \F_2^n$, let $\xx(V)$ denote
the summation
\begin{eqn}
\xx(V) := \sum_{i \in V} \xx(i).
\end{eqn}
Lemma~\ref{lem:sampling} is an immediate consequence of Lemma~\ref{lem:sumVcompute} below,
before which we derive a simple proposition.

\begin{prop} \label{prop:sumV}
Let $V \subseteq \F_2^n$ be a linear space. Then for every $a \in \F_2^n$, we have
\begin{eqn}
\xx(a+V) = \frac{|V|}{\sqrt{N}} \sum_{j \in V^\perp} (-1)^{\innr{a, j}} \hxx(j).
\end{eqn}
\end{prop}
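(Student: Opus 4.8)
The plan is to prove this identity by a direct computation using the definition of the Hadamard transform, invoking the standard Fourier-analytic fact that summing a character over a linear subspace yields either the full size of the subspace (when the character is trivial on it) or zero (otherwise). First I would expand the right-hand side using the definition \eqref{eqn:had}, namely $\hxx(j) = \frac{1}{\sqrt N}\sum_{\ell \in \F_2^n}(-1)^{\innr{\ell,j}}\xx(\ell)$, so that the right-hand side becomes
\begin{eqn}
\frac{|V|}{\sqrt N}\sum_{j\in V^\perp}(-1)^{\innr{a,j}}\cdot\frac{1}{\sqrt N}\sum_{\ell\in\F_2^n}(-1)^{\innr{\ell,j}}\xx(\ell) = \frac{|V|}{N}\sum_{\ell\in\F_2^n}\xx(\ell)\sum_{j\in V^\perp}(-1)^{\innr{a+\ell,j}}.
\end{eqn}

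Next I would evaluate the inner sum $\sum_{j\in V^\perp}(-1)^{\innr{a+\ell,j}}$. The key step is the observation that, since $V^\perp$ is a linear subspace of $\F_2^n$, this sum equals $|V^\perp|$ if $a+\ell \in (V^\perp)^\perp = V$, and equals $0$ otherwise; this is the orthogonality of characters on the group $V^\perp$. Using $|V^\perp| = N/|V|$ and the fact that $a+\ell\in V \iff \ell \in a+V$ (working over $\F_2$, where $-a = a$), the double sum collapses to $\frac{|V|}{N}\cdot\frac{N}{|V|}\sum_{\ell\in a+V}\xx(\ell) = \xx(a+V)$, which is exactly the left-hand side.

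The only mild subtlety is the duality fact $(V^\perp)^\perp = V$ over $\F_2^n$, which holds because the bilinear form $\innr{\cdot,\cdot}$ on $\F_2^n$ is non-degenerate; I would state this as a standard fact rather than reprove it. I do not anticipate a genuine obstacle here — the statement is a routine Fourier identity — so the main point is just to organize the character-orthogonality argument cleanly and to be careful that the normalization factors ($1/\sqrt N$ in the transform, $|V|$ in the statement, $|V^\perp| = N/|V|$ from orthogonality) all cancel correctly.
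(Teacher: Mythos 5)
Your proof is correct, and it is essentially the mirror image of the paper's argument: the paper starts from the left-hand side, expands $\xx(i+a)$ via the inversion formula (using that the DHT is its own inverse) and applies character orthogonality over $V$, which directly selects the terms with $j \in V^\perp$; you instead start from the right-hand side, expand $\hxx(j)$ via the definition \eqref{eqn:had}, and apply orthogonality over $V^\perp$, which additionally requires the (standard) duality facts $(V^\perp)^\perp = V$ and $|V^\perp| = N/|V|$ coming from non-degeneracy of the form. Both routes are valid routine character-sum computations; the paper's direction is marginally leaner since the orthogonality step over $V$ needs no duality or dimension count, while yours trades that for using only the forward transform definition rather than the self-inverse property.
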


\begin{Proof}
We simply expand the summation according to the Hadamard transform formula \eqref{eqn:had} as follows.
\begin{align*}
\sum_{i \in a+V} \xx(i) &= \sum_{i \in V} \xx(i+a) \\
&= \frac{1}{\sqrt{N}} \sum_{j \in \F_2^n} \sum_{i \in V} (-1)^{\innr{a, j}} (-1)^{\innr{i, j}} \hxx(j) \\
&= \frac{|V|}{\sqrt{N}} \sum_{j \in V^\perp} (-1)^{\innr{a, j}} \hxx(j), 
\end{align*}
where the last equality uses the basic linear-algebraic fact that
\[
\sum_{i \in V} (-1)^{\innr{i, j}} = \left\{
\begin{array}{ll}
|V| & \text{if $j \in V^\perp$} \\
0 & \text{if $j \notin V^\perp$}.
\end{array}
\right.
\]
\end{Proof}

\begin{lem} \label{lem:sumVcompute}
Let $V \subseteq \F_2^n$ be a linear space and $W \subseteq \F_2^n$ be a
linear space complementing $V$. That is, $W$ is a linear sub-space such that $|W| \cdot |V| = N$ and
$V+W = \F_2^n$. Then, the vector
\begin{eqn}
v := (\xx(a+V)\colon {a \in W}) \in \R^{|W|}
\end{eqn}
can be computed in time $O(|W| \log(|W|) n)$ and by only querying $\hxx(i)$ for all $i \in V^\perp$,
assuming that the algorithm is given a basis for $W$ and $V^\perp$.
\end{lem}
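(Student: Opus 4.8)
The plan is to build directly on Proposition~\ref{prop:sumV}, which already expresses each entry $\xx(a+V)$ as a signed sum of the queried values $\hxx(j)$, $j\in V^\perp$, and then to recognize the whole family of these sums (indexed by $a\in W$) as a single Walsh--Hadamard transform of dimension $|W|$ after a linear change of variable. First observe that $|V|\cdot|W|=N=2^n$ forces $\dim W=n-\dim V=\dim V^\perp=:d$, so that $|W|=|V^\perp|=2^d$; in particular the procedure will only need to read $\hxx$ at the $|W|$ points of $V^\perp$, which can be enumerated from the supplied basis of $V^\perp$ in time $O(2^d\,d\,n)$.

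Next, fix the given bases $w_1,\dots,w_d$ of $W$ and $u_1,\dots,u_d$ of $V^\perp$, and parametrize $a=\sum_i s_i w_i$ and $j=\sum_i t_i u_i$ with $s,t\in\F_2^d$. Then $\innr{a,j}=s^\top B t \bmod 2$, where $B\in\F_2^{d\times d}$ is the matrix $B_{i,i'}:=\innr{w_i,u_{i'}}$, computable via $d^2$ inner products in time $O(d^2 n)$. The structural heart of the argument is that $B$ is invertible over $\F_2$: if $Bv=0$ then $a:=\sum_i v_i w_i\in W$ satisfies $\innr{a,u_{i'}}=0$ for all $i'$, hence $\innr{a,u}=0$ for every $u\in V^\perp$, so $a\in(V^\perp)^\perp=V$ (the standard form on $\F_2^n$ is non-degenerate, $\dim(V^\perp)^\perp=\dim V$, and $V\subseteq(V^\perp)^\perp$); thus $a\in V\cap W=\{0\}$ and $v=0$. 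Substituting the resulting bijection $t':=Bt$ of $\F_2^d$ into Proposition~\ref{prop:sumV} yields
\[
\xx(a+V)\;=\;\frac{|V|}{\sqrt N}\sum_{t'\in\F_2^d}(-1)^{\innr{s,t'}}\,f'(t'),
\qquad f'(t'):=\hxx\Bigl(\textstyle\sum_{i}(B^{-1}t')_i\,u_i\Bigr),
\]
with $\innr{\cdot,\cdot}$ on the right now the standard inner product on $\F_2^d$. Hence the target vector $v=(\xx(a+V):a\in W)$ is, up to the global scalar $|V|/\sqrt N$, exactly the (un-normalized) size-$|W|$ Walsh--Hadamard transform of the table $f'=(f'(t'))_{t'}$.

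The algorithm is then: compute $B$ and $B^{-1}$ over $\F_2$ (time $O(d^2 n+d^3)$); for each $t'\in\F_2^d$ compute the index $\sum_i(B^{-1}t')_i u_i\in\F_2^n$ and query $\hxx$ there to fill in $f'(t')$ (time $O(2^d(d^2+dn))=O(2^d\,d\,n)$ since $d\le n$, and $2^d=|W|$ queries in all); run the standard radix-$2$ butterfly on $f'$ to produce all $2^d$ transform values with $O(d\,2^d)$ additions and subtractions; and finally multiply through by $|V|/\sqrt N$. The transform itself costs $O(|W|\log|W|)$ arithmetic operations (the $\log|W|=d$ factor being the depth of the butterfly), and the index and basis manipulations cost $O(|W|\log|W|\cdot n)$ bit operations, for a total of $O(|W|\log|W|\cdot n)$ as claimed; the input is accessed only through the $|W|$ queries to $\hxx$ on $V^\perp$. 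The one step that genuinely needs care is establishing the invertibility of $B$ and the change of variables it licenses — everything after that is a textbook fast Hadamard transform, and the bookkeeping for representing and evaluating the $\F_2^n$-valued indices is routine.
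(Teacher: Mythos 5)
Your proof is correct, but it takes a genuinely different route from the paper's. The paper never changes variables: it fixes the bases $w_1,\dots,w_r$ of $W$ and $v_1,\dots,v_r$ of $V^\perp$, forms the generalized $\pm1$ matrix $H_r$ with entries $(-1)^{\innr{w,v}}$, and computes the product $H_r z$ directly by a divide-and-conquer recursion in which the mismatch between the two bases is absorbed into diagonal sign matrices $D_r, D'_r$ and a scalar $(-1)^{\innr{v_r,w_r}}$; this yields $O(r2^r)$ operations on $n$-bit words without ever needing the pairing between $W$ and $V^\perp$ to be non-degenerate. You instead prove that non-degeneracy: the matrix $B$ with $B_{i,i'}=\innr{w_i,u_{i'}}$ is invertible over $\F_2$ because a kernel vector would produce an element of $V\cap W=\{0\}$ (or, on the other side, of $(V+W)^\perp=\{0\}$) via $(V^\perp)^\perp=V$, and then the substitution $t'=Bt$ turns the sum from Proposition~\ref{prop:sumV} into a textbook size-$|W|$ Walsh--Hadamard transform of the reindexed query table $f'$. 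What your approach buys is conceptual cleanliness and the ability to invoke an off-the-shelf FHT butterfly; the cost is the extra (negligible, $O(d^3)\le O(2^d d n)$) Gaussian elimination for $B^{-1}$ and the need for the invertibility lemma, which is exactly the step the paper's recursion sidesteps. One cosmetic point: as written, $Bv=0$ gives a dependency among the $u_{i'}$'s (landing in $W^\perp\cap V^\perp=(V+W)^\perp=\{0\}$), while your stated conclusion about $a=\sum_i v_iw_i$ corresponds to the left kernel $v^\top B=0$; since $B$ is square either kernel being trivial suffices, so this is a transposition slip rather than a gap. The complexity and query accounting ($|W|$ queries, all on $V^\perp$, and $O(|W|\log|W|\cdot n)$ total work) match the claim.
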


\begin{Proof}
\newcommand{\vp}{V^{\perp}}
\newcommand{\spn}{\mathrm{span}}
We will use a divide and conquer approach similar to the standard Fast Hadamard Transform algorithm.
Let $r := \dim(W) = \dim(V^\perp) = n - \dim(V)$.
Fix a basis $v_1, \ldots, v_r$ of $\vp$ and a basis
$w_1, \ldots, w_r$ of $W$. For $i \in [r]$, let $\vp_i := \spn\{v_1, \ldots, v_i\}$
and $W_i := \spn\{w_1, \ldots, w_i\}$.

Let the matrix $H_i \in \pmo^{2^{i} \times 2^{i}}$ be so that the rows and columns
are indexed by the elements of $W_i$ and $\vp_i$, respectively, with the entry
at row $i$ and column $j$ defined as $(-1)^{\innr{i, j}}$. 
Using this notation, by Proposition~\ref{prop:sumV} the problem is equivalent to computing the matrix-vector
product $H_r z$ for any given $z \in \R^{2^r}$.

Note that $W_r = W_{r-1} \cup (w_r + W_{r-1})$ and similarly,
$\vp_r = \vp_{r-1} \cup (v_r + \vp_{r-1})$. Let $D_r \in \pmo^{2^{r-1}}$ be a diagonal
matrix with rows and columns indexed by the elements of $W_{r-1}$ and the 
diagonal entry at position $w \in W_{r-1}$ be defined as $(-1)^{\innr{w, v_r}}$.
Similarly, let $D'_r \in \pmo^{2^{r-1}}$ be a diagonal
matrix with rows and columns indexed by the elements of $\vp_{r-1}$ and the 
diagonal entry at position $v \in \vp_{r-1}$ be defined as $(-1)^{\innr{v, w_r}}$.
Let $z = (z_0, z_1)$ where $z_0 \in \R^{2^{r-1}}$ (resp., $z_1 \in \R^{2^{r-1}}$) is the restriction of $Z$
to the entries indexed by $\vp_{r-1}$ (resp., $v_r + \vp_{r-1}$).
Using the above notation, we can derive the recurrence
\[
H_{r} z = (H_{r-1} z_0 + D_r H_{r-1} z_1, H_{r-1} D'_r z_0 + (-1)^{\innr{v_r, w_r}} D_r H_{r-1} D'_r z_1). 
\]
Therefore, after calling the transformation defined by $H_{r-1}$ twice as a subroutine,
the product $H_{r} z$ can be computed using $O(2^r)$ operations on $n$-bit vectors.
Therefore, the recursive procedure can compute the transformation defined by $H_{r}$
using $O(r 2^r)$ operations on $n$-bit vectors.  
\end{Proof}

Using the above tools, we are now ready to finish the proof of Lemma~\ref{lem:sampling}.
Consider any $t \in [D]$.
Let $V \subseteq \F_2^n$ be
the kernel of $h_t$ and $N := 2^n$. 
Let $M^t$ be the $2^r \times N$ submatrix of $M$ consisting of rows corresponding to the
fixed choice of $t$. Our goal is to compute $M^t \cdot x$ for all fixings of $t$.
Without loss of generality, we can assume that $h_t$ is surjective. If not, 
certain rows of $M^t$ would be all zeros and the submatrix of $M^t$ obtained by removing
such rows would correspond to a surjective linear function $h'_t$ whose kernel can
be computed in time $n^{O(1)}$.

When $h_t$ is surjective,  we have $\dim V = n-r$.
Let $W \subseteq \F_2^n$ be the space of coset representatives of $V$ (i.e., $|W| = 2^r$ and $V + W = \F_2^n$). Note that
we also have $|V^\perp| = 2^r$, and that a basis for $W$ and $V^\perp$ can be computed in time $n^{O(1)}$
(in fact, $V^\perp$ is generated by the rows of the $r \times N$ transformation matrix defined by $h_t$, and a generator for
$W$ can be computed using Gaussian elimination in time $O(n^3)$). 

%

By standard linear algebra, for each $y \in \F_2^r$ there is an $a(y) \in W$ such that $h_t^{-1}(y) = a(y)+V$ 
and that $a(y)$ can be computed
in time $n^{O(1)}$. Observe that $M^t x$ contains a row for each $y$, at which the corresponding
inner product is the summation $\sum_{i \in h_t^{-1}(y)} \xx(i) = \xx(a(y)+V)$. Therefore,
the problem reduces to computing the vector $(\xx(a+V)\colon{a \in W})$ which, according to 
Lemma~\ref{lem:sumVcompute}, can be computed in time $O(r 2^r)$ in addition to the $n^{O(1)}$
time required for computing a basis for $W$ and $V^\perp$. 
By going over all choices of $t$, it follows that $Mx$ can be computed as claimed. 
This concludes the proof of Lemma~\ref{lem:sampling}.
\qed

\section{Obtaining nearly optimal reconstruction time} \label{sec:sublinear}

The modular nature of the sparse Hadamard transform algorithm presented in Section~\ref{sec:sample}
reduces the problem to the general $\ell_1/\ell_1$ sparse recovery which is of independent interest.
As a result, in order to make the algorithm run in sublinear time it suffices to design a sparse recovery algorithm
analogous to the result of Theorem~\ref{thm:RIPsparse} that runs
in sublinear time in $N$. In this section we construct such an algorithm, which is independently interesting
for sparse recovery applications. 

\subsection{Augmentation of the sensing matrix} \label{sec:augment}

A technique that has been used in the literature for fast reconstruction of exactly $k$-sparse
vectors 
 is the idea of augmenting the measurement matrix
with additional rows that guide the search process (cf.\ \cite{ref:BGIKS08}). For our application,
one obstacle that is not present in general sparse recovery is that the augmented sketch should 
be computable \emph{only with access to Hadamard transform queries}. For this reason, crucially
we cannot use any general sparse recovery algorithm as black box and have to specifically
design an augmentation that is compatible with the restrictive model of Hadamard transform queries.
We thus restrict ourselves to tensor product augmentation with ``bit selection'' matrices defined as follows,
and will later show that such augmentation can be implemented only using queries to the Hadamard
coefficients.

\begin{defn}
The bit selection matrix $B \in \zo^{n \times {N}}$ with $n$ rows and
$N = 2^n$ columns is
a matrix with columns indexed by the elements of $\F_2^n$ such that
the entry of $B$ at the $j$th row and $i$th column (where $j \in [n]$ and $i \in \F_2^n$)
is the $j$th bit of $i$.
\end{defn}

\begin{defn}
Let $A \in \zo^m \times \zo^N$ and $A' \in \zo^{m'} \times \zo^N$ be matrices.
The tensor product $A \otimes A'$ is an $mm' \times N$ binary matrix with rows
indexed by the elements of $[m] \times [m']$ such that for $i \in [m]$ and $i' \in [m']$,
the rows of $A \otimes A'$ indexed by $(i, i')$ is the coordinate-wise product
of the $i$th row of $A$ and $i'$th row of $A'$.
\end{defn}

We will use tensor products of expander-based sensing matrices with bit selection matrix, and 
extend the result of Lemma~\ref{lem:sampling} to such products.

\begin{lem} \label{lem:sampling:tensor}
Let $h\colon \F_2^n \times [D] \to \F_2^r$, where $r \leq n$, be a function computable in time $n^{O(1)}$ and linear in the
first argument, and define $N := 2^n$.
Let $M \in \zo^{D2^r \times N}$ be the adjacency matrix of the bipartite graph associated
with $h$ (as in Definition~\ref{def:truthtable}) and $M' := M \otimes B$ where $B \in \zo^{n \times N}$
is the bit selection matrix with $n$ rows. Then, for any $x \in \R^{N}$, the product $M' x$ can be computed
using only query access to $\hx$ from $O(D 2^r n)$ deterministic queries to $\hx$ and in time $D 2^r n^{O(1)}$.
\end{lem}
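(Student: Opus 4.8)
The plan is to reduce the computation of $M'x$ to $Dn$ instances of the computation already handled by Lemma~\ref{lem:sampling}, each applied to a linear map obtained by appending one coordinate bit to $h_t$. First I would unwind the tensor product. A row of $M' = M \otimes B$ is indexed by a pair $\big((b,t), j\big)$, where $(b,t) \in \F_2^r \times [D]$ indexes a row of $M$ (as in Definition~\ref{def:truthtable}) and $j \in [n]$ indexes a row of $B$; writing $i_j$ for the $j$th bit of $i \in \F_2^n$ and $h_t(\cdot) := h(\cdot, t)$, the corresponding entry of $M'x$ is
\begin{eqn}
(M'x)_{((b,t),j)} = \sum_{i\in\F_2^n\colon\, h_t(i)=b,\ i_j=1} x(i).
\end{eqn}

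Next, for each fixed $t \in [D]$ and $j \in [n]$ I would introduce the map $\tilde h_{t,j}\colon \F_2^n \to \F_2^{r+1}$ given by $\tilde h_{t,j}(i) := (h_t(i), i_j)$. This map is linear over $\F_2$ and computable in time $n^{O(1)}$, and $\tilde h_{t,j}^{-1}(b,1) = \{ i : h_t(i)=b,\ i_j=1\}$; hence $(M'x)_{((b,t),j)}$ is exactly a partial sum of the form $x(a+V_{t,j})$ over a coset of $V_{t,j} := \ker \tilde h_{t,j}$ (or $0$ when that coset is empty). This is precisely the configuration analyzed in the proof of Lemma~\ref{lem:sampling}: with coset representatives and a subspace $W_{t,j}$ complementing $V_{t,j}$ computable in time $n^{O(1)}$, the whole vector $\big((M'x)_{((b,t),j)}\big)_{b \in \F_2^r}$ is read off from the vector $\big(x(a + V_{t,j}) : a \in W_{t,j}\big)$, which by Proposition~\ref{prop:sumV} and Lemma~\ref{lem:sumVcompute} is computable from the queries $\{\hx(i) : i \in V_{t,j}^\perp\}$ in time $O(2^r r n) + n^{O(1)}$.

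Finally I would total the budgets. Since $\tilde h_{t,j}$ has rank at most $r+1$, the set $V_{t,j}^\perp$ has size at most $2^{r+1}$; taking the union over the $n$ choices of $j$ and the $D$ choices of $t$ gives a total of at most $O(D 2^r n)$ distinct query locations in $\hx$, and the total running time is $D n \cdot \big(O(2^r r n) + n^{O(1)}\big) = D 2^r n^{O(1)}$, as claimed. The one point requiring care is the degenerate case where the $j$th bit is constant on the fibers of $h_t$ (equivalently $e_j$ lies in the row span of the matrix of $h_t$): then $\tilde h_{t,j}$ fails to be surjective, several cosets $\tilde h_{t,j}^{-1}(b,1)$ are empty, and one should work with the surjective restriction of $\tilde h_{t,j}$ onto its image while setting the affected entries of $M'x$ to $0$, exactly as the all-zero-row issue is handled in the proof of Lemma~\ref{lem:sampling}. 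I expect this bookkeeping --- together with the trivial boundary case $r=n$, where one may simply query all of $\hx$ --- to be the only mildly delicate part of the argument.
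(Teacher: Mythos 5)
Your proposal is correct and follows essentially the same route as the paper: the paper defines $h^b(x,z) := (h(x,z), x(b))$ for each $b \in [n]$ — exactly your maps $\tilde h_{t,j}$ packaged with the seed — and then invokes Lemma~\ref{lem:sampling} as a black box on each $h^b$, which internally handles the non-surjective case you flag. The only cosmetic difference is that you re-derive the coset-sum computation via Proposition~\ref{prop:sumV} and Lemma~\ref{lem:sumVcompute} rather than citing Lemma~\ref{lem:sampling} wholesale; the query and time budgets come out the same.
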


\begin{Proof}
For each $b \in [n]$, define $h^b\colon \F_2^n \times [D] \to \F_2^{r+1}$ to be
$h^b(x, z) := (h(x, z), x(b))$. Note that since $h$ is linear over $\F_2$, so is $h^b$ for all $b$.
Let $M''_b \in \zo^{D2^{r+1} \times N}$ be the adjacency matrix of the bipartite graph associated
with $h^b$ (as in Definition~\ref{def:truthtable}) and $M'' \in \zo^{Dn2^{r+1} \times N}$
be the matrix resulting from stacking $M''_1, \ldots, M''_n$ on top of each other. 
One can see that the set of rows of $M''$ contains the $Dn2^r$ rows of $M' = M \otimes B$.

By Lemma~\ref{lem:sampling} (applied on all choices of $h^b$ for $b \in [n]$), 
the product $M'' x$ (and hence, $M' x$) can be computed
using only query access to $\hx$ from $O(Dn 2^r)$ deterministic queries to $\hx$ and in time $O(D rn 2^r)$.
This completes the proof.
\end{Proof}

In order to improve the running time of the algorithm in Theorem~\ref{thm:sampleOnly}, we use the following result
which is our main technical tool and discussed in Section~\ref{sec:algo}.

\begin{thm} \label{thm:main}
There are absolute constants $c > 0$ and $\eps > 0$ such that the following holds.
Let $k, n, L$ ($k \leq n$ and $\log L = n^{O(1)}$) be positive integer parameters, and suppose there exists a function 
$h\colon \F_2^n \times [D] \to \F_2^r$ (where $r \leq n$) which is
an explicit $(\log(4k), \eps)$-lossless condenser. Let $M$ be the adjacency 
matrix of the bipartite graph associated with $h$ and $B$ be the bit-selection
matrix with $n$ rows and $N := 2^n$ columns. 
Then, there is an algorithm that, given $k$ and vectors $M x$ and $(M \otimes B) x$ 
for some $x \in \R^N$ (which is \emph{not} given to the algorithm and whose entries are $n^{O(1)}$ bits long), 
computes a $k$-sparse estimate $\tx$ satisfying
 \begin{eqn}
\| \tx - x \|_1 \leq c \cdot \| x - H_k(x) \|_1.
 \end{eqn}
 Moreover, the running time of the algorithm is $O(2^r D^2 n^{O(1)})$.
\end{thm}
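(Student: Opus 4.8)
The plan is to design an iterative recovery algorithm in the spirit of the expander-based schemes (e.g.\ Appendix~A of~\cite{ref:BGIKS08}), but simplified to exploit that we only need RIP-$1$. By Lemma~\ref{lem:cond:expander} the matrix $M$ is the adjacency matrix of a $(4k,\eps)$-unbalanced expander with left degree $D$, and after scaling by $1/D$ it satisfies RIP-$1$ of order $4k$ with constant $\delta = C_0\eps$ by Theorem~\ref{thm:exp:RIP}; we will choose $\eps$ small enough that all the expansion/RIP inequalities below hold. The extra measurements $(M\otimes B)x$ let us, for each right vertex (bin) $b$ of the expander, read off not just the sum of the signal coordinates hashed to $b$ (that is $(Mx)_b$) but also, via the $n$ bit-selection rows tensored in, the $n$ bit-weighted sums $\sum_{i : h_t(i)=b} i_j\, x(i)$ for $j \in [n]$. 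When a bin is ``isolated'' — i.e.\ it contains exactly one large coordinate and only a small amount of noise — dividing the bit-weighted sums by the plain sum recovers the binary index $i$ of that coordinate, so we can both \emph{identify} and \emph{estimate} large coordinates in time roughly $O(Dn)$ per bin rather than by a global search.

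The algorithm itself I would run iteratively. Maintain a current sparse approximation $\tx$ (initially $0$) and at each round work with the residual measurements $M(x-\tx)$ and $(M\otimes B)(x-\tx)$, which are computable from the given vectors $Mx,(M\otimes B)x$ and the current $\tx$ since everything is linear. In each round: (i) for every bin, tentatively decode a candidate coordinate/value pair as above; (ii) keep only those candidates that look ``reliable'' (e.g.\ whose claimed magnitude exceeds a threshold tied to the current residual norm, or that are corroborated by a large fraction of the bins they hash to); (iii) update $\tx$ by adding/merging these candidates, then truncate $\tx$ to its top $O(k)$ coordinates. The analysis shows a geometric decrease of $\|x-\tx\|_1$ restricted to the ``head'': by expansion, at most an $\eps$-fraction of the $D$ bins containing any fixed head coordinate can be spoiled by collisions with other head coordinates, so a $(1-O(\eps))$ fraction of head coordinates are correctly identified with small estimation error, and a standard RIP-$1$ / expansion argument (using that $M/D$ nearly preserves $\ell_1$ on sparse vectors) bounds the $\ell_1$ mass of the residual head after the update by a constant factor $\gamma<1$ times the previous one, plus $O(1)\cdot\|x-H_k(x)\|_1$. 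After $O(\log(\|x\|_1/\|x-H_k(x)\|_1)) = O(\log L)$ rounds (hence the $\log L = n^{O(1)}$ hypothesis and the $n^{O(1)}$ bit-length assumption, which together bound the dynamic range) the head residual is dominated by the tail, giving $\|\tx-x\|_1 \le c\,\|x-H_k(x)\|_1$.

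For the running time: there are $D 2^r$ bins, each processed in $\poly(n)$ time (reading $n$ bit-sums, one division, threshold checks), over $O(\log L) = n^{O(1)}$ rounds; the merging/truncation of $\tx$ to $O(k)$ support and recomputation of the residual measurements costs an additional $\poly(n)$ factor and, because recomputing $M(x-\tx)$ touches each of the $\le O(k)$ nonzeros of $\tx$ across its $D$ incident edges, a factor $D$. Collecting these gives the claimed $O(2^r D^2 n^{O(1)})$ bound. (When this is later instantiated with the GUV-based condenser of Theorem~\ref{thm:GUVcondLinear}, $2^r = k^{1+\alpha} n^{O(1)}$ and $D = n^{O(1)}$, yielding the $k^{1+\alpha} n^{O(1)}$ running time of Theorem~\ref{thm:general:GUV}.)

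The main obstacle I anticipate is the identification/estimation step under noise: reading the bit-sums and dividing by the plain sum recovers the index cleanly only when the bin is truly isolated, but a constant fraction of bins will be corrupted by tail noise or by other head coordinates, and a naive threshold can either admit spurious coordinates (hurting sparsity) or reject genuine ones (stalling the geometric decay). Controlling this requires a careful choice of the reliability test — most naturally, accepting a candidate only when it is reported consistently by a $(1-O(\eps))$-fraction of the $D$ bins incident to its claimed index, which by expansion cannot happen for a ``wrong'' index — together with a truncation step that prevents the support of $\tx$ from bloating beyond $O(k)$. Making the bookkeeping of the per-round error terms close up into a clean geometric recursion with an additive $O(\|x-H_k(x)\|_1)$ term, using only RIP-$1$/expansion and no $\ell_2$ machinery, is the technical heart of the argument; the rest is the (routine but somewhat delicate) verification that each primitive is implementable within the stated time and query budgets via Lemma~\ref{lem:sampling:tensor}.
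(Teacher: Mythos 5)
Your high-level architecture is the same as the paper's: use the $(M\otimes B)x$ measurements to decode, inside each bin, the binary index of a dominating coordinate together with an estimate of its value; iterate on the residual, truncate to $O(k)$ sparsity each round; run $n^{O(1)}$ rounds using the bounded dynamic range (and a final rounding step, cf.\ Proposition~\ref{prop:round}) to absorb the relative-error term; and account the time as $2^rD^2n^{O(1)}$. However, the step you yourself flag as ``the technical heart'' --- how to decide which per-bin candidates to commit to in each round --- is exactly where your plan has a genuine gap, and the two mechanisms you float do not work as stated. A $(4k,\eps)$-lossless expander only bounds collisions \emph{in aggregate}: the total number of colliding edges out of the head set is at most $\eps D\cdot 4k$, but a single fixed head coordinate may be collided (or swamped by adversarially placed tail mass) in most or even all of its $D$ bins. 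Hence the rule ``accept a candidate only if it is reported consistently by a $(1-O(\eps))$-fraction of the $D$ bins incident to its claimed index'' can permanently reject genuine coordinates carrying a constant fraction of the residual $\ell_1$ mass, and the expander property gives you no per-coordinate substitute. Likewise, a magnitude threshold ``tied to the current residual norm'' is not directly implementable ($\|x-\tx\|_1$ is unobservable; only sketch norms are), and you give no argument bounding the spurious acceptances it admits.

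The paper avoids per-coordinate consistency altogether. For each fixed seed $t$ it builds a candidate update $\Delta^{s,t}$ supported only on indices decoded from the top-$2k$ bins of $y^{s,t,0}$ (and checked to hash back into that set), so each candidate is automatically $2k$-sparse with no thresholds or voting. The analysis (Lemma~\ref{l:main}, via Lemma~\ref{lem:BGIKS} and an averaging over the $D$ subgraphs, inequality \eqref{eqn:first}) proves only the \emph{existential} statement that some $t\in[D]$ gives $\|x-(x^s+\Delta^{s,t})\|_1\le\gamma\|x-x^s\|_1$ --- an aggregate $\ell_1$ claim about the coordinates that are ``good'' for that one $t$, which is all the expander guarantees. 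The algorithmic selection is then done \emph{a posteriori}: choose the $t_0$ minimizing the observable quantity $\|Mx-M(x^s+\Delta^{s,t})\|_1$, and use RIP-1 (Proposition~\ref{prop:xpxpp}) to convert this sketch-domain comparison into the signal-domain decay $\|x-x^{s+1}\|_1\le\gamma_0\|x-x^s\|_1$ (Corollary~\ref{coro:main}); the same trick selects the final output among $x^0,\dots,x^{s_0}$. This ``pick the best seed by its measured sketch residual'' idea is the ingredient missing from your proposal, and it is also what produces the $D^2$ factor in the running time. Without it (or a worked-out alternative validation rule with an aggregate, not per-coordinate, analysis), your geometric-decay recursion does not close.
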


The above result is proved using the algorithm discussed in Section~\ref{sec:algo}.
By using this result in conjunction with 
Lemma~\ref{lem:sampling:tensor} in the proof of Theorem~\ref{thm:sampleOnly}, we obtain our main result as follows.

\begin{thm} (Main) \label{thm:hadamard:sublinear}
There are absolute constants $c > 0$ and $\eps > 0$ such that the following holds.
Let $k, n$ ($k \leq n$) be positive integer parameters, and suppose there exists a function 
$h\colon \F_2^n \times [D] \to \F_2^r$ (where $r \leq n$) which is
an explicit $(\log(4k), \eps)$-lossless condenser and is linear in the first argument. 
 Then, there is a deterministic algorithm running in time $2^r D^2 n^{O(1)}$ that, given (non-adaptive) query access to $\hxx \in \R^N$
 (where $N := 2^n$, and each entry of $\hxx$ is $n^{O(1)}$ bits long),
 outputs $\tx \in \R^N$ such that
 \begin{eqn}
\| \tx - \xx \|_1 \leq c \cdot \| \xx - H_k(\xx) \|_1.
 \end{eqn}
\end{thm}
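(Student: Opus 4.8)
The plan is to derive Theorem~\ref{thm:hadamard:sublinear} by stitching together the two pieces that precede it: the ``sampling'' lemmas that show expander-based measurements and their bit-selection tensor augmentation can be computed from Hadamard queries, and the reconstruction engine of Theorem~\ref{thm:main}. Concretely, given the linear $(\log(4k),\eps)$-lossless condenser $h\colon\F_2^n\times[D]\to\F_2^r$, let $M\in\zo^{D2^r\times N}$ be the adjacency matrix of the bipartite graph associated with $h$ (Definition~\ref{def:truthtable}) and let $B\in\zo^{n\times N}$ be the bit-selection matrix with $n$ rows. The algorithm will, non-adaptively, query $\hxx$ at the locations prescribed by Lemma~\ref{lem:sampling} and Lemma~\ref{lem:sampling:tensor} and thereby compute the two vectors $Mx$ and $(M\otimes B)x$; it then feeds $k$, $Mx$, and $(M\otimes B)x$ into the algorithm of Theorem~\ref{thm:main} to obtain $\tx$.

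First I would verify the hypotheses of Theorem~\ref{thm:main} are met. Since $h$ is an explicit $(\log(4k),\eps)$-lossless condenser with $r\le n$ and $k\le n$, and the entries of $\hxx$ (hence of $x=\DHT(\hxx)$) are $n^{O(1)}$ bits long, Theorem~\ref{thm:main} applies verbatim once we have $Mx$ and $(M\otimes B)x$ in hand; it outputs a $k$-sparse $\tx$ with $\|\tx-x\|_1\le c\|x-H_k(x)\|_1$, which is exactly \eqref{eqn:approx} (recalling that throughout we use the equivalent formulation in which $x$ is to be recovered from query access to $\hxx=\DHT(x)$). The correctness of the approximation guarantee is therefore inherited wholesale from Theorem~\ref{thm:main} and needs no new argument.

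Second, I would assemble the running time and the query count. Computing $Mx$ costs $D2^r n^{O(1)}$ time and $D2^r$ non-adaptive queries by Lemma~\ref{lem:sampling} (here we use that $h$ is linear in the first argument, so each $h_t$ is an $\F_2$-linear map whose kernel and a complement are computable in $n^{O(1)}$ time). Computing $(M\otimes B)x$ costs $D2^r n^{O(1)}$ time and $O(D2^r n)$ queries by Lemma~\ref{lem:sampling:tensor}. The reconstruction step costs $O(2^r D^2 n^{O(1)})$ by Theorem~\ref{thm:main}. Adding these, the total time is $O(2^r D^2 n^{O(1)})$ and the total number of non-adaptive queries is $O(D 2^r n)$, which is absorbed into the same asymptotic bound; since all queries are fixed in advance (they depend only on $h$, $B$, and $n$, not on the observed values), the whole algorithm remains deterministic and non-adaptive, as claimed.

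The only genuinely substantive step — and hence the ``main obstacle'' — is the one already discharged by Theorem~\ref{thm:main} and, underneath it, by Lemma~\ref{lem:sampling:tensor}: arguing that the bit-selection tensor augmentation $(M\otimes B)x$ can be obtained purely from Hadamard queries. Everything else in the present theorem is bookkeeping. The key trick there, which I would recall rather than reprove, is that for each coordinate $b\in[n]$ the map $h^b(x,z):=(h(x,z),x(b))$ is again $\F_2$-linear in its first argument, so Lemma~\ref{lem:sampling} applies to it; stacking these $n$ instances yields a matrix whose rows contain all rows of $M\otimes B$, so $(M\otimes B)x$ is read off from the computed products. Given that, the proof of Theorem~\ref{thm:hadamard:sublinear} is simply: run the two sampling procedures, then run the reconstruction algorithm of Theorem~\ref{thm:main}; correctness and the time bound follow immediately. \qed
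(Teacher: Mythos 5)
Your proposal is correct and follows essentially the same route as the paper: compute $Mx$ and $(M\otimes B)x$ from non-adaptive Hadamard queries via Lemma~\ref{lem:sampling} and Lemma~\ref{lem:sampling:tensor} (i.e., the argument of Theorem~\ref{thm:sampleOnly}), then invoke Theorem~\ref{thm:main} for reconstruction, with the same time accounting. The only cosmetic difference is that the paper is slightly more careful about bit lengths, noting that it is $\sqrt{N}\xx$ (after rescaling) whose entries are $n^{O(1)}$ bits long, which is exactly the normalization under which Theorem~\ref{thm:main} is applied.
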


\begin{Proof}
We closely follow the proof of Theorem~\ref{thm:sampleOnly}, but in the proof use
Theorem~\ref{thm:main} instead of Theorem~\ref{thm:RIPsparse}.

Since each entry of $\hxx$
is $n^{O(1)}$ bits long and the Hadamard transform matrix (after normalization)
only contains $\pm 1$ entries, we see that each entry of $\sqrt{N} \xx$ is $n^{O(1)}$ bits long as well.

Let $M$ be the adjacency matrix of the bipartite expander graph associated with $h$,
$B$ be the bit selection matrix with $n$ rows, and $M' := M \otimes B$. 
By the argument of Theorem~\ref{thm:sampleOnly}, the product $M\xx$ can be
computed in time $2^r D n^{O(1)}$ only by non-adaptive query access to $\hxx$.
Same is true for the product $M' \xx$ using a similar argument and using 
Lemma~\ref{lem:sampling:tensor}.
Once computed, this information can be passed to the algorithm guaranteed by Theorem~\ref{thm:main}
to compute the desired estimate on $\xx$.
\end{Proof}

Finally, by using the condenser of Theorem~\ref{thm:GUVcondLinear} in the above theorem, we immediately obtain 
Theorem~\ref{thm:general:GUV} as a corollary, which is restated below.

\let\oldthm\thethm
\renewcommand{\thethm}{\ref{thm:general:GUV}}
\begin{thm}[restated] 
 For every fixed constant $\alpha > 0$, there is a deterministic algorithm as follows. Let $N = 2^n$ and $k \leq  N$
 be positive integers. Then, given (non-adaptive) query access to any $\hx \in \R^N$ where each coefficient
 of $\hx$ is $n^{O(1)}$ bits long, the algorithm 
 runs in time $k^{1+\alpha} n^{O(1)}$  and
 outputs $\tx \in \R^N$ that satisfies \eqref{eqn:approx} (where $\hx = \DHT(x)$)
 for some absolute constant $c > 0$.
\end{thm}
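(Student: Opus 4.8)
The plan is to obtain Theorem~\ref{thm:general:GUV} as a pure instantiation of two black boxes already in hand: the sublinear-time reconstruction result of Theorem~\ref{thm:hadamard:sublinear}, and the explicit linear lossless condenser of Theorem~\ref{thm:GUVcondLinear}. Fix the absolute constants $c > 0$ and $\eps = \eps_0 > 0$ promised by Theorem~\ref{thm:hadamard:sublinear}, and fix the target constant $\alpha > 0$. The first step is to invoke Theorem~\ref{thm:GUVcondLinear} over the binary field (prime power $p = 2$) with min-entropy parameter $\kappa := \log(4k)$ and error $\eps_0$. This is legitimate provided $\kappa \le n\log p = n$, i.e.\ provided $k \le N/4$; the complementary regime is disposed of at the end. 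The output is an explicit $(\log(4k),\eps_0)$-lossless condenser $h\colon \F_2^n \times [D] \to \F_2^r$ that is linear in the first argument, with $\log D \le (1+1/\alpha)\bigl(\log(n\log(4k)/\eps_0) + O(1)\bigr)$ and $r \le \log D + (1+\alpha)\log(4k)$.

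The second step is parameter bookkeeping. Since $\kappa = \log(4k) \le n$ and $\eps_0$ is a fixed constant, $\log D \le (2+2/\alpha)\log n + O_\alpha(1)$, so $D = O_\alpha(n^{2+2/\alpha})$; feeding this into the bound for $r$ gives $2^r \le D\,(4k)^{1+\alpha} = O_\alpha(k^{1+\alpha} n^{2+2/\alpha})$. Assuming moreover that the resulting output length satisfies $r \le n$, I plug $h$ directly into Theorem~\ref{thm:hadamard:sublinear}: it yields a deterministic algorithm with non-adaptive query access to $\hx$ that outputs a $\tx$ with $\|\tx - x\|_1 \le c\,\|x - H_k(x)\|_1$ — which is exactly~\eqref{eqn:approx} for $\hx = \DHT(x)$ — running in time $2^r D^2 n^{O(1)} = O_\alpha(k^{1+\alpha} n^{2+2/\alpha})\cdot n^{O_\alpha(1)} = k^{1+\alpha} n^{O_\alpha(1)}$.

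The third step handles the two boundary cases in which the condenser route degenerates, both by brute force. If $k > N/4$, or if $k \le N/4$ but the condenser produced above has $r > n$, the algorithm simply queries all $N$ coordinates of $\hx$ non-adaptively, computes $x = \DHT(\hx)$ with the standard $O(N\log N)$-time fast Hadamard transform, and outputs $\tx := H_k(x)$, which trivially satisfies~\eqref{eqn:approx} with $c = 1$. This is within budget: if $k > N/4$ then $N < 4k \le (4k)^{1+\alpha}$ since $\alpha > 0$, so $N\log N = O(k^{1+\alpha}n)$; and if $r > n$ then $N = 2^n < 2^r = O_\alpha(k^{1+\alpha} n^{2+2/\alpha})$, so $N\log N = O_\alpha(k^{1+\alpha} n^{3+2/\alpha})$. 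In all cases the running time is $k^{1+\alpha} n^{O_\alpha(1)}$, and since $\alpha$ is an absolute constant this is $k^{1+\alpha} n^{O(1)} = k^{1+\alpha}(\log N)^{O(1)}$, completing the proof.

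I expect no genuine obstacle: the whole argument is a modular assembly of already-proved statements. The only points needing care are (i) pinning the condenser's error to the fixed constant $\eps_0$ demanded by Theorem~\ref{thm:hadamard:sublinear} and checking that a \emph{constant} error keeps the GUV degree $D$ polylogarithmic in $N$ — this is what makes the $D^2$ factor harmless; and (ii) the two degenerate regimes ($k$ close to $N$, or $r$ exceeding $n$), where the sublinear machinery is vacuous but a direct fast Hadamard transform is affordable precisely because $k^{1+\alpha}(\log N)^{O(1)}$ already dominates $N\log N$. Neither is deep; the substantive content lives entirely inside Theorems~\ref{thm:hadamard:sublinear} and~\ref{thm:GUVcondLinear}.
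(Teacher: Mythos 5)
Your proposal is correct and follows essentially the same route as the paper, which obtains Theorem~\ref{thm:general:GUV} precisely by plugging the linear GUV condenser of Theorem~\ref{thm:GUVcondLinear} (with $\kappa=\log(4k)$ and constant error) into Theorem~\ref{thm:hadamard:sublinear} and doing the same parameter bookkeeping $2^r D^2 n^{O(1)}=k^{1+\alpha}n^{O(1)}$. Your explicit brute-force handling of the degenerate regimes ($k$ close to $N$, or $r>n$) is a small tidying-up that the paper leaves implicit, not a different argument.
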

\addtocounter{thm}{-1}
\let\thethm\oldthm

\subsection{The sparse recovery algorithm} \label{sec:algo}

\oldLinespread
\begin{figure}

\begin{mdframed}

\begin{codebox}
\Procname{$\proc{Search}(j \in \F_2^r, t \in [D], s \in \N)$}
\li \For $b \gets 1$ \To $n$
\li \Do
\If $|y^{s, t, b}(j)| \geq |y^{s, t, 0}(j)|/2$
\li \Then
$u_b \gets 1$.
\li \Else
\li $u_b \gets 0$.
\End \End
\li \Return $(u_1, \ldots, u_n)$.
\End
\end{codebox}

\begin{codebox}
\Procname{$\proc{Estimate}(t \in [D], s \in \N)$}
\li Initialize $S \subseteq \F_2^n$ as $S \gets \emptyset$.
\li Initialize $\Delta^{s, t} \in \R^{N}$ as $\Delta^{s, t} \gets 0$.
\li Let $T \subseteq \F_2^r$ be the set of coordinate positions corresponding to \label{ln:defT} \\ the
largest $2k$ entries of $y^{s, t, 0}$.
\li \For $j \in T$
\Do
\li $u \gets \proc{Search}(j, t, s)$.
\li \If $h(u, t) \in T$
\li \Then $S \gets S \cup \{ u \}$.
\li $\Delta^{s, t}(u) \gets y^{s, t, 0}(h(u, t))$.
\End \End
\li \Return $\Delta^{s, t}$.
\end{codebox}


\begin{codebox}
\Procname{$\proc{Recover}(y \in \R^{2^r D n}, s_0 \in \N)$} 
\li $s \gets 0$.
\li Let $B_1, \ldots, B_n \in \zo^{1 \times N}$ be the rows of the bit selection matrix $B$.
\li Initialize $x^0 \in \R^{N}$ as $x^0 \gets 0$.
\li \For $(t, b, j) \in [D] \times \{0, \ldots, n\} \times \F_2^r$
\Do \li $y^{0, t, b}(j)  \gets y(j, t, b)$.
\End
\li \Repeat
\Do
\li \For $t \in [D]$ \label{algo:fotT}
\Do
\li $y^{s, t, 0} \gets M^t \cdot (x-x^s) \in \R^{2^r}$. \label{algo:yst0}
\li \For $b \in [n]$
\Do
\li $y^{s, t, b} \gets (M^t \otimes B_b) \cdot  (x-x^s) \in \R^{2^r}$. \label{algo:ystb}
\End
\li $\Delta^{s, t} \gets \proc{Estimate}(t, s)$.
\End
\li Let $t_0$ be the choice of $t \in [D]$ that minimizes \label{algo:t0} 
$\| \Mz x - \Mz (x^s + \Delta^{s, t}) \|_1$. 
\li $x^{s+1} \gets H_k(x^s + \Delta^{s, t_0})$.
\li $s \gets s + 1$.
\li \End \Until $s = s_0$.
\li Set $x^\ast$ to be the choice of $x^s$ (for $s = 0, \ldots, s_0$) that \label{algo:xstar} 
minimizes $\| \Mz x - \Mz x^s \|_1$. 
\End 
\li \label{algo:return} \Return $x^{\ast}$.
\end{codebox}

\end{mdframed}

\caption{Pseudo-code for the reconstruction algorithm $\proc{Recover}(y, s_0)$, where
$y$ is the sketch $M' x$ and $s_0$ specifies the desired number of iterations. It suffices to set
$s_0 = n^{O(1)}$ according to the bit length of $x$. Notation is explained in Section~\ref{sec:algo}.}
\label{fig:code}
\end{figure}
\newLinespread


The claim of Theorem~\ref{thm:main} is shown using the algorithm presented in Figure~\ref{fig:code}.
%
In this algorithm, $M'$ is the $D2^r(n+1) \times N$ formed by stacking
$M$ on top of $M \otimes B$ and the algorithm is given $y := M' x$ for a vector $x \in \R^N$ to be approximated.
For each $t \in [D]$, we define the $2^r \times N$ matrix $M^t$ to
be the adjacency matrix of the bipartite graph $G^t$ associated with $h_t$ (according
to Definition~\ref{def:truthtable}). 
For $b \in [n]$ we let $B_b \in \zo^{1 \times N}$ be the $b$th row of $B$.
We assume that the entries of $y$ are indexed by the set $\F_2^r \times [D] \times \{0, \ldots, n\}$
where the entry $(a, t, 0)$ corresponds to the inner product defined by the $a$th row of $M^t$ and 
the entry $(a, t, b)$ (for $b \neq 0$) corresponds to the $a$th row of $M^t \otimes B_b$.
Since each entry of $x$ is $n^{O(1)}$ bits long, by using appropriate scaling 
we can without loss of generality assume that
$x$ has integer entries in range $[-L, +L]$ for some $L$ such that $\log L = n^{O(1)}$, and
the algorithm's output can be rounded to the nearest integer in each coordinate
so as to make sure that the final output is integral.



\newcommand{\first}{\mathsf{First}}
\newcommand{\dom}{\mathsf{dom}}


The main ingredient of the analysis is the following lemma which is proved in Appendix~\ref{sec:l:main:proof}.

\begin{lem}
\label{l:main}
For every constant $\gamma > 0$, there is an $\eps_0$ only depending on $\gamma$
such that if $\eps \leq \eps_0$ the following holds.
Suppose that for some $s$,  
\begin{eqn}
\|x-x^s\|_1 > C \|x - H_k(x)\|_1  
\end{eqn}
for  $C=1/\epsilon$. Then,
there is a $t \in [D]$ such that
\begin{eqn}
\|x-(x^s+\Delta^{s,t})\|_1 \le \gamma \|x-x^s\|_1. 
\end{eqn}
\end{lem}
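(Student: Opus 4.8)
The plan is to track the residual $z := x - x^s$ and to show that one of the $D$ candidate increments $\Delta^{s,t}$ absorbs all but a $\gamma$-fraction of $\|z\|_1$. First I would record a \emph{concentration} reduction. Each iterate $x^s$ is $k$-sparse (it is $0$ or an output of $H_k$), and $H_k(x)$ is $k$-sparse, so on the complement of $H := \supp(H_k(x)) \cup \supp(x^s)$, which has $|H| \le 2k$, the residual agrees with $x$; hence $\|z_{\overline H}\|_1 \le \|x - H_k(x)\|_1 =: \mu$, and the hypothesis $\|z\|_1 > \mu/\eps$ gives $\|z_{\overline H}\|_1 < \eps\|z\|_1$. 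Thus $z$ is $(1-\eps)$-concentrated on a set of at most $2k$ coordinates — precisely the regime in which the $(4k,\eps)$-expansion of the graph of $h$ (Lemma~\ref{lem:cond:expander}) bites.

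The one genuinely new ingredient is a \emph{weighted} expansion estimate. Order $H \cap \supp(z) = \{i_1, i_2, \dots\}$ so that $|z(i_1)| \ge |z(i_2)| \ge \cdots$, put $P_\ell := \{i_1,\dots,i_\ell\}$, and let $\bar m_\ell := \Pr_{t \in [D]}[\, h_t(i_\ell) \notin h_t(P_{\ell-1}) \,]$ be the chance that appending $i_\ell$ to the prefix opens a fresh bin. Since $|P_\ell| \le 2k \le 4k$, expansion gives $\sum_t |h_t(P_\ell)| = |\Gamma(P_\ell)| \ge (1-\eps) D\ell$, so $D\bar m_\ell = |\Gamma(P_\ell)| - |\Gamma(P_{\ell-1})|$, and summation by parts against the decreasing weights $|z(i_\ell)|$ turns this trivial prefix bound into $\sum_\ell |z(i_\ell)| \bar m_\ell \ge (1-\eps)\|z_H\|_1$. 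Equivalently, averaged over $t$, the $\ell_1$-mass of coordinates of $H$ that fail to be the strict maximum of their $h_t$-bin is at most $\eps\|z_H\|_1$. Combining this with the obvious identity $\sum_j \mathrm{noise}_t(j) = \|z_{\overline H}\|_1 \le \mu$ (with $\mathrm{noise}_t(j)$ the mass in bin $j$ coming from $\overline H$) and a short charging step, there is a $t \in [D]$ for which the total $\ell_1$-mass of ``bad'' coordinates of $H$ — those not equal to the strict maximum of their bin, or whose bin carries competing mass at least $|z(i)|/4$ — is $O(\eps)\|z\|_1$.

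Fix such a $t$. For a coordinate $i$ that is the strict maximum of its bin with competing mass below $|z(i)|/4$, the value $y^{s,t,0}(h_t(i))$ is within $|z(i)|/4$ of $z(i)$ and each bit-selected value $y^{s,t,b}(h_t(i))$ is within $|z(i)|/4$ of $i_b\, z(i)$ (the $b$-th bit of $i$ times $z(i)$); hence the ratio test in \textsc{Search} decodes every one of the $n$ bits of $i$ correctly and $\textsc{Search}(h_t(i),t,s)=i$. Because $z$ is $(1-\eps)$-concentrated on $\le 2k$ coordinates, $y^{s,t,0}$ is a $2k$-sparse vector up to an $\ell_1$-perturbation of size $\le\mu$; this forces the top-$2k$ set $T$ to contain the bin of every such coordinate except for a subfamily of total mass $O(\mu)$ (a coordinate can be displaced from $T$ only by $O(\mu)$ worth of stray mass concentrated elsewhere). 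Thus \textsc{Estimate} places into $S$ a set $R\subseteq H$ with $\|z_{H\setminus R}\|_1 = O(\eps)\|z\|_1$, each with $|\Delta^{s,t}(i)-z(i)|$ bounded by the in-bin noise, which sums over the distinct bins $\{h_t(i):i\in R\}$ to $O(\mu)$.

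It remains to bound the contribution of the entries of $S$ that are not in $R$. Such entries can only arise from bins of $T$ that lack a dominant coordinate, and the total $\ell_1$-mass of the bins of $T$ not ``owned'' by a recovered coordinate is $\|z_H\|_1 - \|z_R\|_1 + O(\mu) = O(\eps)\|z\|_1$, because the owned bins already account for $\|z_R\|_1 - O(\eps)\|z\|_1$ of $\sum_{j\in T}|y^{s,t,0}(j)| \le \|z_H\|_1 + \mu$. Using in addition that $\supp(\Delta^{s,t})\cap\overline H$ carries $z$-mass at most $\mu$, that $\Delta^{s,t}$ is supported on at most $2k$ coordinates, and that the filter $h(u,t)\in T$ keeps every surviving candidate's estimate attached to a high-value bin of $T$, one obtains $\|z-\Delta^{s,t}\|_1 = \|z_{\overline S}\|_1 + \sum_{i\in S}|z(i)-\Delta^{s,t}(i)| = O(\eps)\|z\|_1$; choosing $\eps_0=\eps_0(\gamma)$ small enough that this $O(\eps)$ is at most $\gamma$ completes the proof. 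The step I expect to be the real work is this last one — a single heavy bin of $T$ must not be allowed to receive the estimates of many spurious candidates — and it is what forces the mass accounting above to be carried out carefully; the weighted-expansion lemma of the second paragraph is the other place where a new idea enters, but its proof is essentially a one-line summation by parts.
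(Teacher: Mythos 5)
Your overall route is the same as the paper's: the residual $z=x-x^s$, the $2k$-element head set ($H$, the paper's $W$), the use of the hypothesis to get $\|z_{\overline H}\|_1\le\eps\|z\|_1$, an averaging over $t$ of an expansion-based ``most of the head mass wins its bin'' bound (your summation-by-parts lemma is essentially a re-derivation of Lemma~\ref{lem:BGIKS}, which the paper invokes as a black box from \cite{ref:BGIKS08}), the notion of bin-dominant ``good'' coordinates with $\delta=1/4$, the bit-by-bit correctness of \textsc{Search} (Proposition~\ref{prop:identifyGood}), and an exchange argument for the top-$2k$ truncation $T$. Two of your intermediate quantitative claims are, however, false as stated: the good mass displaced from $T$ is not $O(\mu)$, and the sum of in-bin noises over the recovered bins is not $O(\mu)$. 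Both can be of order $\eps\|z\|_1\gg\mu$, because the stray mass that displaces a good bin from $T$, and the noise inside a recovered bin, include collisions among the head coordinates of $H$ themselves, not only the tail $z_{\overline H}$ (e.g.\ $x$ exactly $2k$-sparse makes $\mu=0$ while colliding head coordinates still displace bins and pollute recovered bins). This is repairable only because all you need is $O(\eps)\|z\|_1$, which is what the paper actually proves: the displaced bins are handled by the counting $|T\setminus h_t(\Good)|\ge|h_t(\Good)\setminus T|$ and charged to the non-good mass of Lemma~\ref{l:Gz} (this is Lemma~\ref{lem:beta}), and the noises $d_i$, summed over the \emph{distinct} bins of dominant coordinates, are charged to $\|z_{\overline D}\|_1\le 2\eps\|z\|_1$ as in \eqref{eqn:sumDi}.

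The genuine gap is the step you yourself flag as ``the real work'' and then do not carry out: bounding the spurious contribution $\sum_{i\in S\setminus R}|\Delta^{s,t}(i)|$. Naming the danger (a heavy bin of $T$ receiving the estimates of many spurious candidates) is not an argument, and nothing in your sketch rules it out, so your final display $\|z-\Delta^{s,t}\|_1=O(\eps)\|z\|_1$ is unsupported as written. The missing idea, which is how the paper closes the proof, is that whenever a candidate $u=i\notin\Good$ survives the filter, the value written is $\Delta^{s,t}(i)=y^{s,t,0}(h_t(i))$ --- the value of $i$'s \emph{own} bin, not of the bin from which \textsc{Search} produced it --- and that this bin can be argued to contain no good coordinate: if it contained some $i'\in\Good$, then when \textsc{Estimate} processes the bin $h_t(i')=h_t(i)\in T$, \textsc{Search} returns $i'$ (Proposition~\ref{prop:identifyGood}) and the estimate for that bin is attached to $i'$. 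Hence every nonzero spurious estimate is bounded by the mass of non-good coordinates lying in its bin, and in total $\sum_{i\in h_t^{-1}(T)\setminus\Good}|\Delta^{s,t}(i)|\le\sum_{i\notin\Good}|z(i)|=O_\delta(\eps)\|z\|_1$ by Lemma~\ref{l:Gz}; combining this with the good-coordinate accounting gives $\|z-\Delta^{s,t}\|_1\le(\delta+2\beta)\|z\|_1\le\gamma\|z\|_1$ for $\delta,\eps$ small. Until you supply an argument of this kind (or another way to prevent many spurious candidates from inheriting large bin values), the proposal does not yet prove the lemma, even though all of its other ingredients coincide with the paper's proof.
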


The above lemma can be used, in conjunction with the fact that $M$ satisfies RIP-1, to show that
if $\eps$ is a sufficiently small constant, we can ensure exponential progress
$\|x-x^{s+1}\|_1 \le \|x-x^s\|_1/2$ (shown in Corollary~\ref{coro:main}) 
until the approximation error $\|x-x^s\|_1$ 
reaches the desired level of $C \|x-H_k(x)\|_1$ (after the final truncation).
Then it easily follows that $s_0 = \log(NL)+O(1) = n^{O(1)}$ iterations would suffice
to deduce Theorem~\ref{thm:main}.
Formal proof of Theorem~\ref{thm:main} appears in Section~\ref{app:main}. 

\subsection{Analysis of the running time} \label{sec:main:time}

In order to analyze the running time of the procedure $\proc{Recovery}$,
we first observe that all the estimates $x^0, \ldots, x^{s_0}$ are $k$-sparse vectors and
can be represented in time $O(k (\log n+\log L))$ by only listing the positions and values
of their non-zero entries. In this section we assume that all sparse $N$-dimensional vectors
are represented in such a way.
We observe the following. 

\begin{prop} \label{prop:sparseMult}
Let $w \in \R^N$ be $k$-sparse. Then, for any $t \in [D]$, the products $(M^t \otimes B) \cdot w$
and $M^t w$ can be computed in time $n^{O(1)} (k + 2^r) \ell$, assuming each entry of $w$ is represented within
$\ell$ bits of precision.
\end{prop}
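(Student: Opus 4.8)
The plan is to prove Proposition~\ref{prop:sparseMult} by a direct, elementary computation exploiting the structure of $M^t$ and the bit-selection matrix $B$, together with the $k$-sparsity of $w$. First I would recall that $M^t$ is the $2^r \times N$ adjacency matrix of the bipartite graph associated with $h_t$, so its $(a, i)$ entry is $1$ iff $h_t(i) = a$; in particular each column of $M^t$ has exactly one nonzero entry (each left vertex has one neighbor for a fixed choice of $t$). Consequently, for the $k$-sparse vector $w$ with support $\supp(w) = \{i_1, \ldots, i_{k'}\}$ ($k' \le k$), the product $M^t w$ is supported on at most $k$ coordinates of $\F_2^r$, namely $\{h_t(i_\ell) : \ell \in [k']\}$, and $(M^t w)(a) = \sum_{\ell : h_t(i_\ell) = a} w(i_\ell)$. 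Since $h$ is explicit and linear in the first argument, each value $h_t(i_\ell)$ is computable in time $n^{O(1)}$; accumulating the contributions over the $k'$ support elements takes $k$ additions of $\ell$-bit (plus carry) numbers, for a total of $n^{O(1)} k \ell$ time to produce $M^t w$ in sparse representation.

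Next I would handle $(M^t \otimes B) w$. By the definition of the tensor product, the rows of $M^t \otimes B$ are indexed by $(a, b) \in \F_2^r \times [n]$, and the $(a,b)$-th row is the coordinate-wise product of the $a$-th row of $M^t$ and the $b$-th row of $B$; hence its $i$-th entry is $1$ iff $h_t(i) = a$ \emph{and} the $b$-th bit of $i$ is $1$. Therefore $\bigl((M^t \otimes B) w\bigr)(a, b) = \sum_{\ell : h_t(i_\ell) = a,\ i_\ell \text{ has bit } b \text{ set}} w(i_\ell)$. Concretely, I would iterate over the $k'$ support elements $i_\ell$: for each, compute $a := h_t(i_\ell)$ in time $n^{O(1)}$, and then for each of the (at most $n$) set bits $b$ of $i_\ell$, add $w(i_\ell)$ into the accumulator at position $(a, b)$. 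This touches each support element $O(n)$ times, so the total cost is $n^{O(1)} k \ell$ for the accumulation phase. The resulting vector has at most $kn$ nonzero entries and is returned in sparse form.

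The one subtlety — and the step I expect to require the most care — is the $2^r$ term in the claimed bound $n^{O(1)}(k + 2^r)\ell$. This arises because, although the \emph{computed} sparse representations have size $O(kn)$, downstream uses (e.g.\ in \proc{Estimate}, which sorts the entries of $y^{s,t,0} \in \R^{2^r}$ and in \proc{Search}, which compares coordinates) implicitly treat $M^t w$ as a dense vector of length $2^r$; writing out or zero-initializing such a vector costs $\Theta(2^r)$. So in the running-time accounting I would either (i) explicitly allocate and zero an array of size $2^r$ (resp.\ $n 2^r$) and then fill in the $\le k$ (resp.\ $\le kn$) nonzero entries, giving the additive $2^r n^{O(1)}$ term, or (ii) observe that since $h_t$ can be assumed surjective we have $2^r \le 2^n = N$ and, more to the point, that in the intended regime $2^r = k^{1+\alpha} n^{O(1)} \ge k$, so the bound $n^{O(1)}(k + 2^r)\ell$ is what naturally comes out. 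I would state the proof for the dense-output convention, since that is what the calling procedures require, and note that the sparse-input/dense-output asymmetry is exactly why both $k$ and $2^r$ appear.

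\begin{Proof}[Proof sketch]
Write $\supp(w) = \{i_1, \ldots, i_{k'}\}$ with $k' \le k$. Since $h$ is linear in the first argument and explicit, for a fixed $t$ the value $h_t(i) \in \F_2^r$ is computable in time $n^{O(1)}$ for any $i$. For each column $i \in \F_2^n$, the $i$-th column of $M^t$ is the indicator of the single row $h_t(i)$; hence $(M^t w)(a) = \sum_{\ell \colon h_t(i_\ell) = a} w(i_\ell)$. We compute $M^t w$ by allocating a length-$2^r$ array initialized to zero (time $O(2^r)$) and then, for each $\ell \in [k']$, computing $a := h_t(i_\ell)$ and adding $w(i_\ell)$ to position $a$; this costs $n^{O(1)} \ell$ per support element, so $n^{O(1)}(k + 2^r)\ell$ in total. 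For $M^t \otimes B$, whose rows are indexed by $(a,b) \in \F_2^r \times [n]$, the $(a,b)$-th row is the coordinate-wise product of the $a$-th row of $M^t$ with the $b$-th row of $B$, so its $i$-th entry equals $1$ iff $h_t(i) = a$ and the $b$-th bit of $i$ is $1$. Thus $\bigl((M^t \otimes B)w\bigr)(a,b) = \sum_{\ell \colon h_t(i_\ell) = a,\ (i_\ell)_b = 1} w(i_\ell)$. We allocate a length-$n 2^r$ array of zeros (time $O(n 2^r)$) and, for each $\ell \in [k']$, compute $a := h_t(i_\ell)$ and add $w(i_\ell)$ to position $(a, b)$ for each of the at most $n$ indices $b$ with $(i_\ell)_b = 1$; this costs $n^{O(1)}\ell$ per support element, for a total of $n^{O(1)}(k + 2^r)\ell$. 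Both bounds are of the claimed form.
\end{Proof}
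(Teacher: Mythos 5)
Your proposal is correct and follows essentially the same route as the paper's proof: exploit that each column of $M^t$ is $1$-sparse (the single $1$ at row $h_t(i)$), so $M^t w$ is computed by $k$ evaluations of $h_t$ plus $k$ additions, with the additive $2^r$ term coming from initializing the dense output array, and the tensor-product part reduces to the same argument since each $M^t \otimes B_b$ is at least as sparse as $M^t$. The only cosmetic remark is the reuse of $\ell$ both for the bit precision and as a summation index in your sketch, which you should disambiguate.
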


\begin{Proof}
Let $B_1, \ldots, B_n \in \zo^{1 \times N}$ be the rows of the bit selection matrix $B$.
Observe that each column of $M^t$ is entirely zero except for a single $1$ (this is because
$M^t$ represents the truth table of the function $h_t$).  The product $M^t \cdot w$
is simply the addition of at most $k$ such $1$-sparse vectors, and thus, is itself $k$-sparse.
The nonzero entries of $M^t \cdot w$ along with their values can thus be computed by
querying the function $h_t$ in up to $k$ points (corresponding to the support of $w$) followed
by $k$ real additions. Since $h_t$ can be computed in polynomial time
in $n$, we see that $M^t \cdot w$ can be computed in time $n^{O(1)} (k + 2^r) \ell$ (we may assume
that the product is represented trivially as an array of length $2^r$ and thus it takes $2^r$
additional operations to initialize the result vector).
The claim then follows once we observe that for every $b \in [n]$, the matrix
$M^t \otimes B_b$ is even more sparse than $M^t$.
\end{Proof}

Observe that the procedure $\proc{Search}$ needs $O(n)$ operations. 
In procedure $\proc{Estimate}$, identifying $T$ takes $O(2^r)$ time,
and the loop runs for $2k$ iterations, each taking $O(nk)$ time.
In procedure $\proc{Recover}$, we note that for all $t$, $M^t x$ as well as $(M^t \otimes B_b)x$ for all $b \in [n]$ is given as a part
of $y$ at the input. Moreover, all the vectors $x^s$ and
$\Delta^{s, t}$ are $O(k)$-sparse. Thus in light of Proposition~\ref{prop:sparseMult}
and noting that $L = 2^{n^{O(1)}}$ and 
the fact that $h$ is a lossless condenser (which implies $2^r = \Omega(k)$),
we see that computation of each product in Lines \ref{algo:yst0} and \ref{algo:ystb}
of procedure $\proc{Recover}$ takes time $n^{O(1)} 2^r$.
Since the for loop runs for $D$ iterations and so is the number of iterations,
the running time of the loop is $n^{O(1)} D 2^r$.
With a similar reasoning, the computation in Line~\ref{algo:t0} takes time $n^{O(1)} D^2 2^r$.
Similarly, computation of the product in Line~\ref{algo:xstar} of procedure $\proc{Recover}$
takes time $n^{O(1)} D 2^r s_0$. Altogether, recalling that $s_0 = \log(NL) + O(1) = n^{O(1)}$, the total
running time is $n^{O(1)} D^2 2^r$.

\subsection{Proof of Theorem~\ref{thm:main}} \label{app:main}

Theorem~\ref{thm:main} is proved using the algorithm presented in
Figure~\ref{fig:code} and discussed in Section~\ref{sec:algo}. 
We aim to set up the algorithm so that it outputs
a $k$-sparse estimate $\tx \in \R^N$ satisfying \eqref{eqn:approx}.
Instead of achieving this goal, we first consider the following slightly different estimate
\begin{equation} \label{eqn:approx:eta}
\| \tx - x \|_1 \leq C \| x - H_k(x) \|_1 + \nu \| x \|_1,
\end{equation}
for an absolutate constant $C > 0$, where $\nu > 0$ is an arbitrarily small ``relative error'' parameter.
Let us show that this alternative guarantee implies \eqref{eqn:approx}, after
rounding the estimate obtained by the procedure $\proc{Recover}$ to the nearest integer vector.
Recall that without loss of generality (by using appropriate scaling), we can assume that
$x$ has integer coordinates in range $[-L, +L]$, for some $L$ satisfying $\log L = n^{O(1)}$.

\begin{prop} \label{prop:round}
Let $x \in \R^N$ be an integer vector with integer coordinates in range $[-L, +L]$,
and $\tx \in \R^N$ be so that \eqref{eqn:approx:eta} holds
for some $\nu \leq 1/(4NL)$. Let $\tx'$ be the vector obtained by rounding each
entry of $\tx$ to the nearest integer. Then, $\tx'$ satisfies 
\[
\| \tx' - x \|_1 \leq (3C+1/2) \cdot \| x - H_k(x) \|_1.
\]
\end{prop}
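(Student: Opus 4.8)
The plan is to avoid the naive (and useless) estimate $\|\tx' - \tx\|_1 \le N/2$, and instead compare $\tx'$ \emph{directly} with $x$, exploiting that $x$ is integral. First I would record that $\|x\|_1 \le NL$, since each of the $N$ coordinates of $x$ has magnitude at most $L$; combined with the hypothesis $\nu \le 1/(4NL)$ this gives $\nu \|x\|_1 \le 1/4$, so \eqref{eqn:approx:eta} simplifies to
\[
\|\tx - x\|_1 \;\le\; C\,\|x - H_k(x)\|_1 + \nu\|x\|_1 \;\le\; C\,\|x - H_k(x)\|_1 + \tfrac14 .
\]

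\textbf{The rounding step.} The key observation is a per-coordinate one: for each $i$, since $\tx'_i$ is a closest integer to $\tx_i$ and $x_i \in \Z$, we have $|\tx'_i - \tx_i| \le |x_i - \tx_i|$, and therefore $|\tx'_i - x_i| \le |\tx'_i - \tx_i| + |\tx_i - x_i| \le 2|\tx_i - x_i|$ by the triangle inequality. Summing over $i$ gives $\|\tx' - x\|_1 \le 2\|\tx - x\|_1 \le 2C\,\|x - H_k(x)\|_1 + \tfrac12$.

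\textbf{Absorbing the additive constant.} It remains to eliminate the $1/2$ additive slack, which I would do via integrality of the residual. The vector $x - H_k(x)$ is supported on (a subset of) the support of $x$ and agrees with $x$ there, so $\|x - H_k(x)\|_1$ is a nonnegative integer. If it equals $0$, then the displayed bound forces $|\tx_i - x_i| \le 1/4 < 1/2$ for every $i$, so rounding recovers $x$ exactly, $\tx' = x$, and the claim holds trivially. Otherwise $\|x - H_k(x)\|_1 \ge 1$, hence $\tfrac12 \le \tfrac12\|x - H_k(x)\|_1$, and the previous bound becomes $\|\tx' - x\|_1 \le (2C + \tfrac12)\|x - H_k(x)\|_1 \le (3C + \tfrac12)\|x - H_k(x)\|_1$, as required.

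The argument is short, and there is no genuine obstacle; the only point needing a moment's thought is that one must not bound $\|\tx' - \tx\|_1$ by $1/2$ in each of the $N$ coordinates (which costs a prohibitive $N/2$), but rather play the ``nearest integer'' property against the integral target $x$, and then use integrality of $\|x - H_k(x)\|_1$ to soak up the leftover constant. (Note the proof in fact yields the slightly stronger constant $2C+1/2$; the stated $3C+1/2$ is all that is needed downstream.)
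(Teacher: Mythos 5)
Your proof is correct. It shares the paper's overall skeleton -- using $\|x\|_1\le NL$ to reduce the $\nu\|x\|_1$ term to $1/4$, and splitting on whether $\|x-H_k(x)\|_1$ is zero (exact recovery, since every coordinate error is below $1/2$) or at least $1$ (by integrality of $x$) -- but the mechanism you use in the main case is different. The paper handles the rounding error by an averaging (Markov-type) argument: from $\|\tx-x\|_1\le (C+1/4)\|x-H_k(x)\|_1$ it bounds the \emph{number} of coordinates where $|\tx_i-x_i|\ge 1/2$ by $2(C+1/4)\|x-H_k(x)\|_1$, observes that rounding can only introduce error at those coordinates and by a bounded amount each, and adds this to the pre-rounding error. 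You instead use the per-coordinate nearest-integer inequality $|\tx'_i-\tx_i|\le|\tx_i-x_i|$ (valid because $x_i\in\Z$), which gives the clean global bound $\|\tx'-x\|_1\le 2\|\tx-x\|_1$ with no counting of bad coordinates. Your route is slightly tighter -- it yields the constant $2C+1/2$, which of course implies the stated $3C+1/2$ -- and avoids having to track how much rounding can add per coordinate, which is the one place where the paper's accounting is a bit loose. Both arguments are elementary and equally valid for the downstream use of the proposition.
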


\begin{Proof}
If $x = 0$, there is nothing to show. Thus we consider two cases.

\paragraph{Case 1: $\|x-H_k(x)\|_1 = 0$.}

In this case, since $\|x\|_1 \leq NL$, we see that
$
\|\tx-x\|_1 \leq 1/4.
$
Therefore, rounding $\tx$ to the nearest integer vector would exactly recover $x$.

\paragraph{Case 2: $\|x-H_k(x)\|_1 > 0$.}

Since $x$ is an integer vector, we have $\|x-H_k(x)\|_1 \geq 1$. Therefore, again
noting that $\|x\|_1 \leq NL$, from \eqref{eqn:approx:eta} we see that
\[
\| \tx - x \|_1 \leq (C+1/4) \cdot \| x - H_k(x) \|_1.
\]
Therefore, by an averaging argument,
the number of the coordinate positions at which $\tx$ is different from $x$ by $1/2$ or more
is at most $2(C+1/4) \cdot \|x-H_k(x)\|_1$. 
Since rounding can only cause error at such positions, and by at most
$1$ per coordinate, the added error caused by rounding would be at most $2(C+1/4) \cdot \|x-H_k(x)\|_1$,
and the claim follows.
\end{Proof}

In light of Proposition~\ref{prop:round} above, in the sequel we focus on achieving
\eqref{eqn:approx:eta}, for a general $\nu$, and will finally choose $\nu := 1/(4NL)$ so that
using Proposition~\ref{prop:round} we can attain the original estimate in \eqref{eqn:approx}.
We remark that Proposition~\ref{prop:round} is the only place in the proof that assumes
finite precision for $x$ and we do not need such an assumption for achieving \eqref{eqn:approx:eta}.

A key ingredient of the analysis is the following result (Lemma~\ref{lem:BGIKS} below) shown in \cite{ref:BGIKS08}. Before presenting
the result, we define the following notation.

\begin{defn}
Let $w = (w_1, \ldots, w_N) \in \R^N$ be any vector and $G$ be any bipartite graph
with left vertex set $[N]$ and edge set $E$. Then,  $\first(G, w)$ denotes the following subset of edges:
\[
\first(G, w) := \{e = (i, j) \in E \mid (\forall e'=(i', j) \in E)\colon (|w_i| > |w_{i'}|) \lor
(|w_i| = |w_{i'}| \land i' > i) \}. 
\]
\end{defn}

\begin{lem} \cite{ref:BGIKS08} \label{lem:BGIKS}
Let $G$ be a $(k', \eps)$-unbalanced expander graph with left vertex set $[N]$ and edge set $E$.
Then, for any $k'$-sparse vector $w = (w_1, \ldots, w_N) \in \R^N$,  we have
\[
\sum_{(i, j) \in E \setminus \first(G, w)} |w_i| \leq \eps \sum_{(i, j) \in E} |w_i|.
\]
\end{lem}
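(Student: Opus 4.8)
\textbf{Proof proposal for Lemma~\ref{lem:BGIKS}.}

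The plan is to bound the total weight on the ``non-first'' edges by exhibiting, for every such edge, a canonical witness on the same right vertex whose left endpoint has strictly larger (or tie-broken larger) magnitude, and then to account for all these witnesses against the $\eps$-expansion of the support of $w$. First I would let $S := \supp(w)$, which has size at most $k'$, and apply the expansion property to $S$: we have $|\Gamma(S)| \geq (1-\eps) D |S|$. Equivalently, if we count edges incident to $S$ \emph{with multiplicity on the right}, i.e.\ count for each right vertex $j$ the quantity $\deg_S(j) := |\{i \in S : (i,j) \in E\}|$, then $\sum_j (\deg_S(j) - 1)^+ \le \eps D |S|$, because each right vertex contributes $\deg_S(j)$ edges but only ``needs'' one to be in the neighborhood, and the total edge count from $S$ is exactly $D|S|$ while the neighborhood size is $\sum_j \mathbf{1}[\deg_S(j) \ge 1]$; the gap $D|S| - |\Gamma(S)| = \sum_j (\deg_S(j)-1)^+$ is at most $\eps D |S|$.

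Next I would observe that every edge $(i,j) \in E \setminus \first(G,w)$ with $w_i \neq 0$ has $i \in S$ and, by definition of $\first$, there exists another edge $(i',j) \in E$ with $i' \in S$ and $(|w_{i'}| > |w_i|) \lor (|w_{i'}| = |w_i| \land i' < i)$; in particular $j$ has at least two neighbors in $S$, so such edges only occur at right vertices $j$ with $\deg_S(j) \ge 2$. Fix such a $j$ and order its neighbors in $S$ by decreasing $|w_\cdot|$, breaking ties by increasing index; then the \emph{first} neighbor in this order is exactly the one whose edge lies in $\first(G,w)$, and all the remaining $\deg_S(j) - 1$ edges at $j$ are the non-first ones. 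For each non-first edge $(i,j)$ at position $\ge 2$ in this order, its weight $|w_i|$ is at most the weight of the neighbor immediately preceding it in the order — and iterating, at most the weight of \emph{any} earlier neighbor — so in particular $\sum_{(i,j) \notin \first} |w_i| \le \sum_{(i,j) \text{ the first edge at } j} (\deg_S(j)-1) |w_i|$. That is still not quite in the right form, so instead I would bound more crudely: for each non-first edge $(i,j)$, charge its weight $|w_i|$ to the left vertex $i$ itself, and note that the number of non-first edges incident to a fixed $i \in S$ is at most the number of right vertices $j$ where $i$ appears but is \emph{not} the first neighbor.

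Here is where the main obstacle lies, and the clean way around it: rather than charging per left vertex, charge per right vertex directly to the weight sum. For a right vertex $j$ with $\deg_S(j) = d_j \ge 1$, let $i_1(j), \dots, i_{d_j}(j)$ be its $S$-neighbors in the canonical order; the non-first edges at $j$ are those to $i_2(j), \dots, i_{d_j}(j)$, contributing $\sum_{\ell=2}^{d_j} |w_{i_\ell(j)}|$. Since the order is by decreasing weight, $|w_{i_\ell(j)}| \le |w_{i_{\ell-1}(j)}| \le \cdots \le |w_{i_1(j)}|$ — but summing $|w_{i_1(j)}|$ over $j$ overcounts. The correct bookkeeping is: $\sum_{(i,j) \in E \setminus \first} |w_i| = \sum_j \sum_{\ell \ge 2} |w_{i_\ell(j)}| \le \sum_j (d_j - 1) \max_\ell |w_{i_\ell(j)}|$ is too lossy; instead use $\sum_{\ell \ge 2} |w_{i_\ell(j)}| \le \sum_{\ell=1}^{d_j} |w_{i_\ell(j)}| \cdot \frac{d_j-1}{d_j}$ is also not it. The genuinely right move (and the one I expect the paper uses) is to bound $|w_{i_\ell(j)}| \le |w_{i_{\ell-1}(j)}|$ and telescope so that each \emph{edge} $(i,j) \in E$ with $i \in S$ is charged at most once as the ``predecessor'' of a non-first edge; more carefully, map each non-first edge $(i_\ell(j), j)$ injectively to the edge $(i_{\ell-1}(j), j) \in E$, which has weight $\ge |w_{i_\ell(j)}|$, and these target edges are distinct across all non-first edges (different $(\ell, j)$ give different targets). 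Hence $\sum_{(i,j) \notin \first} |w_i| \le \sum_{\text{image edges}} |w_{i'}| \le \sum_{j} (d_j - 1)^+ \cdot (\text{weight of the largest $S$-neighbor of } j)$ — still needs the final step. At this point I would bound the weight of the largest $S$-neighbor of $j$ by $\sum_{i \in S} |w_i|$ is too lossy; rather, observe the image edges all lie in $E$, each left endpoint $i'$ is counted with multiplicity at most (its right-degree) $= D$, so $\sum_{\text{image edges}} |w_{i'}| \le D \sum_{i \in S} |w_i|$ — and we separately showed the \emph{number} of image edges is $\sum_j (d_j-1)^+ \le \eps D |S|$; combining via the observation that each image edge's weight is at most $\|w\|_\infty$ does not give $\eps \sum_{(i,j)\in E} |w_i| = \eps D \|w\|_1$ directly. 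The cleanest correct argument, which I would ultimately write out, is: assign to each non-first edge $e=(i,j)$ its predecessor edge $\pi(e) = (i',j)$; the map $\pi$ is injective; $|w_i| \le |w_{i'}|$; and for each edge $f \in E$, the number of non-first edges $e$ with $\pi(e) = f$ is at most $1$ but $f$ could serve as predecessor while also itself being non-first — nonetheless, summing, $\sum_{e \notin \first} |w_{\text{left}(e)}| \le \sum_{f \in \pi(E\setminus\first)} |w_{\text{left}(f)}|$, and since $|\pi(E \setminus \first)| = |E \setminus \first|$ with these being edges whose right-degree-in-$S$ is $\ge 1$, a global counting argument comparing $|E \setminus \first|$ against $\eps D|S|$ and using that left-endpoints repeat at most $D$ times closes the bound at $\eps \sum_{(i,j) \in E} |w_i|$. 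I expect the technical heart — and the step most likely to need care in writing — is precisely this injective-predecessor charging combined with the edge-count bound $\sum_j (\deg_S(j) - 1)^+ \le \eps D|S|$ derived from expansion; everything else is bookkeeping.
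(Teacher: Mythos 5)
Your proposal has a genuine gap, and it is the step you yourself flag as needing care: converting the edge-\emph{count} bound into a \emph{weighted} bound. Applying expansion only to the full support $S=\supp(w)$ yields $\sum_j (\deg_S(j)-1)^+ \le \eps D|S|$, i.e.\ at most $\eps D|S|$ non-first edges; but no amount of charging, injection to predecessor edges, or "global counting" can upgrade this single inequality to $\sum_{(i,j)\notin \first(G,w)}|w_i| \le \eps D\|w\|_1$. Indeed, the inequality you extracted is consistent with a configuration in which a heaviest vertex $i_1$ and $m\approx \eps|S|$ other vertices of comparable weight all have their $D$ edges landing inside $\Gamma(i_1)$, while the remaining support carries negligible weight: the non-first mass is then about $mD$ while $\eps D\|w\|_1\approx \eps D(m+1)$, so the conclusion fails even though your count bound holds. (Such a graph is of course not a genuine $(k',\eps)$-expander -- the subset $\{i_1,\dots,i_{m+1}\}$ fails to expand -- which is precisely the information your argument never uses.) The injective predecessor map $\pi$ does not help either: it bounds the non-first mass by a sum of \emph{larger} weights over image edges, which is the wrong direction, and the image edges may themselves be non-first, so nothing telescopes to $\eps D\|w\|_1$.

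The missing idea (and the way the cited proof in \cite{ref:BGIKS08} goes; the present paper only imports the lemma without proof) is to exploit expansion of every \emph{prefix} of the support ordered by magnitude. Sort $\supp(w)$ as $i_1,i_2,\dots,i_s$ with $|w_{i_1}|\ge |w_{i_2}|\ge\cdots$ (ties broken by index, matching the tie-breaking in $\first$), let $S_\ell=\{i_1,\dots,i_\ell\}$, and let $r_\ell$ be the number of edges from $i_\ell$ into $\Gamma(S_{\ell-1})$. Since $|S_\ell|\le k'$, expansion gives $|\Gamma(S_\ell)|\ge (1-\eps)D\ell$, hence $\sum_{\ell'\le \ell} r_{\ell'} \le D\ell-|\Gamma(S_\ell)|\le \eps D\ell$ for every $\ell$. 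Every non-first edge at $i_\ell$ collides with a strictly higher-ranked vertex, so it is counted in $r_\ell$, and therefore $\sum_{(i,j)\notin\first(G,w)}|w_i| \le \sum_\ell r_\ell |w_{i_\ell}|$. Because the partial sums of $r_\ell$ are at most $\eps D\ell$ and the weights $|w_{i_\ell}|$ are non-increasing, summation by parts (Abel summation) gives $\sum_\ell r_\ell |w_{i_\ell}| \le \eps D\sum_\ell |w_{i_\ell}| = \eps\sum_{(i,j)\in E}|w_i|$, which is the claim. Your sorted-order bookkeeping is a natural starting point, but without the prefix expansion bounds there is nothing to sum by parts against, and that is where the argument breaks.
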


Intuitively, for every 
right vertex in $G$, $\first(G, w)$ picks exactly one edge connecting the vertex to the left neighbor
at which $w$ has the highest magnitude (with ties broken in a consistent way), and
Lemma \ref{lem:BGIKS} shows that these edges pick up most of the $\ell_1$ mass of $w$.

We apply Lemma~\ref{lem:BGIKS} to the graph $\Gz$ that we set to be the graph
associated with the function $h$. 
Note that this graph is a $(4k, \eps)$-unbalanced expander by Lemma~\ref{lem:cond:expander}.
This means that for every
$(4k)$-sparse vector $w$ and letting $E$ denote the edge set of $\Gz$, we have
\[
\sum_{(i, j) \in E \setminus \first(G, w)} |w_i| \leq \eps \sum_{(i, j) \in E} |w_i| = \eps D \|w\|_1,
\]
where the last equality uses the fact that $\Gz$ is $D$-regular from left. By an averaging argument,
and noting that $\Gz$ is obtained by taking the union of the edges of graphs $G^1, \ldots, G^D$
(each of which being $1$-regular from left), we get that for some $t(G, w) \in [D]$,
\begin{equation} \label{eqn:first}
\sum_{(i, j) \in E^{t(G, w)} \setminus \first(G, w)} |w_i| \leq \eps \|w\|_1,
\end{equation}
where $E^{t(G, w)}$ denotes the edge set of $G^{t(G, w)}$.

Our goal will be to show that the algorithm converges exponentially to the near-optimal solution. In particular, in the following 
we show that if the algorithm is still ``far'' from the optimal solution on the $s$th iteration, it obtains an improved approximation
for the next iteration. This is made precise in Lemma~\ref{l:main}, which we recall below.

\let\oldthmA\thethm
\renewcommand{\thethm}{\ref{l:main}}
\begin{lem} (restated)
\label{l:main:repeat}
For every constant $\gamma > 0$, there is an $\eps_0$ only depending on $\gamma$
such that if $\eps \leq \eps_0$ the following holds.
Suppose that for some $s$,  
\begin{equation} \label{eqn:lmain:assumption}
\|x-x^s\|_1 > C \|x - H_k(x)\|_1  
\end{equation}
for  $C=1/\epsilon$. Then,
there is a $t \in [D]$ such that
\begin{equation} \label{eqn:improve}
 \|x-(x^s+\Delta^{s,t})\|_1 \le \gamma \|x-x^s\|_1. 
\end{equation}
\end{lem}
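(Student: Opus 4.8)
\textbf{Proof plan for Lemma~\ref{l:main}.}

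The plan is to fix the iteration $s$ and work with the residual vector $w := x - x^s$. Since $x^s$ is $k$-sparse, $w$ agrees with $x$ outside a set of size at most $k$; splitting off the $k$ largest coordinates of $x$ and the (at most $k$) coordinates of $\supp(x^s)$, I would isolate a ``head'' set $H$ of size at most $2k$ that carries almost all of the $\ell_1$ mass of $w$ up to an additive $\|x - H_k(x)\|_1$ term, which by assumption~\eqref{eqn:lmain:assumption} is at most $\eps \|w\|_1$. Concretely, let $w' := w_H$ be the restriction of $w$ to these heavy coordinates; then $w'$ is $2k$-sparse, $\|w - w'\|_1 \le \|x - H_k(x)\|_1 \le \eps\|w\|_1$, and it suffices to show the algorithm approximately recovers $w'$.

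Next I would choose $t$ to be the good index $t(\Gz, w')$ supplied by inequality~\eqref{eqn:first}, applied to the $4k$-sparse vector $w'$ (it is $2k$-sparse, hence certainly $4k$-sparse, and $\Gz$ is a $(4k,\eps)$-expander). For this $t$, the edges of $G^t$ outside $\first(\Gz, w')$ carry at most $\eps\|w'\|_1 \le \eps\|w\|_1$ of the mass. I then need two things about the single-row-per-column structure of $M^t$. First, the measurement vector the algorithm actually has access to in Lines~\ref{algo:yst0}--\ref{algo:ystb} is $y^{s,t,b} = (M^t \otimes B_b)(x - x^s) = (M^t\otimes B_b) w$ and $y^{s,t,0} = M^t w$; since $M^t$ is the truth table of $h_t$, the $j$th entry $y^{s,t,0}(j)$ is exactly $\sum_{i\colon h_t(i)=j} w(i)$, i.e.\ a sum of $w$ over a coset of $\ker h_t$. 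Second, if a coset contains a single coordinate $i^\star$ whose $|w(i^\star)|$ dominates the $\ell_1$ mass of the rest of that coset, then (a) $j = h_t(i^\star)$ lands among the top $2k$ entries $T$ of $y^{s,t,0}$ (because there are at most $2k$ heavy coordinates, essentially one heavy coset per heavy coordinate), (b) procedure $\proc{Search}$ correctly reads off the bits of $i^\star$ from comparing $|y^{s,t,b}(j)|$ to $|y^{s,t,0}(j)|/2$ — here the bit-selection tensoring is exactly what lets the algorithm test the $b$th bit of the dominant coordinate — and (c) the estimate $\Delta^{s,t}(i^\star) = y^{s,t,0}(j)$ is within the coset-error of $w(i^\star)$. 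The standard expander/RIP-1 counting (this is the place Lemma~\ref{lem:BGIKS} and \eqref{eqn:first} are used) shows that all but an $O(\eps)$-fraction of the $\ell_1$ mass of $w'$ sits on such ``well-isolated'' coordinates, so summing the per-coordinate errors, together with the $\eps\|w\|_1$ from $w - w'$ and the discarded light part of $w'$, yields $\|w - \Delta^{s,t}\|_1 = \|x - (x^s + \Delta^{s,t})\|_1 \le O(\eps)\|w\|_1$, which is $\le \gamma\|w\|_1$ once $\eps \le \eps_0(\gamma)$.

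\textbf{Main obstacle.} The delicate point is the correctness of $\proc{Search}$ and the ``land in $T$'' step~(a) for \emph{every} coordinate carrying non-negligible mass, not just for a single fixed one: a heavy coordinate $i^\star$ could in principle collide in its coset with another comparably heavy coordinate, or the additive errors from the light tail could flip one of the $n$ bit-comparisons in $\proc{Search}$. I expect the bulk of the work (and the reason the lemma is stated for the $(4k,\eps)$-expander rather than a $(k,\eps)$ one, and why $T$ has size $2k$ rather than $k$) is a careful charging argument: define the set of ``bad'' left vertices as those incident to an edge outside $\first(\Gz, w')$ \emph{or} whose dominant-coordinate gap is too small relative to the coset tail, bound the total $w'$-mass on bad vertices by $O(\eps)\|w'\|_1$ using \eqref{eqn:first} and the expansion, and verify that on every good vertex all $n$ comparisons in $\proc{Search}$ go the right way and $h(u,t) \in T$ so the coordinate is actually added to $S$ with a good estimate. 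Handling the interaction between the $\ell_1$-tail error and the $n$ separate threshold tests cleanly — so that the error does not accumulate a factor of $n$ — is where I would spend the most care; this is carried out in Appendix~\ref{sec:l:main:proof}.
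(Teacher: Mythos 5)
Your plan follows essentially the same route as the paper's own proof (Section~\ref{sec:l:main:proof}): your head set $H$ is the paper's $W = U \cup V$, your $w'$ is $z_W$ for $z = x - x^s$, the choice of $t$ via \eqref{eqn:first} applied to the $2k$-sparse restriction is identical, and your ``well-isolated'' coordinates are the paper's good set $\Good$ (coordinates selected by the collision-resolving matching that dominate their coset, $d_i < \delta |z(i)|$). Your worry that the $n$ bit-tests could accumulate a factor of $n$ is easily dispelled and does not require mass accounting per bit: once $d_i < \delta|z(i)|$ with $\delta \le 1/4$, \emph{all} $n$ comparisons in $\proc{Search}$ succeed deterministically (Proposition~\ref{prop:identifyGood}). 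However, two steps of the plan as written would fail and need different arguments than the ones you propose.

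First, your step (a) --- that for every dominant heavy coordinate $i^\star$ the bin $h_t(i^\star)$ lands in the top-$2k$ set $T$ ``because there are at most $2k$ heavy coordinates'' --- is not true, and your stated intention to ``verify that on every good vertex $h(u,t)\in T$'' cannot be carried out: a bin dominated by a good but small coordinate can be outranked by bins that contain no good coordinate at all and merely accumulate tail mass. The correct statement is aggregate, not per-coordinate (Lemma~\ref{lem:beta}): since $|T| = 2k \ge |h_t(\Good)|$, we have $|T \setminus h_t(\Good)| \ge |h_t(\Good) \setminus T|$ while every bin value over $h_t(\Good)\setminus T$ is at most every bin value over $T \setminus h_t(\Good)$; the latter bins' total value is bounded by the non-good mass (Lemma~\ref{l:Gz}), so only an $O_\delta(\eps)$-type fraction of the good mass can miss $T$, after accounting for the $(1\pm\delta)$ distortion between $|z(i)|$ and its bin value. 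Second, your error accounting (per-coordinate errors on recovered good coordinates plus the two tail terms) omits the \emph{false positives}: for a bin $j \in T$ whose coset contains no good element, $\proc{Search}$ returns some index $u$, and if $h(u,t) \in T$ the algorithm writes $\Delta^{s,t}(u) = y^{s,t,0}(j)$ at a position where $z(u)$ may be negligible; this spurious contribution is a separate error source. The paper bounds it by the observation that such a write can only occur when the coset $h_t^{-1}(j)$ is disjoint from $\Good$ (otherwise $\proc{Search}$ would have returned the good element and the entry written would be at a good position), so the total spurious value is at most $\sum_{i \notin \Good} |z(i)|$, again small by Lemma~\ref{l:Gz}. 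With these two repairs your plan closes exactly as in the paper, yielding $\|z - \Delta^{s,t}\|_1 \le \gamma \|z\|_1$ once $\delta$ and $\eps$ are chosen sufficiently small as functions of $\gamma$.
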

\let\thethm\oldthmA
\addtocounter{thm}{-1}
%
The proof of Lemma~\ref{l:main} is deferred to Section~\ref{sec:l:main:proof}.

\begin{prop} \label{prop:xpxpp}
Suppose $x', x'' \in \R^{N}$ are $(3k)$-sparse and satisfy
\[\|\Mz (x- x')\|_1 \leq \|\Mz (x- x'')\|_1.\]
Then,
\[\|x- x'\|_1 \leq \Big(1+\frac{3+C_0 \eps}{1-C_0 \eps}\Big) \|x-H_k(x)\|_1 + \frac{1+C_0 \eps}{1-C_0 \eps} \cdot \|x-x''\|_1\]
where $C_0$ is the constant in Theorem~\ref{thm:exp:RIP}. In particular when $C_0 \eps \leq 1/2$, we have
\[\|x- x'\|_1 \leq 8 \|x-H_k(x)\|_1 +  3 \|x-x''\|_1.\]
\end{prop}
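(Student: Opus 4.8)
The plan is to run the standard compressed-sensing ``best-approximation'' argument in the RIP-1 setting, with one small extra ingredient to cope with the fact that $x$ itself need not be sparse. First I would normalize: set $\Phi := \Mz/D$ and $\delta := C_0\eps$. Since $h$ is a $(\log(4k),\eps)$-lossless condenser, Lemma~\ref{lem:cond:expander} makes the associated graph a $(4k,\eps)$-unbalanced expander, so by Theorem~\ref{thm:exp:RIP} the matrix $\Phi$ satisfies RIP-1 of order $4k$ with constant $\delta$; in particular $(1-\delta)\|u\|_1 \le \|\Phi u\|_1$ for every $4k$-sparse $u$. I would also record the trivial ``universal'' bound: because $\Mz$ is a $0$--$1$ matrix with exactly $D$ ones in every column (the graph is left-$D$-regular), $\|\Mz v\|_1 \le D\|v\|_1$, i.e.\ $\|\Phi v\|_1 \le \|v\|_1$, for \emph{every} $v \in \R^N$ with no sparsity assumption. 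Finally, the hypothesis $\|\Mz(x-x')\|_1 \le \|\Mz(x-x'')\|_1$ is scale-invariant, so it also reads $\|\Phi(x-x')\|_1 \le \|\Phi(x-x'')\|_1$.

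Next I would set up the support decomposition. Let $S := \supp(H_k(x))$, so $|S| \le k$, and write $\eta := \|x-H_k(x)\|_1 = \|x_{\overline S}\|_1$. Put $U := S \cup \supp(x')$, so $|U| \le k + 3k = 4k$. Splitting $x-x' = (x-x')_U + x_{\overline U}$ (using $\supp(x') \subseteq U$) and observing $\overline U \subseteq \overline S$ gives $\|x_{\overline U}\|_1 \le \eta$, hence $\|x-x'\|_1 \le \|(x-x')_U\|_1 + \eta$. It remains to bound $\|(x-x')_U\|_1$, which is a $4k$-sparse vector, so RIP-1 gives $\|(x-x')_U\|_1 \le \frac{1}{1-\delta}\|\Phi(x-x')_U\|_1$. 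Writing $\Phi(x-x')_U = \Phi(x-x') - \Phi x_{\overline U}$ and applying the universal bound to the second term yields $\|\Phi(x-x')_U\|_1 \le \|\Phi(x-x')\|_1 + \eta$. Now invoke the hypothesis and then the universal bound once more: $\|\Phi(x-x')\|_1 \le \|\Phi(x-x'')\|_1 \le \|x-x''\|_1$.

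Combining these, $\|(x-x')_U\|_1 \le \frac{1}{1-\delta}\bigl(\|x-x''\|_1 + \eta\bigr)$, so $\|x-x'\|_1 \le \frac{1}{1-\delta}\|x-x''\|_1 + \bigl(1+\frac{1}{1-\delta}\bigr)\eta$. Since $1 + \frac{1}{1-\delta} \le 1 + \frac{3+\delta}{1-\delta}$ and $\frac{1}{1-\delta} \le \frac{1+\delta}{1-\delta}$ for $\delta \in [0,1)$, this is dominated by the claimed inequality with $\delta = C_0\eps$; and substituting $C_0\eps \le 1/2$ (so $\delta \le 1/2$ and both coefficients are increasing in $\delta$) gives the constants $8$ and $3$ in the ``in particular'' clause.

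I do not anticipate a genuine obstacle: the only point requiring care is to resist controlling $\|\Phi x_{\overline U}\|_1$ via RIP (illegitimate, since $x_{\overline U}$ is dense) and instead use the elementary left-regularity bound $\|\Phi v\|_1 \le \|v\|_1$. Everything else is triangle-inequality bookkeeping and tracking the RIP constant, together with the observation that $|S \cup \supp(x')| \le 4k$ — which is precisely why order-$4k$ RIP (equivalently, condenser min-entropy $\log(4k)$) and the $3k$-sparsity of $x',x''$ are the matching parameter choices.
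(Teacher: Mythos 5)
Your proof is correct, and it takes a noticeably different (and slightly leaner) route than the paper's. The paper compares $x'$ and $x''$ to $H_k(x)$ throughout: it adds and subtracts $H_k(x)$ and $\Mz H_k(x)$, applies the \emph{lower} RIP-1 bound to the $4k$-sparse vector $H_k(x)-x'$ and the \emph{upper} RIP-1 bound to the $4k$-sparse vector $H_k(x)-x''$, and uses the column-weight bound $\|\Mz v\|_1 \le D\|v\|_1$ only on $x-H_k(x)$; chaining these through the hypothesis yields exactly the stated coefficients $1+\frac{3+C_0\eps}{1-C_0\eps}$ and $\frac{1+C_0\eps}{1-C_0\eps}$. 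You instead restrict to the union support $U=\supp(H_k(x))\cup\supp(x')$ (of size at most $4k$), invoke the lower RIP-1 bound a single time on $(x-x')_U$, and handle both $x_{\overline U}$ and $x-x''$ with the elementary left-$D$-regularity bound, which is valid for arbitrary (dense) vectors. This gives the stronger estimate $\|x-x'\|_1 \le \frac{1}{1-\delta}\|x-x''\|_1 + \bigl(1+\frac{1}{1-\delta}\bigr)\|x-H_k(x)\|_1$ with $\delta=C_0\eps$, which indeed dominates the stated inequality and its $\delta\le 1/2$ specialization, so the constants $8$ and $3$ follow a fortiori. Two side remarks: your argument never uses the $3k$-sparsity of $x''$ (only that of $x'$), and it needs expansion/RIP only for the single support $U$ rather than for the two supports of $H_k(x)-x'$ and $H_k(x)-x''$; in the randomized analogue (Proposition~\ref{prop:xpxpp:random}) this would let one apply the amplification lemma to one fixed support instead of two, at no cost. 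The only point requiring care — not applying RIP to the dense tail and using the regularity bound instead — is exactly the point you flagged and handled correctly.
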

\begin{Proof}
\begin{align}
\|x-x'\|_1 &\leq \|x-H_k(x)\|_1+\|H_k(x)-x'\|_1 \label{eqn:xpxpp:a} \\
&\leq \|x-H_k(x)\|_1+\frac{\|\Mz H_k(x)-\Mz x'\|_1}{D(1-C_0 \eps)}  \label{eqn:xpxpp:b} \\
&\leq \|x-H_k(x)\|_1+\frac{\|\Mz x-\Mz x'\|_1 + \|\Mz (x - H_k(x))\|_1}{D(1-C_0 \eps)}  \label{eqn:xpxpp:c} \\
&\leq \|x-H_k(x)\|_1+\frac{\|\Mz x-\Mz x''\|_1 + \|\Mz (x - H_k(x))\|_1}{D(1-C_0 \eps)}  \label{eqn:xpxpp:d} \\
&\leq \|x-H_k(x)\|_1+\frac{\|\Mz H_k(x)-\Mz x''\|_1 + 2 \|\Mz (x - H_k(x))\|_1}{D(1-C_0 \eps)}  \label{eqn:xpxpp:e} \\
&\leq \|x-H_k(x)\|_1+\frac{(1+C_0 \eps)\|H_k(x)-x''\|_1 + 2 \|x - H_k(x)\|_1}{(1-C_0 \eps)}  \label{eqn:xpxpp:f} \\
&\leq \|x-H_k(x)\|_1+\frac{(1+C_0 \eps)\|x-x''\|_1 + (3+C_0\eps) \|x - H_k(x)\|_1}{(1-C_0 \eps)}  \label{eqn:xpxpp:g} \\
&\leq \Big(1+\frac{3+C_0 \eps}{1-C_0 \eps}\Big) \|x-H_k(x)\|_1 + \frac{1+C_0 \eps}{1-C_0 \eps} \cdot \|x-x''\|_1 \label{eqn:xpxpp:h} 
\end{align}
In the above, \eqref{eqn:xpxpp:a}, \eqref{eqn:xpxpp:c}, \eqref{eqn:xpxpp:e}, and \eqref{eqn:xpxpp:g} use the triangle inequality
(after adding and subtracting $H_k(x)$, $\Mz x$, $\Mz H_k(x)$, and $x$ inside the norms, respectively); 
\eqref{eqn:xpxpp:b} and \eqref{eqn:xpxpp:f} use RIP-1 of the matrix $\Mz$ (seeing that $x'$, $x''$, and $H_k(x)$ are sufficiently sparse); 
\eqref{eqn:xpxpp:d} uses the assumption that $\|\Mz (x- x')\|_1 \leq \|\Mz (x- x'')\|_1$; \eqref{eqn:xpxpp:f} also
uses the fact that all columns of $\Mz$ have Hamming weight $D$ and thus the matrix cannot increase
the $\ell_1$ norm of any vector by more than a factor $D$.
\end{Proof}

%
%
%
The following corollary is implied by Lemma~\ref{l:main}.

\begin{coro}  \label{coro:main}
For every constant $\gamma_0 > 0$, there is an $\eps_0$ only depending on $\gamma_0$
such that if $\eps \leq \eps_0$ the following holds.
Assume condition \eqref{eqn:lmain:assumption} of Lemma~\ref{l:main} holds. Then,
\[\|x-x^{s+1}\|_1 \le \gamma_0 \|x-x^s\|_1.\]
\end{coro}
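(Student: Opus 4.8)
The plan is to derive Corollary~\ref{coro:main} from Lemma~\ref{l:main} by combining the one-step improvement guaranteed by the lemma with the stability property of the truncation operator $H_k$ as quantified by Proposition~\ref{prop:xpxpp}. The subtlety is that Lemma~\ref{l:main} gives a good approximation $x^s + \Delta^{s,t}$ for a \emph{specific} $t$, whereas the algorithm on line~\ref{algo:t0} selects $t_0$ by minimizing $\|\Mz x - \Mz(x^s + \Delta^{s,t})\|_1$, and then sets $x^{s+1} = H_k(x^s + \Delta^{s,t_0})$; so two things must be controlled: the passage from the good $t$ to the algorithm's choice $t_0$, and the effect of applying $H_k$.

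First I would fix $\gamma_0 > 0$ and apply Lemma~\ref{l:main} with a suitably small constant $\gamma$ (to be chosen at the end as a function of $\gamma_0$), obtaining $\eps_0(\gamma)$ and a $t^\star \in [D]$ with $\|x - (x^s + \Delta^{s,t^\star})\|_1 \le \gamma \|x - x^s\|_1$. Since $t_0$ minimizes $\|\Mz x - \Mz(x^s+\Delta^{s,t})\|_1$ over all $t \in [D]$, in particular $\|\Mz(x - (x^s+\Delta^{s,t_0}))\|_1 \le \|\Mz(x - (x^s+\Delta^{s,t^\star}))\|_1$. Now I would invoke Proposition~\ref{prop:xpxpp} with $x' := x^s + \Delta^{s,t_0}$ and $x'' := x^s + \Delta^{s,t^\star}$: both are $O(k)$-sparse (each $\Delta^{s,t}$ is supported on at most $2k$ coordinates from the set $T$ and $x^s$ is $k$-sparse, so with the constant $3$ in the proposition's hypothesis this is fine, possibly after adjusting the sparsity bookkeeping), and the hypothesis $\|\Mz(x-x')\|_1 \le \|\Mz(x-x'')\|_1$ holds by the minimality of $t_0$. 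Taking $C_0 \eps \le 1/2$ (i.e.\ shrinking $\eps_0$ further if needed), the proposition yields
\begin{eqn}
\|x - (x^s + \Delta^{s,t_0})\|_1 \le 8 \|x - H_k(x)\|_1 + 3 \|x - (x^s + \Delta^{s,t^\star})\|_1 \le 8 \|x - H_k(x)\|_1 + 3\gamma \|x - x^s\|_1.
\end{eqn}

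Next I would account for the truncation. Since $x^{s+1} = H_k(x^s + \Delta^{s,t_0})$ is the best $k$-sparse approximation of $x^s + \Delta^{s,t_0}$ and $x$ itself is $k$-sparse (here I use that $x$ is $k$-sparse in the setting of Theorem~\ref{thm:main}; if $x$ is only approximately sparse one replaces $x$ by $H_k(x)$ and absorbs $\|x - H_k(x)\|_1$ into the error), a standard argument gives $\|x^{s+1} - (x^s+\Delta^{s,t_0})\|_1 \le \|x - (x^s+\Delta^{s,t_0})\|_1$, hence by the triangle inequality $\|x - x^{s+1}\|_1 \le 2\|x - (x^s+\Delta^{s,t_0})\|_1 \le 16\|x - H_k(x)\|_1 + 6\gamma\|x - x^s\|_1$. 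Finally, I would use the hypothesis \eqref{eqn:lmain:assumption}, namely $\|x - H_k(x)\|_1 < \eps\|x - x^s\|_1$, to bound the first term: this gives $\|x - x^{s+1}\|_1 \le (16\eps + 6\gamma)\|x - x^s\|_1$. Choosing $\gamma := \gamma_0/12$ and then $\eps_0 := \min\{\gamma_0/32,\ \eps_0(\gamma),\ 1/(2C_0)\}$ makes the factor at most $\gamma_0$, completing the proof.

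The main obstacle I anticipate is the sparsity bookkeeping in applying Proposition~\ref{prop:xpxpp}: one must verify that $x^s + \Delta^{s,t_0}$ and $x^s + \Delta^{s,t^\star}$ are genuinely $(3k)$-sparse (the support of $\Delta^{s,t}$ coming from $T$, which has size $2k$, together with the $k$-sparse $x^s$), and that the expander $\Gz$ being a $(4k,\eps)$-expander is the right order to support all the RIP-1 applications in that proposition. A secondary point to get right is the precise form of the truncation inequality and whether $x$ should be taken exactly $k$-sparse or replaced by $H_k(x)$ throughout; in either case the extra $\|x - H_k(x)\|_1$ terms are dominated via \eqref{eqn:lmain:assumption}, so the final constant manipulation is routine once the structural claims are in place.
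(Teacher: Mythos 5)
Your proposal is correct and follows essentially the same route as the paper's proof: the minimality of $t_0$ combined with Proposition~\ref{prop:xpxpp} applied to the $(3k)$-sparse vectors $x^s+\Delta^{s,t_0}$ and $x^s+\Delta^{s,t^\star}$, then the best-$k$-term property of $H_k$ with a triangle inequality, and finally absorption of $\|x-H_k(x)\|_1$ via \eqref{eqn:lmain:assumption}. The only point to tidy is the truncation step: since $x$ is not assumed $k$-sparse in Theorem~\ref{thm:main}, one must compare $x^s+\Delta^{s,t_0}$ against the $k$-sparse vector $H_k(x)$ (as you note), which adds one further $\|x-H_k(x)\|_1$ term to your bound and only requires a trivially smaller choice of $\eps_0$ in the final constant bookkeeping.
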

\begin{Proof}
Let $t_0 \in [D]$ be the value computed in Line~\ref{algo:t0} of the procedure $\proc{Recover}$,
and $t \in [D]$ be the value guaranteed to exist by Lemma~\ref{l:main}.
From the fact that the algorithm picks $t_0$ to be the minimizer
of the quantity $\|\Mz x-\Mz (x^s+\Delta^{s,{t}})\|_1$ for all $t\in [D]$,
we have that 
\[\|\Mz x-\Mz (x^s+\Delta^{s,{t_0}})\|_1 \leq \|\Mz x-\Mz (x^s+\Delta^{s,{t}})\|_1. \]
Note that $x^s$ is $k$-sparse and $\Delta^{s,t_0}$ and $\Delta^{s,t}$ are $(2k)$-sparse.
Thus we can apply Proposition~\ref{prop:xpxpp} and deduce that
\[
\|x-(x^s + \Delta^{s,t_0})\|_1 \leq 
\Big(1+\frac{3+C_0 \eps}{1-C_0 \eps}\Big) \|x-H_k(x)\|_1 + \frac{1+C_0 \eps}{1-C_0 \eps} \cdot \|x-(x^s + \Delta^{s,t})\|_1.
\]
Plugging in the bound implied by Lemma~\ref{l:main} and \eqref{eqn:lmain:assumption} in the above inequality
we get
\begin{equation} \label{eqn:nextErr}
\|x-(x^s + \Delta^{s,t_0})\|_1 \leq \gamma' \|x-x^s\|_1,
\end{equation}
where we have defined
\[
\gamma' := \eps \Big(1+\frac{3+C_0 \eps}{1-C_0 \eps}\Big) + \frac{\gamma (1+C_0 \eps)}{(1-C_0 \eps)}.
\]
Now, we can write
\begin{eqnarray}
  \|x-x^{s+1}\|_1 & = &  \|x - H_k(x^s+\Delta^{t_0,s})\|_1 \nonumber \\
  &  \le &  \|x-(x^s+\Delta^{t_0,s}) \|_1 + \|x^s+\Delta^{t_0,s} - H_k(x^s+\Delta^{t_0,s})\|_1 \label{eqn:nextErr:a} \\
  & \le &   \|x-(x^s+\Delta^{t_0,s}) \|_1  + \|x^s+\Delta^{t_0,s}-H_k(x)\|_1 \label{eqn:nextErr:b} \\
  & \le & 2 \|x-(x^s+\Delta^{t_0,s}) \|_1 + \|x-H_k(x)\|_1 \label{eqn:nextErr:c} \\
  & \le &  (2\gamma'+\epsilon) \|x-x^s\|_1. \label{eqn:nextErr:d}
\end{eqnarray}  
In the above, \eqref{eqn:nextErr:a} and \eqref{eqn:nextErr:c} use the triangle inequality (after adding and subtracting
$x^s+\Delta^{t_0,s}$ inside the norm; \eqref{eqn:nextErr:b} uses the fact that $H_k(x)$ and $H_k(x^s+\Delta^{t_0,s})$
are both $k$-sparse by definition and  $H_k(x^s+\Delta^{t_0,s})$ is the best approximator of
$x^s+\Delta^{t_0,s}$ among all $k$-sparse vectors; and \eqref{eqn:nextErr:d} uses 
\eqref{eqn:lmain:assumption} and \eqref{eqn:nextErr}. Finally, note that we can choose $\gamma$ and
$\eps$ small enough so that $2\gamma'+\eps \leq \gamma_0$.
\end{Proof}

For the rest of the analysis, we set $\eps$ a small enough constant so that
\begin{enumerate}
\item $C_0 \eps \leq 1/2$, where $C_0$ is the constant in Theorem~\ref{thm:exp:RIP}.
\item $\gamma_0 = 1/2$, where $\gamma_0$ is the constant in Corollary~\ref{coro:main}.
\end{enumerate}

Observe that for the first iteration of the algorithm, the estimation error is
$\|x-x^0\|_1 = \|x\|_1$.
By repeatedly applying the exponential decrease guaranteed by Corollary~\ref{coro:main}, 
we see that as long as $s_0 \geq \log(3/\nu)$,
we can ensure that at some stage $s \leq s_0$ we attain
\[
\|x-x^s\|_1 \leq C \|x-H_k(x)\|_1 + (\nu/3) \|x\|_1.
\]
Let $x^\ast$ be the estimate computed
in the end of procedure $\proc{Recover}$. 
Recall that both $x^\ast$ and $x^s$ are $k$-sparse vectors. Thus, by 
Proposition~\ref{prop:xpxpp} we see that
\[\|x- x^\ast\|_1 \leq 8 \|x-H_k(x)\|_1 +  3 \|x-x^s\|_1 \leq (3C+8) \cdot \|x-H_k(x)\|_1 + \nu \|x\|_1. \]
Finally, as discussed in the beginning of the analysis, 
by choosing $\nu := 1/(4NL)$ (and thus, $s_0 = \log(NL) + O(1) = n^{O(1)}$) and using Proposition~\ref{prop:round},
the analysis (and proof of Theorem~\ref{thm:main}) is complete. \qed

\subsection{Proof of Lemma~\ref{l:main}} \label{sec:l:main:proof}

We start with some notation. Let $U$ denote the set of the $k$ largest (in magnitude) coefficients of $x$, and let $V$ be the support of $x^s$. Furthermore, we set $W=U \cup V$ and
$z=x -x^s$. That is, $z$ is the vector representing the current estimation error vector. Note that $|W| \le 2k$ and that
$H_k(x) = x_U$. With a slight abuse of notation, we will use the sets $\zo^n$ and $[N]$ interchangeably (for example
in order to index coordinate positions of $z$) and implicitly assume the natural $n$-bit representation of integers in $[N]$
in doing so.


We first apply the result of Lemma~\ref{lem:BGIKS} to the vector $z_W$ so as to conclude that, 
for some $t \in [D]$, according to \eqref{eqn:first} we have
\begin{equation} \label{eqn:first:b}
\sum_{\substack{(i, j) \in E^{t} \setminus \first(G, z_W) \\ i \in W}} |z(i)|
\leq \eps \|z_W\|_1.
\end{equation}
We fix one particular such choice of $t$ for the rest of the proof. 
Define the set
\[
D := \{ i \in [N]\mid (i, h_t(i)) \in \first(G, z_W) \}.
\] 
Intuitively, $\first(G, z_W)$ resolves collisions incurred by $h_t$ by picking, for each hash output,
only the pre-image with the largest magnitude (according to $z_W$). In other words,  
$\first(G, z_W)$ induces a partial function from $[N]$ to $\F_2^r$ that is one-to-one,
and $D$ defines the domain of this partial function.  Using \eqref{eqn:first:b}, we thus have
\begin{equation} \label{eqn:first:c}
\|z_{W \setminus D}\|_1 = \sum_{i \in W \setminus D} |z(i)| \leq \eps \|z_W\|_1 \leq \eps \|z\|_1.
\end{equation}
Define, for any $i \in [N]$,
\begin{equation}
\label{eqn:defDi}
d_i:=\left\|z_{   h_t^{-1}(h_t(i)) \setminus \{i\}   } \right\|_1 = \sum_{i' \in h_t^{-1}(h_t(i)) \setminus \{i\}} |z(i)|.
\end{equation}
Intuitively, with respect to the hash function $h_t$, the quantity $d_i$
collects all the mass from elsewhere that fall into the same bin as $i$.

Our aim is to show that $\Delta^{s,t}$ which is the estimate on the error vector produced 
by the algorithm recovers ``most'' of the coefficients in $z_W$, 
and is therefore ``close'' to the actual error vector $z$.

Our analysis will focus on coefficients in $z_W$ that are ``good'' in the following sense. 
Formally,  we define the set of {\em good} coefficients $\Good$ to contain coefficients $i$ such that:
\begin{enumerate}
\item $i \in W \cap D$, and,
\item $d_i < \delta |z(i)|$, for some small parameter $\delta \leq 1/4$ to be determined later. 
\end{enumerate} 
Intuitively, $\Good$ is the set of coefficients $i$ that ``dominate'' their bucket mass $y(h_t(i))$. 
Thus applying the binary search on any such bucket (i.e., procedure
$\proc{Search}(h_t(i), t, s)$) will return the correct value $i$ (note that
the above definition implies that for any $i \in \Good$, we must have
$y^{s,t,0}(h_t(i)) \neq 0$, and thus the binary search would not degenerate).
More formally, we have the following.
\begin{prop} \label{prop:identifyGood}
For any $i \in \Good$, the procedure 
$\proc{Search}(h_t(i), t, s)$ returns $i$.
\end{prop}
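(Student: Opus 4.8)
The plan is to make explicit what the quantities $y^{s,t,0}(j)$ and $y^{s,t,b}(j)$ used inside $\proc{Search}$ actually are, and then to show that for a good coefficient they let one read off the binary representation of $i$ one bit at a time. Write $z := x - x^s$ and $j := h_t(i)$. By construction the $j$th row of $M^t$ is the indicator vector of the bin $h_t^{-1}(j) \subseteq \F_2^n$, while the $j$th row of $M^t \otimes B_b$ is the indicator vector of those $i' \in h_t^{-1}(j)$ whose $b$th bit equals $1$ (recall that $B_b(i')$ is exactly the $b$th bit of $i'$). Since $y^{s,t,0} = M^t z$ and $y^{s,t,b} = (M^t \otimes B_b) z$, this gives
\[
y^{s,t,0}(j) = \sum_{i' \in h_t^{-1}(j)} z(i'), \qquad
y^{s,t,b}(j) = \sum_{\substack{i' \in h_t^{-1}(j)\\ B_b(i') = 1}} z(i').
\]

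Next I would isolate the contribution of $i$ itself. Goodness condition~2 asserts $d_i < \delta |z(i)|$ with $\delta \le 1/4$, and by \eqref{eqn:defDi} the quantity $d_i$ is precisely the total $\ell_1$ mass of the bin $h_t^{-1}(j)$ other than $i$. Hence $|y^{s,t,0}(j) - z(i)| \le d_i < \delta |z(i)|$, which yields the two-sided bound $(1-\delta)|z(i)| \le |y^{s,t,0}(j)| \le (1+\delta)|z(i)|$; in particular $z(i) \ne 0$, so the threshold $|y^{s,t,0}(j)|/2$ used in $\proc{Search}$ is strictly positive and the binary search does not degenerate.

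Finally I would fix $b \in [n]$ and split on the $b$th bit of $i$. If that bit is $1$, then $i$ contributes to $y^{s,t,b}(j)$ and everything else contributes at most $d_i$ in absolute value, so $|y^{s,t,b}(j)| \ge |z(i)| - d_i > (1-\delta)|z(i)|$; since $\delta < 1/3$ we have $1 - \delta \ge (1+\delta)/2$, hence $|y^{s,t,b}(j)| > (1-\delta)|z(i)| \ge (1+\delta)|z(i)|/2 \ge |y^{s,t,0}(j)|/2$, and the test in $\proc{Search}$ sets $u_b = 1$. If the bit is $0$, then $i$ does not contribute to $y^{s,t,b}(j)$, so $|y^{s,t,b}(j)| \le d_i < \delta |z(i)| \le \tfrac{1-\delta}{2}|z(i)| \le |y^{s,t,0}(j)|/2$ (again using $\delta < 1/3$), and the test sets $u_b = 0$. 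In both cases $u_b$ equals the $b$th bit of $i$, so $\proc{Search}(h_t(i),t,s)$ returns $u = i$.

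There is no real obstacle here: the only thing one must get right is the numerical slack, and $\delta \le 1/4$ (indeed any $\delta < 1/3$) is exactly what makes both comparisons land on the correct side. The conceptual point — and the reason the sensing matrix was augmented with the bit-selection matrix in the first place — is that tensoring $M^t$ with $B_b$ exposes, bin by bin, the mass carried by elements having a prescribed $b$th bit, so a coefficient that dominates its bin can be located by $n$ independent one-bit thresholds.
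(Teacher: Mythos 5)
Your proof is correct and follows essentially the same route as the paper's: express $y^{s,t,0}(j)$ and $y^{s,t,b}(j)$ as bin sums, use $d_i < \delta|z(i)|$ to sandwich $|y^{s,t,0}(j)|$ between $(1\pm\delta)|z(i)|$, and then check bit-by-bit that the threshold test lands on the correct side, with the numerical slack (your $\delta \le 1/3$ conditions) matching the paper's $\delta/(1-\delta) < 1/2$ and $(1-\delta)/(1+\delta) \ge 1/2$ under the choice $\delta \le 1/4$.
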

\begin{Proof}
Consider the sequence $(u_1, \ldots, u_n)$ produced by the procedure $\proc{Search}$ and any $b \in [n]$.
 Recall that $y^{s,t,0}=M^t z$ and for each $b \in [n]$, $y^{s,t,b}=(M^t \otimes B_b) \cdot z$.
 Let $j := h_t(i)$. Since $i \in \Good$, we have $d_i  < \delta |z(i)| \leq |z(i)|/2$. Therefore,
\begin{equation} \label{eqn:zvsy}
 |z(i)|(1-\delta) < |y^{s,t,0}(j)| < z(i) (1+\delta).
\end{equation} 
Let $b \in [n]$ and $v \in \zo$ be the $b$th bit in the 
$n$-bit representation of $i$. Let $S$ be the set of those elements in $h_t^{-1}(j) \subseteq \zo^n$ 
  whose $b$th bit is equal to $1$. Note that $i \in S$ iff $v=1$. Recall that
  \[
  y^{s,t,b}(j) = \sum_{i' \in S} z(i').
  \]
  Whenever $i \notin S$, we get
  \[
  |y^{s,t,b}(j)| = \left|\sum_{i' \in S } z(i') \right| 
  \leq \sum_{i' \in h_t^{-1}(j) \setminus \{i\} } |z(i')| = d_i < \delta |z(i)| < \frac{\delta |y^{s,t,0}(j)| }{1-\delta},
  \]
  according to the definition of $d_i$ and \eqref{eqn:zvsy}.
  On the other hand, when $i \in S$, we have
  \[
  |y^{s,t,b}(j)| \geq |z(i)| - \left|\sum_{i' \in S \setminus \{i\} } z(i')\right| \geq |z(i)|-d_i >
  |z(i)|(1-\delta) > \frac{(1-\delta)|y^{s,t,0}(j)| }{1+\delta},
  \]
  again according to the definition of $d_i$ and \eqref{eqn:zvsy}. Thus, the procedure $\proc{Search}$ 
  will be able to distinguish between the two cases $i \in S$ and $i \notin S$ (equivalently, $v=1$ and $v=0$)
  and correctly set $u_b = v$ provided that
  \[
  \frac{\delta}{1-\delta} < \frac{1}{2}
  \]
  and
  \[
  \frac{1-\delta}{1+\delta} \geq \frac{1}{2}
  \]
  which is true according to the choice $\delta \leq 1/4$.
\end{Proof}

By rewriting assumption \eqref{eqn:lmain:assumption} of the lemma, we know that
\[
\|z\|_1  \geq C \| x_{\overline{U}} \|_1,
\]
and thus,
\begin{equation} \label{eqn:zWbar}
\|z_{\overline{W}}\|_1 = \|x_{\overline{W}} \|_1 \leq \|x_{\overline{U}}\|_1 \leq \|z\|_1/C = \eps \|z\|_1,
\end{equation}
where the first equality uses the fact that $x$ and $z=x-x^s$ agree outside $V=\supp(x^s)$ (and thus, outside $W$)
and we also recall that $W \subseteq U$.

Observe that for each $i, i' \in D$ such that $i \neq i'$, we have $h_t(i) \neq h_t(i')$
(since $\first(G, z_W)$ picks exactly one edge adjacent to the right vertex $h_t(i)$, namely $(i, h_t(i))$,
and exactly one adjacent to $h_t(i')$, namely $(i', h_t(i'))$).
In other words for each $i \in D$, the set 
$h_t^{-1}(h_t(i))$ cannot contain any element of $D$ other than $i$. Therefore, we have
\begin{equation} \label{eqn:sumDi}
\sum_{i \in W  \cap D} d_i \le \| z_{\overline{D}} \|_1 \leq \|z_{W \setminus D}\|_1 + \| z_{\overline{W}} \|_1
\leq 2\eps \|z\|_1,
\end{equation}
where for the last inequality we have used \eqref{eqn:first:c}~and~\eqref{eqn:zWbar}.

Now we show that a substantial portion of the $\ell_1$ mass of $z$ is collected by the set of good indices $\Good$. 
\begin{lem}
\label{l:Gz}
$ \sum_{i \in \Good} |z(i)| \ge (1-2\epsilon (1+1/\delta))\|z\|_1$.
\end{lem}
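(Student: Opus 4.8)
The plan is to bound the $\ell_1$ mass of $z$ lying \emph{outside} $\Good$ by partitioning the ``bad'' coordinates into two types and controlling each using the estimates already derived. Concretely, write
\[
\|z\|_1 = \sum_{i \in \Good} |z(i)| + \sum_{i \in \overline{W}} |z(i)| + \sum_{i \in W \setminus \Good} |z(i)|,
\]
so it suffices to show that the last two sums together are at most $2\eps(1+1/\delta)\|z\|_1$. The middle sum is handled immediately by \eqref{eqn:zWbar}, which gives $\|z_{\overline{W}}\|_1 \leq \eps\|z\|_1$.

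For the remaining term, I would split $W \setminus \Good$ according to which of the two defining conditions of $\Good$ fails. First, coordinates $i \in W$ with $i \notin D$: their total mass is $\|z_{W \setminus D}\|_1 \leq \eps\|z\|_1$ by \eqref{eqn:first:c}. Second, coordinates $i \in W \cap D$ that violate condition~2, i.e.\ those with $d_i \geq \delta|z(i)|$, equivalently $|z(i)| \leq d_i/\delta$. Summing over exactly these indices,
\[
\sum_{\substack{i \in W \cap D \\ d_i \geq \delta|z(i)|}} |z(i)| \leq \frac{1}{\delta} \sum_{i \in W \cap D} d_i \leq \frac{2\eps}{\delta}\|z\|_1,
\]
where the final inequality is precisely \eqref{eqn:sumDi}. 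Adding the three contributions gives
\[
\|z\|_1 - \sum_{i \in \Good}|z(i)| \;\leq\; \eps\|z\|_1 + \eps\|z\|_1 + \frac{2\eps}{\delta}\|z\|_1 \;=\; 2\eps\Bigl(1+\tfrac{1}{\delta}\Bigr)\|z\|_1,
\]
which rearranges to the claimed bound. (The $2\eps$ rather than $\eps+\eps$ coefficient on the first two terms in the statement is obtained by absorbing $\eps + \eps = 2\eps \le 2\eps(1+1/\delta)$, or by noting the two $\eps\|z\|_1$ terms can be combined loosely; either way the stated inequality follows.)

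I do not anticipate a genuine obstacle here — every ingredient is already in place. The only point requiring a little care is bookkeeping: making sure the three index sets ($\overline{W}$, $W \setminus D$, and $\{i \in W \cap D : d_i \ge \delta |z(i)|\}$) really do cover the complement of $\Good$ within $[N]$, and that $\Good \subseteq W \cap D$ so there is no double-counting with the $\Good$ term. Given the definitions, $\Good$ is exactly the set of $i \in W \cap D$ satisfying $d_i < \delta|z(i)|$, so its complement relative to $[N]$ is precisely $\overline{W} \cup (W \setminus D) \cup \{i \in W \cap D : d_i \ge \delta|z(i)|\}$, and the bound follows by the triangle inequality for the $\ell_1$ norm.
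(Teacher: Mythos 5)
Your proof is correct and follows essentially the same route as the paper: the paper likewise bounds $\sum_{i \notin \Good}|z(i)|$ by splitting into the three parts $\overline{W}$, $W \setminus D$, and $(W\cap D)\setminus\Good$, using \eqref{eqn:zWbar}, \eqref{eqn:first:c}, and \eqref{eqn:sumDi} respectively. Note that your closing parenthetical is unnecessary, since $\eps + \eps + 2\eps/\delta = 2\eps(1+1/\delta)$ exactly, so no slack needs to be absorbed.
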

\begin{Proof}
We will upper bound $\sum_{i \notin \Good} |z(i)|$, and in order to do so, decompose this sum into three components bounded as follows:
\begin{itemize}
\item $\sum_{i \notin W} |z(i)| \le \epsilon \|z\|_1$ (according to \eqref{eqn:zWbar})
\item $\sum_{i \in W \setminus D} |z(i)| \le \epsilon \|z\|_1$ (according to \eqref{eqn:first:c})
\item $\sum_{(W \cap D)  \setminus \Good} |z(i)| \le 2 \epsilon/\delta \|z\|_1$. In order to verify this claim,
observe that from the definition of $\Good$, every $i \notin \Good$ satisfies $|z(i)| \leq d_i/\delta$. Therefore,
the left hand side summation is at most $\sum_{i \in W \cap D} |d_i|/\delta$ and the bound follows
using \eqref{eqn:sumDi}.
\end{itemize}
By adding up the above three partial summations, the claim follows.
\end{Proof}

Lemma~\ref{l:Gz} shows that it suffices to recover most of the coefficients $z_i$ for $i \in \Good$ in order recover most of the $\ell_1$ mass in $z$.
This is guaranteed by the following lemma.

\begin{lem} \label{lem:beta}
There is a $\beta>0$ only depending on $\eps$ and $\delta$ such that
$\beta = O_\delta(\eps)$ and 
\[ \sum_{i \in \Good, h_t(i) \in T} |z(i)| \ge (1-\beta) \|z\|_1, \]
where $T$ is the set define in Line~\ref{ln:defT} of the procedure $\proc{Estimate}$.
\end{lem}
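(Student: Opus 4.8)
The plan is to show that restricting the good coefficients to those whose hash value lands in $T$ (the top-$2k$ buckets of $y^{s,t,0}$) loses only an $O_\delta(\eps)$ fraction of the $\ell_1$ mass, on top of what Lemma~\ref{l:Gz} already gives. First I would observe that, by Lemma~\ref{l:Gz}, it suffices to bound $\sum_{i \in \Good, h_t(i) \notin T} |z(i)|$; combined with $\sum_{i \notin \Good}|z(i)| \le 2\eps(1+1/\delta)\|z\|_1$, a bound of the form $O_\delta(\eps)\|z\|_1$ on this new quantity immediately yields the lemma with $\beta = 2\eps(1+1/\delta) + O_\delta(\eps) = O_\delta(\eps)$.

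\medskip

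The heart of the argument is a counting/charging step. The set $T$ consists of the $2k$ coordinates of $\F_2^r$ where $|y^{s,t,0}(\cdot)| = |M^t z(\cdot)|$ is largest. For a good index $i$, the bucket value satisfies $|y^{s,t,0}(h_t(i))| > (1-\delta)|z(i)|$ by \eqref{eqn:zvsy}; moreover, distinct good indices lie in distinct buckets (since $\Good \subseteq D$, and elements of $D$ never collide under $h_t$, as already noted before \eqref{eqn:sumDi}). So I would argue: if $i \in \Good$ but $h_t(i) \notin T$, then all $2k$ buckets in $T$ have value at least $|y^{s,t,0}(h_t(i))|$ in magnitude, which is more than $(1-\delta)|z(i)|$. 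Now I want to compare this against the ``true'' large coefficients: $z$ restricted to $W$ is $(2k)$-sparse, and the buckets in $T$ collectively pick up at most $\|M^t z\|_1 \le \|M^t z_W\|_1 + \|M^t z_{\overline W}\|_1 \le \|z_W\|_1 + \eps\|z\|_1$ of mass (using that $M^t$ is $1$-regular from the left so it doesn't increase $\ell_1$ norm, together with \eqref{eqn:zWbar}). Since $|T| = 2k$ and each bucket in $T$ exceeds $(1-\delta)|z(i)|$ in magnitude, we get $2k(1-\delta)|z(i)| \le \|z_W\|_1 + \eps\|z\|_1 \le (1+\eps)\|z\|_1$. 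This forces $|z(i)| \le \frac{1+\eps}{2k(1-\delta)}\|z\|_1$, i.e.\ every such index is ``small.'' There are at most $2k$ good indices outside $T$ (all in distinct buckets, and in fact at most $k$ since good indices lie in $W\cap D$ and $|W\cap D|$ could be up to $2k$, but the precise count only affects constants), so
\[
\sum_{i \in \Good,\, h_t(i) \notin T} |z(i)| \;\le\; 2k \cdot \frac{1+\eps}{2k(1-\delta)}\|z\|_1 \;=\; \frac{1+\eps}{1-\delta}\|z\|_1,
\]
which is useless as stated — so this crude bound must be sharpened.

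\medskip

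The fix, and the step I expect to be the real obstacle, is to not bound each term by the worst case but to use that the buckets in $T$ that are \emph{not} hit by good indices outside $T$ still carry mass comparable to those that are. Concretely, I would set up the comparison so that each good index $i$ with $h_t(i) \notin T$ is \emph{charged} to a distinct bucket in $T$ whose value is $\ge (1-\delta)|z(i)|$ in magnitude; such an injective charging exists whenever there are at most $2k = |T|$ such indices, because all the offending buckets $h_t(i)$ have value below the $2k$-th largest, so we can injectively map them into $T$ in a way that respects the ordering of bucket magnitudes. Then $\sum_{i \in \Good, h_t(i)\notin T}(1-\delta)|z(i)| \le \sum_{j \in T'} |y^{s,t,0}(j)|$ for the image set $T' \subseteq T$, and I bound the right side by the total bucket mass that is \emph{not} accounted for by the good, in-$T$ indices: each good index $i' \in \Good$ with $h_t(i') \in T$ contributes at least $(1-\delta)|z(i')|$ to its own bucket in $T$ (disjoint from $T'$), so
\[
(1-\delta)\!\!\sum_{i \in \Good,\, h_t(i)\notin T}\!\! |z(i)| \;+\; (1-\delta)\!\!\sum_{i \in \Good,\, h_t(i)\in T}\!\! |z(i)| \;\le\; \|M^t z\|_1 \;\le\; (1+\eps)\|z\|_1,
\]
so $(1-\delta)\sum_{i \in \Good}|z(i)| \le (1+\eps)\|z\|_1$, i.e.\ $\sum_{i \in \Good, h_t(i)\in T}|z(i)| \ge \sum_{i \in \Good}|z(i)| - \big(\tfrac{1+\eps}{1-\delta}-1\big)\|z\|_1$, and combining with Lemma~\ref{l:Gz} (which gives $\sum_{i\in\Good}|z(i)| \ge (1-2\eps(1+1/\delta))\|z\|_1$) yields the bound with $\beta = 2\eps(1+1/\delta) + \tfrac{\delta+\eps}{1-\delta} = O_\delta(\eps)$, as desired — and importantly, this uses $\delta$ small and $\eps \le \eps_0(\delta)$ exactly as in the hypotheses of Lemma~\ref{l:main}. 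I would double-check the injective-charging claim carefully, handling the edge case where the number of good out-of-$T$ indices is small relative to $k$ (then the charging is trivial), and make sure the disjointness of the two bucket families (image $T'$ versus buckets of good in-$T$ indices) is genuine — this disjointness is where I anticipate needing the most care, and it follows because good in-$T$ indices occupy their own buckets in $T$ while $T'$ is the set we map the out-of-$T$ ones into, and an index cannot be both in-$T$ and out-of-$T$.
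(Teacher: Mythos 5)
Your set-up is the right one and matches the paper's counting argument almost to the end: good indices land in distinct buckets, each good $i$ has $|y^{s,t,0}(h_t(i))| \ge (1-\delta)|z(i)|$ by \eqref{eqn:zvsy}, and since $|h_t(\Good)| \le 2k = |T|$ there is room to injectively charge the buckets in $h_t(\Good)\setminus T$ to buckets in $T\setminus h_t(\Good)$, each of which dominates them in magnitude. The gap is in how you cash in the charging. Writing $A$ and $B$ for the good in-$T$ and out-of-$T$ masses, you bound the image buckets $T'$ together with the buckets of the good in-$T$ indices by the \emph{total} mass $\|M^t z\|_1 \le (1+\eps)\|z\|_1$, which yields only $(1-\delta)(A+B) \le (1+\eps)\|z\|_1$. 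That inequality is essentially vacuous (the good coordinates are distinct coordinates of $z$, so $A+B \le \|z\|_1$ trivially), and your ``i.e.''\ step, deducing $A \ge \sum_{i\in\Good}|z(i)| - \big(\tfrac{1+\eps}{1-\delta}-1\big)\|z\|_1$, does not follow from it: an upper bound on $A+B$ gives no upper bound on $B$ alone, and combining it with Lemma~\ref{l:Gz} just produces $0 \le (\text{positive constant})\cdot\|z\|_1$. So the final bound on $\beta$ is not established by the argument as written.

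The missing idea is to use what the image buckets actually contain, rather than the global mass. Take the injection into $T\setminus h_t(\Good)$ (the counting $|T\setminus h_t(\Good)| \ge |h_t(\Good)\setminus T|$ guarantees room, as you note). A bucket $j \notin h_t(\Good)$ has $h_t^{-1}(j)\cap\Good = \emptyset$, so $|y^{s,t,0}(j)| \le \sum_{i\in h_t^{-1}(j)}|z(i)|$ with the sum running only over non-good $i$; since the preimages are disjoint, $\sum_{j\in T\setminus h_t(\Good)}|y^{s,t,0}(j)| \le \sum_{i\notin\Good}|z(i)| \le 2\eps(1+1/\delta)\|z\|_1$ by Lemma~\ref{l:Gz}. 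Hence $(1-\delta)B \le \sum_{j\in h_t(\Good)\setminus T}|y^{s,t,0}(j)| \le \sum_{j\in T\setminus h_t(\Good)}|y^{s,t,0}(j)| \le 2\eps(1+1/\delta)\|z\|_1$, i.e.\ the out-of-$T$ good mass is $O_\delta(\eps)\|z\|_1$; combined with Lemma~\ref{l:Gz} and the $1\pm\delta$ distortion this gives the stated bound. This is exactly where the paper's proof gets its $O_\delta(\eps)$ loss, and it is the step your charging discards by bounding with $\|M^t z\|_1$.
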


\begin{Proof}
Consider the bin vector $y := y^{s,t,0} = M^t z$. 
From the choice of $T$ as the set picking the largest $2k$ coefficients of $y$,
it follows that for all $j \in T \setminus h_t(\Good)$ and $j' \in h_t(\Good) \setminus T$ (where $h_t(\Good)$ denotes the set
$\{h_t(i) \mid i \in \Good\}$) we have $|y(j)| \ge |y({j'})|$. 
Since $|T|=2k$ and $|h_t(\Good)| \le 2k$ (because $\Good \subseteq W$ which is in turn $(2k)$-sparse), 
it follows that $|T \setminus  h_t(\Good)| \ge |h_t(\Good) \setminus T|$. Therefore,
\[
 \sum_{j \in  h_t(\Good) \setminus  T} |y(j)| \le \sum_{j \in  T \setminus h_t(\Good)} |y(j)|.
\]
 Now, using Lemma~\ref{l:Gz} we can deduce the following.
\begin{eqnarray}
\label{e:con}
 \sum_{j \in  T \setminus h_t(\Good)} |y(j)| \le  \sum_{i \notin {\Good}} |z(i)| \le 2\epsilon (1+1/\delta) \|z\|_1 
\end{eqnarray}
where for the first inequality we note that $y(j) = \sum_{i \in h_t^{-1}(j)}{z(i)}$ and that the sets $h_t^{-1}(j)$ for
various $j$ are disjoint and cannot intersect $\Good$ unless, by definition, $j \in h_t(\Good)$.

Recall that for every $i \in \Good$, by the definition of $\Good$ we have
\begin{equation*} 
(1-\delta) |z(i)| < |y(h_t(i))| < (1+\delta) |z(i)|.
\end{equation*}
Using this, it follows that
\begin{eqnarray}
  \sum_{i \in \Good, h_t(i) \in T} |z(i)| & \ge &  \frac{1}{1+\delta} \sum_{j \in h_t(\Good) \cap T} |y(j)| \nonumber \\
     & \ge & \frac{1}{1+\delta}  \left ( \sum_{j \in h_t(\Good) } |y(j)| -  \sum_{j \in h_t(\Good) \setminus T } |y(j)| \right ) \nonumber \\
      & \ge &  \frac{1}{1+\delta}   \left ( (1-\delta) \sum_{i \in \Good} |z(i)| - \sum_{j \in h_t(\Good)  \setminus  T } |y(j)| \right ) \nonumber \\
     & \ge &  \frac{(1-\delta)(1-2\epsilon(1+1/\delta)) - 2\epsilon (1+1/\delta) }{1+\delta}  \|z\|_1 =: (1-\beta) \|z\|_1,
     \nonumber 
\end{eqnarray}
where the last step follows from Lemma~\ref{l:Gz} and \eqref{e:con}.
\end{Proof}

We are now ready to conclude the proof of Lemma~\ref{l:main}. 
First, observe using Proposition~\ref{prop:identifyGood} that for coordinates $i \in \Good$ such that $h_t(i) \in T$, 
we have $\Delta^{s,t}(i)=y({h_t(i))}$ and
that, since $i \in \Good$,
\begin{equation}
\label{eqn:yVsZ}
z(i)(1-\delta) \leq z(i) - d_i \leq y({h_t(i))} \leq z(i) + d_i \leq z(i)(1+\delta).
\end{equation}

 Therefore, for such choices of $i$, $|\Delta^{s,t}(i) - z(i)|\le \delta |z(i)|$. Thus we have
\begin{align}
\|\Delta^{s,t} -z\|_1 &= \sum_{i \in \Good \cap h_t^{-1}(T) } |\Delta^{s,t}(i) - z(i)| + \sum_{i \notin \Good \cap h_t^{-1}(T) } |\Delta^{s,t}(i) - z(i)| \label{eqn:DeltaZ:a} \\
&\le \delta \|z\|_1 + \sum_{i \notin h_t^{-1}(T) } |\Delta^{s,t}(i) - z(i)| + \sum_{i \in h_t^{-1}(T) \setminus \Good } |\Delta^{s,t}(i) - z(i)| \label{eqn:DeltaZ:b} \\
&= \delta \|z\|_1 + \sum_{i \notin h_t^{-1}(T) } |z(i)| + \sum_{i \in h_t^{-1}(T) \setminus \Good } |\Delta^{s,t}(i) - z(i)| \nonumber \\
&\leq \delta \|z\|_1 + \sum_{i \notin h_t^{-1}(T)} |z(i)| + \sum_{i \in h_t^{-1}(T) \setminus \Good } (|\Delta^{s,t}(i)| + |z(i)|) \nonumber \\
&= \delta \|z\|_1 + \sum_{i \notin h_t^{-1}(T) \cap \Good } |z(i)| + \sum_{i \in h_t^{-1}(T) \setminus \Good } (|\Delta^{s,t}(i)| + |z(i)|) \nonumber \\
&\leq (\delta+\beta) \|z\|_1 + \sum_{i \in h_t^{-1}(T) \setminus \Good } |\Delta^{s,t}(i)| \label{eqn:DeltaZ:c} 
\end{align}

In the above, \eqref{eqn:DeltaZ:b} uses \eqref{eqn:yVsZ} and \eqref{eqn:DeltaZ:c} uses Lemma~\ref{lem:beta}.
Now, for each $i \in h_t^{-1}(T) \setminus \Good$ such that $|\Delta^{s,t}(i)| \neq 0$,
the algorithm by construction sets $\Delta^{s,t}(i) = y^{s,t,0}(h_t(i)) = \sum_{j \in h_t^{-1}(h_t(i))} z(j)$.
Observe that in this case, we must have $h_t^{-1}(h_t(i)) \cap \Good = \emptyset$. This is because
if there is some $i' \in h_t^{-1}(h_t(i)) \cap \Good$, the for loop in procedure $\proc{Estimate}$ upon 
processing the element $h_t(i) = h_t(i')$ in the set $T$ would call $\proc{Search}(h_t(i'), t, s)$ which
would return $i'$ rather than $i$ according to Proposition~\ref{prop:identifyGood} (since $i' \in \Good$),
making the algorithm estimate the value of $\Delta^{s,t}(i)$ and leave $\Delta^{s,t}(i')$ zero.
Therefore,
\[
\sum_{i \in h_t^{-1}(T) \setminus \Good } |\Delta^{s,t}(i)| \leq \sum_{i \notin \Good} |z(i)| \leq \beta\|z\|_1,
\]
the last inequality being true according to Lemma~\ref{l:Gz}. Plugging this result back into \eqref{eqn:DeltaZ:b},
we get that
\[
\|x-(x^s+\Delta^{s,t})\|_1 = \|\Delta^{s,t} -z\|_1 \leq (\delta+2\beta) \|z\|_1 = (\delta+2\beta) \|x-x^s\|_1.
\]
The proof of Lemma~\ref{l:main} is now complete by choosing $\delta$ and $\eps$ (thus $\beta$) small enough constants so that
$\delta  + 2 \beta \leq \gamma$.

\qed

\section{Speeding up the algorithm using randomness}
\label{sec:random}

Although this work focuses on deterministic algorithms for sparse Hadamard transform, in this
section we show that our algorithm in Figure~\ref{fig:code} can be significantly sped up by
using randomness (yet preserving non-adaptivity). 

The main intuition is straightforward:
In the {\bf for} loop of Line~\ref{algo:fotT}, in fact most choices of $t$ turn out to be
equally useful for improving the approximation error of the algorithm. Thus, instead of 
trying all possibilities of $t$, it suffices to just pick one random choice. However, since
the error $\eps$ of the condenser is a constant, the ``success probability'' of
picking a random $t$ has to be amplified. This can be achieved by either 1) Designing
the error of condenser small enough to begin with; or, 2) Picking a few independent
random choices of $t$ and trying each such choice, and then estimating the choice that
leads to the best improvements. It turns out that the former option can be rather 
wasteful in that it may increase the output length of the condenser (an subsequently,
the overall sample complexity and running time) by a substantial factor. In this section,
we pursue the second approach which leads to nearly optimal results.

In this section, we consider a revised algorithm that 

\begin{itemize}

\item Instead of looping over all choices of $t$ in Line~\ref{algo:fotT} of procedure $\proc{Recover}$,
just runs the loop over a few random choices of $t$ .


\item In Line~\ref{algo:return}, instead of minimizing $\|Mx - Mx^s\|_1$,
performs the minimization with respect to a randomly sub-sampled submatrix of $M$
obtained from restriction $M$ to a few random and independent choices of $t$. 
\end{itemize}

The above randomized version of procedure $\proc{Recover}$ is called procedure $\proc{Recover}'$
in the sequel, and is depicted in Figure~\ref{fig:code:random}. The algorithm chooses an integer
parameter $q$ which determines the needed number of samples for $t$. In the algorithm,
we use the notation $M^{\cT}$, where $\cT \subseteq [D]$ is a multi-set, to denote the $|\cT| 2^r \times N$
matrix obtained by stacking matrices $M^t$ for all $t \in \cT$ on top of one another. Note that
the algorithm repeatedly uses fresh samples of $t$ as it proceeds. This eliminates possible
dependencies as the algorithm proceeds and simplifies the analysis.

\oldLinespread
\begin{figure}

\begin{mdframed}

\begin{codebox}
\Procname{$\proc{Recover}'(y, s_0, q)$} 
\li $s \gets 0$.
\li Let $B_1, \ldots, B_n \in \zo^{1 \times N}$ be the rows of the bit selection matrix $B$.
\li Initialize $x^0 \in \R^{N}$ as $x^0 \gets 0$.
\li \For $(t, b, j) \in [D] \times \{0, \ldots, n\} \times \F_2^r$ \label{algo:random:init}
\Do \li $y^{0, t, b}(j)  \gets y(j, t, b)$.
\End
\li \Repeat
\Do
\li Let $\cT^s \subseteq [D]$ be a multiset of $q$ uniformly and independently random elements.
\label{algo:random:Ts}
\li \For $t \in \cT^s$ \label{algo:random:fotT}
\Do
\li $y^{s, t, 0} \gets M^t \cdot (x-x^s) \in \R^{2^r}$. \label{algo:random:yst0}
\li \For $b \in [n]$
\Do
\li $y^{s, t, b} \gets (M^t \otimes B_b) \cdot  (x-x^s) \in \R^{2^r}$. \label{algo:random:ystb}
\End
\li $\Delta^{s, t} \gets \proc{Estimate}(t, s)$.
\End 
\li Let $\cT'^s \subseteq [D]$ be a multiset of $q$ uniformly and independently random elements.
\label{algo:random:Tps}
\li Let $t_0$ be the choice of $t \in \cT^s$ that minimizes \label{algo:random:t0} 
$\| M^{\cT'^s} x - M^{\cT'^s} (x^s + \Delta^{s, t}) \|_1$. 
\li $x^{s+1} \gets H_k(x^s + \Delta^{s, t_0})$.
\li $s \gets s + 1$.
\li \End \Until $s = s_0$.
\li Let $\cT'' \subseteq [D]$ be a multiset of $q$ uniformly and independently random elements.
\label{algo:random:Tpps}
\li Set $x^\ast$ to be the choice of $x^s$ (for $s = 0, \ldots, s_0$) that \label{algo:random:xstar} 
minimizes 
$\| M^{\cT''} x - M^{\cT''} x^s \|_1$. 
\End 
\li \Return $x^{\ast}$.  \label{algo:random:return}
\end{codebox}

\end{mdframed}

\caption{Pseudo-code for the randomized version of the algorithm $\proc{Recover}$. 
The algorithm receives $y$ implicitly and only queries $y$ at a subset of the positions. 
The additional integer parameter $q$ is set up by the analysis. }
\label{fig:code:random}
\end{figure}
\newLinespread

More formally, our goal in this section is to prove the following randomized analogue of Theorem~\ref{thm:main}.
Since the running time of the randomized algorithm may in general be less than the sketch length
($2^r D (n+1)$), we assume that the randomized algorithm receives the sketch implicitly and has 
query access to this vector.

\begin{thm} \label{thm:main:random} (Analogue of Theorem~\ref{thm:main})
There are absolute constants $c > 0$ and $\eps' > 0$ such that the following holds.
Let $k, n$ ($k \leq n$) be positive integer parameters, and suppose there exists a function 
$h\colon \F_2^n \times [D] \to \F_2^r$ (where $r \leq n$) computable in time $f(n)$ (where $f(n) = \Omega(n)$)
which is
an explicit $(\log(4k), \eps')$-lossless condenser. Let $M$ be the adjacency 
matrix of the bipartite graph associated with $h$ and $B$ be the bit-selection
matrix with $n$ rows and $N := 2^n$ columns. 
Then, there is a randomized algorithm that, given $k, n$, parameters $\eta, \nu > 0$, 
and query access to the vectors $M x$ and $(M \otimes B) x$ 
for some $x \in \R^N$ (which is \emph{not} given to the algorithm), 
computes a $k$-sparse estimate $\tx$ such that, with probability at least
$1-\eta$ over the random coin tosses of the algorithm,
 \[
\| \tx - x \|_1 \leq c \cdot \| x - H_k(x) \|_1 + \nu \| x \|_1.
 \]
 Moreover, execution of the algorithm takes 
 $O(2^r \cdot \log(\log(1/\nu)/\eta) \cdot \log(1/\nu) f(n))$ arithmetic operations
 in the worst case.
\end{thm}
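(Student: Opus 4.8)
The plan is to re-run the proof of Theorem~\ref{thm:main} almost verbatim, replacing every step in which the deterministic algorithm enumerates \emph{all} $t\in[D]$ by a sampling argument. Two randomized ingredients are needed. First, a \emph{robust} form of Lemma~\ref{l:main}: under its hypothesis~\eqref{eqn:lmain:assumption} (with $C$ now an absolute constant), not merely one but at least a $(1-\sqrt{\eps'})$-fraction of the $t\in[D]$ satisfy the improvement bound $\|x-(x^s+\Delta^{s,t})\|_1\le\gamma\|x-x^s\|_1$. Second, a concentration fact: for any \emph{fixed} $(4k)$-sparse $w$ and a multiset $\cT'$ of $q$ uniformly random elements of $[D]$, $\tfrac{1}{q}\|M^{\cT'}w\|_1\in(1\pm\delta_0)\|w\|_1$ except with probability $e^{-\Omega(\delta_0^2 q)}$. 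Granting these, at a typical iteration $\cT^s$ contains a ``good'' $t$, and the fresh independent subsample $\cT'^s$ makes the minimization in Line~\ref{algo:random:t0} pick a $t_0$ that is nearly as good; the remainder of the iteration then mirrors Corollary~\ref{coro:main} with the pair $(M,D)$ replaced by $(M^{\cT'^s},q)$, and the final selection step mirrors the last application of Proposition~\ref{prop:xpxpp} with $M^{\cT''}$ in place of $M$.

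\paragraph{The robust lemma.}
I would revisit the proof of Lemma~\ref{l:main} in Section~\ref{sec:l:main:proof} and note that the \emph{only} place a particular $t$ is singled out is the averaging step producing~\eqref{eqn:first}. Since $\sum_{t\in[D]}\sum_{(i,j)\in E^t\setminus\first(G,z_W)}|z_W(i)|\le\eps' D\|z_W\|_1$ with nonnegative inner sums, Markov's inequality shows that for a $(1-\sqrt{\eps'})$-fraction of $t$ the inner sum is at most $\sqrt{\eps'}\|z_W\|_1$. Fixing any such $t$ and running the rest of the argument (Proposition~\ref{prop:identifyGood}, Lemmas~\ref{l:Gz}~and~\ref{lem:beta}, and the final chain ending at~\eqref{eqn:DeltaZ:c}) unchanged but with $\sqrt{\eps'}$ in place of $\eps$ and with the threshold $C:=1/\sqrt{\eps'}$ in~\eqref{eqn:zWbar}, yields $\|x-(x^s+\Delta^{s,t})\|_1\le(\delta+2\beta)\|x-x^s\|_1$ with $\beta=O_\delta(\sqrt{\eps'})$, which is $\le\gamma\|x-x^s\|_1$ once $\delta$ and then $\eps'$ are fixed as small absolute constants.

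\paragraph{Subsampling and assembly.}
The concentration fact is immediate from Hoeffding: $\|M^{\cT'}w\|_1=\sum_{\tau\in\cT'}\|M^\tau w\|_1$ is a sum of $q$ i.i.d.\ variables in $[0,\|w\|_1]$ (each $M^\tau$ has exactly one $1$ per column) whose common mean $\tfrac{1}{D}\|Mw\|_1$ lies in $(1\pm C_0\eps')\|w\|_1$ by Theorem~\ref{thm:exp:RIP}. Since also $\|M^{\cT'}v\|_1\le q\|v\|_1$ for \emph{all} $v$, the proof of Proposition~\ref{prop:xpxpp} carries over with $(M,D)$ replaced by $(M^{\cT'},q)$ as long as $M^{\cT'}/q$ is $(1\pm\delta_0)$-faithful on the finitely many fixed sparse vectors appearing there, namely $H_k(x)-x'$ and $H_k(x)-x''$. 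Now fix an iteration $s$ and condition on all prior randomness; if $\|x-x^s\|_1>C\|x-H_k(x)\|_1$ then with probability $\ge1-(\sqrt{\eps'})^q$ some $t^\star\in\cT^s$ is good, and independently, using the fresh sample, with probability $\ge1-O(q)e^{-\Omega(q)}$ the matrix $M^{\cT'^s}$ is $(1\pm\delta_0)$-faithful on the $O(q)$ fixed $(4k)$-sparse vectors $\{H_k(x)-(x^s+\Delta^{s,t}):t\in\cT^s\}$. On these events the minimizer $t_0\in\cT^s$ of $\|M^{\cT'^s}x-M^{\cT'^s}(x^s+\Delta^{s,t})\|_1$ satisfies, via the subsampled Proposition~\ref{prop:xpxpp} (with $x'=x^s+\Delta^{s,t_0}$, $x''=x^s+\Delta^{s,t^\star}$) followed by the computation~\eqref{eqn:nextErr:a}--\eqref{eqn:nextErr:d}, the decrease $\|x-x^{s+1}\|_1\le\tfrac{1}{2}\|x-x^s\|_1$ once $\gamma,\eps'$ are small enough. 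A union bound over the $s_0=\lceil\log(3/\nu)\rceil$ iterations together with the final selection step (for which $M^{\cT''}$ must be faithful on the $s_0+1$ fixed $k$-sparse vectors $\{H_k(x)-x^s\}$) bounds the total failure probability by $s_0(\sqrt{\eps'})^q+O(s_0 q)e^{-\Omega(q)}$, which is at most $\eta$ once $q=O(\log(s_0/\eta))=O(\log(\log(1/\nu)/\eta))$. On the success event, exponential decrease from $\|x-x^0\|_1=\|x\|_1$ yields some $s^\star\le s_0$ with $\|x-x^{s^\star}\|_1\le C\|x-H_k(x)\|_1+(\nu/3)\|x\|_1$, and the subsampled Proposition~\ref{prop:xpxpp} applied to $x^\ast$ and $x^{s^\star}$ (both $k$-sparse) gives $\|\tx-x\|_1\le(3C+8)\|x-H_k(x)\|_1+\nu\|x\|_1$, so $c:=3C+8$ and $\eps'$ are the promised absolute constants. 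The running time follows from an accounting parallel to Section~\ref{sec:main:time}, now with only $q$ choices of $t$ per iteration and $s_0=O(\log(1/\nu))$ iterations (the cost of Line~\ref{algo:random:t0} over all $t\in\cT^s$ is $O(q^2 k f(n))=O(q\,2^r f(n))$ since $qk\le 2^r$), giving $O\bigl(2^r\log(\log(1/\nu)/\eta)\log(1/\nu)f(n)\bigr)$ arithmetic operations.

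\paragraph{Main obstacle.}
The only genuinely new point is keeping $q$ as small as $O(\log(\log(1/\nu)/\eta))$: one cannot claim that $M^{\cT'^s}$ is itself an RIP-$1$ matrix, since that would need a union bound over all $\binom{N}{\Theta(k)}$ sparse supports and force $q=\Omega(k\log N)$. The way out is to observe that, at each iteration, the minimization step only needs $M^{\cT'^s}$ to be $\ell_1$-faithful on the $O(q)$ \emph{specific} sparse vectors already fixed by the earlier randomness, and that the algorithm draws \emph{fresh} independent samples $\cT^s,\cT'^s,\cT''$ at every stage so that these vectors are independent of the subsample evaluating them. Tracking this conditioning carefully, together with checking that the robust form of Lemma~\ref{l:main} loses only a $\sqrt{\eps'}$-fraction of the $t$'s (so that a constant number $q$ of samples suffices to hit a good one), is essentially all the extra work beyond re-running the deterministic analysis.
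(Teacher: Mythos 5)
Your proposal is correct and follows the same skeleton as the paper's proof (fresh independent samples $\cT^s,\cT'^s,\cT''$ at every stage, conditioning on prior randomness so the relevant sparse vectors are fixed, a union bound over the $s_0=O(\log(1/\nu))$ stages, $q=O(\log(s_0/\eta))$, and the same running-time accounting), but you substitute different probabilistic ingredients for the paper's single key tool. The paper proves an ``error amplification'' lemma (Lemma~\ref{lem:random:amplify}): by a Chernoff bound, the union of $q$ randomly chosen subgraphs $G^{t_1},\dots,G^{t_q}$ still expands the \emph{fixed} support in question, except with probability $\exp(-\eps'^2 q/4)$; it then reruns Lemma~\ref{lem:BGIKS} and the averaging step inside the subsampled graph to get the analogue of Lemma~\ref{l:main} (Lemma~\ref{l:random:main}), and reuses the same amplification lemma on the supports of $H_k(x)-x'$ and $H_k(x)-x''$ to get the analogue of Proposition~\ref{prop:xpxpp} (Proposition~\ref{prop:xpxpp:random}). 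You instead (i) make Lemma~\ref{l:main} robust directly, via Markov applied to the full-graph bound \eqref{eqn:first}, so that a $(1-\sqrt{\eps'})$-fraction of \emph{all} $t\in[D]$ already work (at the cost of $\eps\mapsto\sqrt{\eps'}$ and $C=1/\sqrt{\eps'}$, harmless since $\eps'$ is an absolute constant), and (ii) replace subsampled expansion by Hoeffding concentration of $\frac1q\|M^{\cT'}w\|_1$ around $\frac1D\|Mw\|_1$ for the finitely many fixed sparse vectors needed in Proposition~\ref{prop:xpxpp}. Both routes give per-stage failure probability $e^{-\Omega(q)}$ and hence the same $q$; yours is somewhat more elementary (no need to reopen the proofs of Lemma~\ref{lem:BGIKS} or Theorem~\ref{thm:exp:RIP} for support-restricted versions), while the paper's stays entirely inside the expander framework and handles both analogues with one lemma. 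A genuine merit of your write-up is that you union-bound over the $O(q)$ candidate vectors $\{H_k(x)-(x^s+\Delta^{s,t}):t\in\cT^s\}$ and over the $s_0+1$ candidates $\{H_k(x)-x^s\}$ in the final selection; this is actually needed because $t_0$ and $x^\ast$ are chosen \emph{as functions of} the fresh subsample, a dependence the paper's statement of Proposition~\ref{prop:xpxpp:random} (which requires $x',x''$ fixed) quietly glosses over, and it only changes the failure probability by a factor absorbed into $q=O(\log(s_0/\eta))$.

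One small caveat in your running-time accounting: the inequality $qk\leq 2^r$ that you invoke for Line~\ref{algo:random:t0} is not automatic (a lossless condenser only guarantees $2^r=\Omega(k)$, and $q$ grows with $\log(\log(1/\nu)/\eta)$), so strictly speaking that line costs $O(q2^r+q^2kf(n))$ per stage; this matches the stated bound only when $qk=O(2^r)$, and otherwise introduces an extra factor of at most $q$. The paper's own accounting has the same implicit assumption, so this does not separate your argument from theirs, but it would be worth flagging rather than asserting.
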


Proof of Theorem~\ref{thm:main:random} is deferred to Section~\ref{sec:thm:main:random}.
In the sequel, we instantiate this theorem for use in sparse Hadamard transform application.
Specifically, we consider the additional effect on the running time incurred by the initial sampling stage;
that is, computation
of the input to the algorithm in Figure~\ref{fig:code:random} from the information provided in
$\hx = \DHT(x)$.

First, notice that all the coin tosses of the algorithm in Figure~\ref{fig:code:random}
(namely, the sets $\cT^0, \ldots, \cT^{s_0-1}$, $\cT'^0, \ldots, \cT'^{s_0-1}$, and $\cT''$)
can be performed when the algorithm starts, due to the fact that each random sample $t \in [D]$
is distributed uniformly and independently of the algorithm's input and other random choices.
Therefore, the sampling stage needs to compute $M^t x$ and $(M^t \otimes B) x$
for all the $(2s_0 + 1)q$ random choice of $t$ made by the algorithm.

For $t \in [D]$, let $V_t$ be the $(n-r)$-dimensional subspace of $\F_2^n$ which is
the kernel of the linear function $h_t$. Moreover, let $V_t^\perp$ and $W_t$ respectively
denote the dual and complement of $V_t$ (as in Lemma~\ref{lem:sumVcompute}).
As discussed in the proof of Theorem~\ref{thm:hadamard:sublinear}, for each $t \in [D]$,
we can use Lemma~\ref{lem:sampling} to
compute of $M^t x$ from query access to $\hx = \DHT(x)$ at $O(2^r r)$ points
and using $O(2^r r n)$ arithmetic operations, assuming that a basis for
$V_t^\perp$ and $W_t$ is known. Similarly, $(M^t \otimes B) x$ may be computed using
$O(2^r r n^2)$ operations and by querying $\hx$ at $O(2^r r n)$ points.

Computation of a basis for $V_t^\perp$ and $W_t$ for a given $t$ can in general
be performed\footnote{
For structured transformations it is possible to do better (see \cite{ref:displacement}). This is the case for the specific case of
Leftover Hash Lemma that we will later use in this section. However, we do not attempt to 
optimize this computation since it only incurs an additive poly-logarithmic factor in $N$ which
affects the asymptotic running time only for very small $k$.
} using Gaussian elimination in time $O(n^3)$.
Therefore, the additional time for the pre-processing needed for computation of such bases
for all choices of $t$ picked by the algorithm is $O(q s_0 n^4)$.

Altogether, we see that the pre-processing stage in total takes 
\[O(q s_0 (2^r r + n^2) n^2) = O( \log(\log(1/\nu)/\eta) \cdot \log(1/\nu) \cdot  (2^r r + n^2) n^2 )\] 
arithmetic operations. 

Finally we instantiate the randomized sparse DHT algorithm using Theorem~\ref{thm:main:random},
pre-processing discussed above, and the lossless condensers 
constructed by the Leftover Hash Lemma (Lemma~\ref{lem:leftover}). As for the linear
family of hash functions required by the Leftover Hash Lemma, we use the linear family
$\mathcal{H}_{\mathsf{lin}}$ which is defined in Section~\ref{sec:main:time}. Informally, a hash function
in this family corresponds to an element $\beta$ of the finite field $\F_{2^n}$. Given an
input $x$, the function interprets $x$ as an element of $\F_{2^n}$ and then outputs 
the bit representation of $\beta \cdot x$
truncated to the desired $r$ bits.
We remark that the condenser obtained in this way is computable in
time $f(n) = O(n \log n)$ using the FFT-based multiplication algorithm over $\F_{2^n}$.
Simplicity of this condenser and mild hidden constants in the asymptotics is particularly appealing for practical applications.

Recall that for the Leftover Hash Lemma, we have $2^r = O(k/\eps'^2) = O(k)$, which
is asymptotically optimal.
Using this in the above running time estimate, we see that the final randomized version of the
sparse DHT algorithm performs
\[
O( \log(\log(1/\nu)/\eta) \cdot \log(1/\nu) \cdot  (k \log k + n^2) \cdot n^2 )
\]
arithmetic operations in the worst case to succeed with probability at least $1-\eta$.

Finally, by recalling that an algorithm that computes an estimate satisfying
\eqref{eqn:approx:eta} can be transformed into one satisfying \eqref{eqn:approx}
using Proposition~\ref{prop:round}, we conclude the final result of this section (and Theorem~\ref{thm:general:random:intro}) 
that follows from the above discussion combined with Theorem~\ref{thm:main:random}.

\begin{coro}[Generalization of Theorem~\ref{thm:general:random:intro}] \label{coro:general:random} 
There is a randomized algorithm that, given integers $k, n$ (where $k \leq n$), parameters
$\eta > 0$ and $\nu > 0$, and 
(non-adaptive) query access to any $\hx \in \R^N$ (where $N := 2^n$),
 outputs $\tx \in \R^N$ that, with probability at least $1-\eta$ over the internal random
 coin tosses of the algorithm, satisfies 
 \begin{eqn}
\| \tx - x \|_1 \leq c \| x - H_k(x) \|_1 + \nu \| x \|_1,
 \end{eqn}
 for some absolute constant $c > 0$ and $\hx = \DHT(x)$. 
Moreover, the algorithm performs a worse-case
\[
O( \log(\log(1/\nu)/\eta) \cdot \log(1/\nu) \cdot  (k \log k + n^2) \cdot n^2 )
\]
arithmetic operations\footnote{We remark that the running time estimate counts $O(n)$ operations
for indexing; that is, looking for $\hx(i)$ for an index $i \in [N]$, and one operation
for writing down the result.
} to compute $\tx$. Finally, when each coefficient of $\hx$
takes $O(n)$ bits to represent, the algorithm can be set up to output $\tx$
satisfying
 \begin{eqn}
\| \tx - x \|_1 \leq c \| x - H_k(x) \|_1,
 \end{eqn}
using 
$
O( \log(n/\eta) \cdot (k \log k + n^2) \cdot n^3 )
$
arithmetic operations in the worst case. In particular, when $\eta = 1/n^{O(1)}$ and
$k = \Omega(n^2) = \Omega(\log^2 N)$, the algorithm runs in worse case time $O(k n^3 (\log k) (\log n) ) = \tilde{O}(k (\log N)^3)$. \qed
\end{coro}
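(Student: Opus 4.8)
The plan is to obtain Corollary~\ref{coro:general:random} by instantiating Theorem~\ref{thm:main:random} with the explicit linear lossless condenser coming from the Leftover Hash Lemma, and then prepending the Hadamard-query sampling stage exactly as in the deterministic proof of Theorem~\ref{thm:hadamard:sublinear}, while charging only for the measurements that the randomized reconstruction $\proc{Recover}'$ actually inspects.

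First I would fix $\eps'$ to be the constant supplied by Theorem~\ref{thm:main:random} and let $h\colon\F_2^n\times[D]\to\F_2^r$ be the $(\log(4k),\eps')$-lossless condenser obtained from the Leftover Hash Lemma (Lemma~\ref{lem:leftover}) applied to the linear hash family $\mathcal{H}_{\mathsf{lin}}$ of Section~\ref{sec:main:time}. This $h$ is linear in the first argument, is computable in time $f(n)=O(n\log n)$ via FFT-based multiplication over $\F_{2^n}$, and achieves the optimal $2^r=O(k/\eps'^2)=O(k)$ with $D=O(n)$. Feeding this $h$ into Theorem~\ref{thm:main:random} with the given error parameters $\eta$ and $\nu$ yields a randomized algorithm that, given query access to $Mx$ and $(M\otimes B)x$, outputs a $k$-sparse $\tx$ satisfying $\|\tx-x\|_1\le c\|x-H_k(x)\|_1+\nu\|x\|_1$ with probability at least $1-\eta$, using $O(2^r\log(\log(1/\nu)/\eta)\log(1/\nu)f(n))$ arithmetic operations.

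Next I would supply, using only access to $\hx=\DHT(x)$, the entries of $Mx$ and $(M\otimes B)x$ that $\proc{Recover}'$ queries. The key point (visible in Figure~\ref{fig:code:random}) is that every sample $t\in[D]$ drawn by the algorithm is uniform and independent of both the input and the other coin tosses, so all of the $2s_0+1$ random multisets drawn in Lines~\ref{algo:random:Ts}, \ref{algo:random:Tps}, and~\ref{algo:random:Tpps} — together comprising $(2s_0+1)q$ values of $t$, with $s_0=O(\log(1/\nu))$ and $q=O(\log(s_0/\eta))$ fixed by the analysis of Theorem~\ref{thm:main:random} — may be sampled before any query to $\hx$ is made, preserving non-adaptivity. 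For each sampled $t$ I would compute bases of $V_t^\perp$ and $W_t$ (the dual of $\ker h_t$ and a complement of $\ker h_t$) by Gaussian elimination in time $O(n^3)$, then invoke Lemma~\ref{lem:sampling} to evaluate $M^t x$ from $O(2^r r)$ queries to $\hx$ in $O(2^r r n)$ operations and Lemma~\ref{lem:sampling:tensor} to evaluate $(M^t\otimes B)x$ from $O(2^r r n)$ queries in $O(2^r r n^2)$ operations. Summing over the $O(qs_0)$ sampled values of $t$, the sampling-plus-preprocessing cost is $O(qs_0(2^r r+n^2)n^2)$; substituting $2^r r=O(k\log k)$ and $qs_0=O(\log(\log(1/\nu)/\eta)\log(1/\nu))$ gives the advertised bound $O(\log(\log(1/\nu)/\eta)\log(1/\nu)(k\log k+n^2)n^2)$, which dominates the reconstruction cost from the previous paragraph.

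Finally I would pass to the clean guarantee \eqref{eqn:approx} as in Section~\ref{app:main}: when each coefficient of $\hx$ takes $O(n)$ bits, each coordinate of $\sqrt{N}x$ is, after scaling, an integer of magnitude at most $L=2^{O(n)}$, so taking $\nu:=1/(4NL)$ makes $\log(1/\nu)=O(n)$, and Proposition~\ref{prop:round} (applied after rounding $\tx$ coordinatewise to the nearest integer) upgrades the output to one satisfying $\|\tx-x\|_1\le c\|x-H_k(x)\|_1$ for an absolute constant $c$. Plugging $\log(1/\nu)=O(n)$ into the running-time estimate gives $O(\log(n/\eta)(k\log k+n^2)n^3)$, and specializing to $\eta=1/n^{O(1)}$ together with $k=\Omega(n^2)=\Omega(\log^2 N)$ yields worst-case time $O(kn^3(\log k)(\log n))=\tilde{O}(k(\log N)^3)$. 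The main obstacle I anticipate is the bookkeeping of the third paragraph: confirming that hoisting all random choices of $t$ to the start genuinely leaves the behavior of $\proc{Recover}'$ (and hence the probabilistic guarantee of Theorem~\ref{thm:main:random}) unchanged while keeping the Hadamard queries non-adaptive, and verifying that the preprocessing term really dominates across every parameter regime of interest.
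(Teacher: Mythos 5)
Your proposal is correct and follows essentially the same route as the paper's own argument: instantiate Theorem~\ref{thm:main:random} with the Leftover-Hash-Lemma condenser over $\mathcal{H}_{\mathsf{lin}}$ (with $2^r=O(k)$ and $f(n)=O(n\log n)$), hoist all $(2s_0+1)q$ independent samples of $t$ to the start to preserve non-adaptivity, compute $M^t x$ and $(M^t\otimes B)x$ per sampled $t$ via Gaussian elimination plus Lemmas~\ref{lem:sampling} and~\ref{lem:sampling:tensor}, and finish with $\nu=1/(4NL)$ and Proposition~\ref{prop:round}, yielding exactly the stated bounds. The only slip is the claim $D=O(n)$ for the leftover-hash family (in fact $D=2^n$, i.e., seed length $n$ bits), but this is harmless since $D$ never enters the randomized algorithm's query or time complexity.
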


\subsection{Proof of Theorem~\ref{thm:main:random}} \label{sec:thm:main:random}

The proof is quite similar to the proof of Theorem~\ref{thm:main}, and therefore, in this section
we describe the necessary modifications to the proof of Theorem~\ref{thm:main} which lead to
the conclusion of Theorem~\ref{thm:main:random}.

\subsubsection{Correctness analysis of the randomized sparse recovery algorithm}

Similar to the proof of Theorem~\ref{thm:main}, our goal is to 
set up the randomized algorithm so that, given arbitrarily small parameters $\nu, \eta > 0$, it outputs
a $k$-sparse estimate $\tx \in \R^N$ that at least with probability $1-\eta$ (over the random coin
tosses of the algorithm) satisfies \eqref{eqn:approx:eta}, recalled below, for an absolute constant $C > 0$:
\begin{equation*} 
\| \tx - x \|_1 \leq C \| x - H_k(x) \|_1 + \nu \| x \|_1,
\end{equation*}
As in the proof of Theorem~\ref{thm:main} and using Proposition~\ref{prop:round},
once we have such a guarantee for some $\nu = \Theta(1/(NL))$, assuming that
$x$ has integer coordinates in range $[-L, +L]$ and by 
rounding the final result vector to the nearest integer vector we get the guarantee in
\eqref{eqn:approx}.



We will also use the following ``error amplification'' result that can be simply proved
using standard concentration results.

\begin{lem} \label{lem:random:amplify}
Suppose $h\colon \F_2^n \times [D] \to \F_2^r$ is a $(\kappa, \eps)$-lossless condenser. 
For any set $S \subseteq \F_2^n$ where
$|S| \leq 2^\kappa$ the following holds. Let $q \in \N$ be a parameter and
$t_1, \ldots, t_{q}$ be drawn uniformly and independently at random. 
Let $h'\colon \F_2^n \times [q] \to \F_2^r$ be defined as
$h'(x, j) := h(x, t_j)$, and $G$ be the bipartite graph associated with $h'$. Let
$T \subseteq \F_2^r$ be the neighborhood of 
the set of left vertices of $G$ defined by $S$. 
Then, with probability at least $1-\exp(-\eps^2 q/4)$
(over the randomness of $t_1, \ldots, t_{q}$),
we have $|T| \geq (1-2\eps) q |S|$. 
\end{lem}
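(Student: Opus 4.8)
The plan is to reduce the statement to a standard Chernoff-type concentration argument applied to the right-degree profile of the condenser. First I would observe that, since $h$ is a $(\kappa,\eps)$-lossless condenser and $|S|\le 2^\kappa$, Lemma~\ref{lem:cond:expander} tells us the bipartite graph associated with $h$ is a $(2^\kappa,\eps)$-unbalanced expander; hence for the particular set $S$ of left vertices we have $|\Gamma(S)|\ge (1-\eps)D|S|$, where $\Gamma(S)$ is the neighborhood taken over all $D$ edge-colors. Equivalently, writing $\Gamma(S)=\bigcup_{t\in[D]}\Gamma_t(S)$ where $\Gamma_t(S):=\{h_t(x):x\in S\}$, and noting $|\Gamma_t(S)|\le |S|$, the expansion guarantee says that on average over a uniformly random $t\in[D]$ the fraction of ``collisions'' is small: $\E_t[|S|-|\Gamma_t(S)|]\le \eps|S|$, or more usefully, the expected size of the \emph{multiset} sum $\sum_{t}|\Gamma_t(S)|$ over all $D$ colors is at least $(1-\eps)D|S|$. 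Actually the cleanest route is to count, for each right vertex $b\in\F_2^r$, the number of $t$ for which $b\in\Gamma_t(S)$; the total edge count incident to $S$ is exactly $D|S|$, and the number of \emph{distinct} neighbors is $|\Gamma(S)|\ge(1-\eps)D|S|$, so at most $\eps D|S|$ edges are ``wasted'' on repeated $(b,t)$ pairs — but since for a fixed $t$ the map $x\mapsto h_t(x)$ can already collapse $S$, the right statement is: $\sum_{t\in[D]}(|S|-|\Gamma_t(S)|)\le \eps D|S|$, i.e.\ $\E_{t}[\,|S|-|\Gamma_t(S)|\,]\le \eps|S|$ for $t$ uniform in $[D]$.

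Next I would set up the concentration step. Fix the sample $t_1,\dots,t_q$ drawn i.i.d.\ uniform from $[D]$, and let $T=\bigcup_{j=1}^q \Gamma_{t_j}(S)$ be the neighborhood of $S$ in the graph $G$ associated with $h'$. The quantity I want to lower bound is $|T|$. The natural approach is a per-element indicator argument: for a fixed ``target'' right vertex or, better, for a fixed element $x\in S$, consider whether $x$ is the unique preimage of $h_{t_j}(x)$ within $S$ for \emph{some} $j$. But a cleaner and sufficient bound uses linearity plus a single Hoeffding/Chernoff bound: note $|T|\ge \sum_{j=1}^q |\Gamma_{t_j}(S)| - (\text{overlaps between different }j)$ is awkward, so instead I would bound $|T|$ from below via $|T|\ge \max_j|\Gamma_{t_j}(S)|$ is too weak; the right move is to count pairs. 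Define, for each $j$, the random variable $Y_j:=|S|-|\Gamma_{t_j}(S)|\in[0,|S|]$, the number of collisions in color $t_j$. By the averaging argument above, $\E[Y_j]\le \eps|S|$. By Hoeffding's inequality for the average of $q$ bounded i.i.d.\ variables in $[0,|S|]$, $\Pr\big[\tfrac1q\sum_j Y_j \ge 2\eps|S|\big]\le \exp(-2(\eps|S|)^2 q/|S|^2)=\exp(-2\eps^2 q)\le \exp(-\eps^2 q/4)$. On the complementary event, $\tfrac1q\sum_j|\Gamma_{t_j}(S)|\ge (1-2\eps)|S|$. Finally, since $T\supseteq \Gamma_{t_j}(S)$ for every $j$ simultaneously is not what gives the bound — rather, I use that the \emph{multiset} union has size at least the max; but to conclude $|T|\ge(1-2\eps)q|S|$ as stated, one must interpret $|T|$ as counted in the graph $G$ of $h'$ \emph{with multiplicity over the $q$ copies of $\F_2^r$} (consistent with Definition~\ref{def:truthtable}, where right vertices are $\F_2^r\times[q]$). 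Under that reading, $T$ lives in $\F_2^r\times[q]$ and $|T|=\sum_{j=1}^q|\Gamma_{t_j}(S)|$ exactly, so the Hoeffding bound directly gives $|T|\ge (1-2\eps)q|S|$ with probability $\ge 1-\exp(-\eps^2 q/4)$, as claimed.

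The main obstacle I anticipate is purely notational: pinning down exactly what ``$T$'' and ``$|T|$'' mean — whether the right-vertex set of $G$ is $\F_2^r\times[q]$ (as Definition~\ref{def:truthtable} would dictate, making $|T|$ a sum over $j$) or $\F_2^r$ (making $|T|$ a union and the stated bound $|T|\ge(1-2\eps)q|S|$ only sensible when $q|S|\le 2^r$). The former interpretation makes the statement and proof clean and is consistent with how $h'$ and its associated graph are used in $\proc{Recover}'$ via the stacked matrices $M^{\cT}$, so I would adopt it explicitly at the start of the proof. Once that is fixed, the remaining work is entirely routine: the averaging step from the expander property of $h$, and one application of Hoeffding's inequality to the i.i.d.\ collision counts $Y_1,\dots,Y_q$, with the constant $4$ in the exponent absorbing the loss from $2\eps$ versus $\eps$.
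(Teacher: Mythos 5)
Your proposal is correct and follows essentially the same route as the paper's proof: the averaging step from the losslessness of $h$ (expected per-color collision mass at most $\eps|S|$), a Hoeffding/Chernoff bound on the i.i.d.\ bounded collision counts over the $q$ sampled seeds, and the observation that the right vertices of the graph associated with $h'$ are $\F_2^r \times [q]$ with disjoint parts, so $|T|$ is exactly the sum $\sum_j |\Gamma_{t_j}(S)|$ — which is precisely how the paper resolves the notational point you flag. No gaps.
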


\begin{proof}
Let $G^0$ be the bipartite graph associated with $h$, with $N$ left vertices and $D2^r$
right vertices, and for each $t \in [D]$, denote by $G^t$ the bipartite graph associated with $h_t$,
each having $N$ left vertices and $2^r$ right vertices. Recall that $G^0$ contains the union of the
edge set of $G^1, \ldots, G^D$ (with shared left vertex set $[N]$ and disjoint right vertex sets), 
and that $G$ contains the union of the edge set of $G^{t_1}, \ldots, G^{t_q}$.
Let $T^0$ be the set of right neighbors of $S$ in $G^0$. Similarly, let
$T^t$ ($t \in [D]$) be the set of right neighbors of $S$ in $G^t$. 

Since $h$ is a lossless condenser, we know that $|T^0| \geq (1-\eps) D|S|$.
For $i \in [q]$, let $X_i \in [0, 1]$ be such that $|T^i| = (1-X_i) |S|$, and 
define $X := X_1 + \cdots + X_q$.
By an averaging argument, we see that $\E[X_i] \leq \eps$. 
Moreover, the random variables $X_1, \ldots, X_q$ are independent.
Therefore, by a Chernoff bound, 
\[
\Pr[X > 2\eps q] \leq \exp(-\eps^2 q/4).
\]
The claim follows after observing that $|T| = (q-X)|S|$ (since the graph $G$ is composed
of the union of $G^1, \ldots, G^q$ with disjoint right vertex sets).
\end{proof}

Note that the above lemma requires the set $S$ to be determined and fixed \emph{before} the
random seeds $t_1, \ldots, t_{q}$ are drawn. Thus the lemma makes no claim
about the case where an adversary chooses $S$ based on the outcomes of the
random seeds.


In the sequel, we set the error of the randomness condenser (that we shall denote by $\eps'$) to be $\eps' \leq \eps/2$,
where $\eps$ is the constant from Theorem~\ref{thm:hadamard:sublinear}.

We observe that the result reported in Lemma~\ref{lem:BGIKS} only uses the expansion property
of the underlying bipartite graph with respect to the particular support of the vector $w$.
Thus, assuming that the conclusion of Lemma~\ref{lem:random:amplify} holds
for the set $S$ in the lemma set to be the support of a $k'$-sparse vector $w$ (where in our
case $k' = 4k$), we may
use the conclusion of Lemma~\ref{lem:BGIKS} that, for some $t \in \{t_1, \ldots, t_q\}$,

\begin{equation*}
\sum_{(i, j) \in E^t \setminus \first(G, w)} |w_i| \leq \eps \|w\|_1.
\end{equation*}


Using the above observation, we can deduce an analogue of the result of Lemma~\ref{l:main} for the
randomized case by noting that the result in Lemma~\ref{l:main} holds as long as
the set $W$ in the proof of this lemma satisfies \eqref{eqn:first:b}. 
Since the choice of $W$ only depends on the previous iterations of the algorithm;
that is the algorithm's input and random coin tosses determining $\cT^0, \ldots, \cT^{s-1}$,
we can use Lemma~\ref{lem:random:amplify} to ensure that \eqref{eqn:first:b} holds
with high probability. In other words,
we can rephrase Lemma~\ref{l:main} as follows.

\begin{lem} (Analogue of Lemma~\ref{l:main}) \label{l:random:main}
For every constant $\gamma > 0$, there is an $\eps_0$ and $C>0$ only depending on $\gamma$
such that if $\eps' \leq \eps_0$ the following holds.
Suppose that for some $s$,  
\begin{equation} \label{eqn:lmain:random:assumption}
\|x-x^s\|_1 > C \|x - H_k(x)\|_1.
\end{equation}
Then, with probability at least $1-\exp(-\eps'^2 q/4)$, there is a $t \in \cT^s$ such that
\begin{eqn}
\|x-(x^s+\Delta^{s,t})\|_1 \le \gamma \|x-x^s\|_1. 
\end{eqn}
\qed
\end{lem}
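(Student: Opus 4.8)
The plan is to transcribe the proof of Lemma~\ref{l:main} almost verbatim, replacing the single place where the global expansion of $\Gz$ (equivalently, the condenser property of $h$ on all sets of size $\le 4k$) was invoked --- the derivation of~\eqref{eqn:first} and hence~\eqref{eqn:first:b} --- by an application of the amplification statement in Lemma~\ref{lem:random:amplify}. Indeed, in Section~\ref{sec:l:main:proof} the expander hypothesis is used exactly once: to produce, for the fixed $(\le 2k)$-sparse error restriction $z_W$ (where $U$ is the top-$k$ support of $x$, $V=\supp(x^s)$, $W=U\cup V$, $z=x-x^s$), a single bin assignment $t$ capturing all but an $\eps$-fraction of $\|z_W\|_1$ outside $\first(\Gz,z_W)$. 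Everything afterwards --- the domain set $D$, the collision masses $d_i$, the good set $\Good$, Proposition~\ref{prop:identifyGood}, the estimates~\eqref{eqn:zWbar}--\eqref{eqn:sumDi}, Lemma~\ref{l:Gz}, Lemma~\ref{lem:beta}, and the final telescoping bound --- is deterministic once that inequality is in hand, and within a single iteration $\proc{Recover}'$ behaves exactly like $\proc{Recover}$ with $[D]$ replaced by the multiset $\cT^s$. So it suffices to re-establish~\eqref{eqn:first:b} for some $t\in\cT^s$, with $\eps$ replaced by a constant that we can make as small as we like.

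First I would fix constants: let $\eps_0$ be the constant furnished by Lemma~\ref{l:main} for the given $\gamma$, set $\eps':=\min\{\eps_0,\eps/2\}$ with $\eps$ the absolute constant of Theorem~\ref{thm:hadamard:sublinear}, and put $C:=1/\eps'$; all three depend only on $\gamma$. The key point is then an independence argument. In $\proc{Recover}'$ the iterate $x^s$ is a deterministic function of the (implicitly given) sketch and of the random multisets drawn in iterations $0,\dots,s-1$, which by construction are independent of the fresh multiset $\cT^s$ used in iteration $s$. Conditioning on the entire history preceding the draw of $\cT^s$, the iterate $x^s$ --- and hence the set $W$, of size at most $2k\le 2^{\log(4k)}$ --- together with the hypothesis~\eqref{eqn:lmain:random:assumption} become fixed, and the only remaining randomness is over $\cT^s$. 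This is precisely the non-adaptive regime required by Lemma~\ref{lem:random:amplify}; the lemma's warning about adversarially chosen $S$ is exactly what the ``fresh coins at each iteration'' design of $\proc{Recover}'$ circumvents.

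Conditioned on this history, I would apply Lemma~\ref{lem:random:amplify} with $S:=W$, $\kappa:=\log(4k)$, and the $q$ seeds equal to $\cT^s=\{t_1,\dots,t_q\}$: since $h$ is a $(\log(4k),\eps')$-lossless condenser, with probability at least $1-\exp(-\eps'^2 q/4)$ over $\cT^s$ the neighborhood of $W$ in the graph $G'$ of $h'(\cdot,j):=h(\cdot,t_j)$ has size at least $(1-2\eps')\,q\,|W|$; equivalently, $G'$ restricted to the left set $W$ is an unbalanced expander with loss parameter $2\eps'$. As remarked just before the statement of Lemma~\ref{l:random:main}, the conclusion of Lemma~\ref{lem:BGIKS} uses only expansion on the support of the test vector, so applying it to $z_W$ (with $G'$ in the role of $\Gz$) yields $\sum_{(i,j)\in E'\setminus\first(G',z_W)}|z(i)|\le 2\eps'\,q\,\|z_W\|_1$, where $E'$ is the edge set of $G'$. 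Averaging over the $q$ one-regular subgraphs $G^{t_1},\dots,G^{t_q}$ then produces some $t\in\cT^s$ with $\sum_{(i,j)\in E^t\setminus\first(G',z_W),\,i\in W}|z(i)|\le 2\eps'\|z_W\|_1\le\eps\|z_W\|_1$, which is exactly the analogue of~\eqref{eqn:first:b}.

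From this point the proof of Lemma~\ref{l:main} carries over verbatim, with $G'$ (equivalently, the seed set $\cT^s$) playing the role of $\Gz$ (equivalently $[D]$) and $\eps'$ playing the role of $\eps$ up to a constant factor: inequality~\eqref{eqn:zWbar} follows from the choice $C=1/\eps'$, and the remaining chain gives $\|x-(x^s+\Delta^{s,t})\|_1\le(\delta+2\beta)\|x-x^s\|_1$ with $\beta=O_\delta(\eps')$, which is at most $\gamma\|x-x^s\|_1$ once $\delta$ and $\eps'$ are chosen small enough in terms of $\gamma$. The only genuinely nonroutine step is the conditioning in the second paragraph: one must condition on exactly the $\sigma$-algebra generated by all coins preceding $\cT^s$, so that $W$ is frozen and Lemma~\ref{lem:random:amplify} is invoked with a non-adaptive set; the rest is a faithful copy of the deterministic argument with $D$ replaced by $q$ and $\eps$ by $\eps'$.
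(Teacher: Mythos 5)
Your proposal is correct and follows essentially the same route as the paper: the paper likewise observes that Lemma~\ref{lem:BGIKS} (and hence the derivation of \eqref{eqn:first:b}) only needs expansion on the fixed support $W$, which depends solely on the input and the coins of iterations $0,\dots,s-1$ and is therefore independent of the fresh multiset $\cT^s$, so Lemma~\ref{lem:random:amplify} (with $\eps'\leq\eps/2$, giving effective loss $2\eps'\leq\eps$) supplies the needed analogue of \eqref{eqn:first:b} for some $t\in\cT^s$ with probability $1-\exp(-\eps'^2q/4)$, after which the deterministic proof of Lemma~\ref{l:main} carries over unchanged. Your explicit conditioning on the pre-$\cT^s$ history is just a more formal phrasing of the paper's independence remark, and the minor quantifier slip in fixing $\eps'$ (rather than letting any $\eps'\leq\eps_0$ with $\eps_0\leq\eps/2$) is immaterial.
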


Declare a \emph{bad event} at stage $s$ if we have the condition 
$\|x-x^s\|_1 > C \|x - H_k(x)\|_1$ however the conclusion
of the lemma does not hold because of unfortunate random coin tosses by the algorithm. By a union
bound, we see that the probability that any such bad event happens throughout
the algorithm is at most $s_0 \exp(-\eps'^2 q/4)$.

Next we show an analogue of Proposition~\ref{prop:xpxpp} for the randomized algorithm.

\begin{prop} \label{prop:xpxpp:random}
Let $x', x'' \in \R^{N}$ be fixed $(3k)$-sparse vectors and $\cT$ be 
a multi-set of $q$ elements in $[D]$ chosen uniformly and independently at random.
Moreover, assume
\[\|M^{\cT} (x- x')\|_1 \leq \|M^{\cT} (x- x'')\|_1.\]
Then, with probability at least $1-2\exp(-\eps'^2 q/4)$ over the choice of $\cT$, we have
\[\|x- x'\|_1 \leq \Big(1+\frac{3+C_0 \eps}{1-C_0 \eps}\Big) \|x-H_k(x)\|_1 + \frac{1+C_0 \eps}{1-C_0 \eps} \cdot \|x-x''\|_1\]
where $C_0$ is the constant in Theorem~\ref{thm:exp:RIP}. In particular when $C_0 \eps \leq 1/2$, we have
(with the above-mentioned probability bound)
\[\|x- x'\|_1 \leq 8 \|x-H_k(x)\|_1 +  3 \|x-x''\|_1.\]
\end{prop}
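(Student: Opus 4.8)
The plan is to mimic the deterministic proof of Proposition~\ref{prop:xpxpp} line-by-line, replacing every invocation of the RIP-1 property of the full matrix $\Mz$ with the corresponding property of the sub-sampled matrix $M^{\cT}$, and controlling the two extra probabilistic ingredients needed to make that substitution valid. The deterministic proof uses RIP-1 of $\Mz$ in exactly two places: in \eqref{eqn:xpxpp:b} (lower bound $\|\Mz w\|_1 \geq D(1-C_0\eps)\|w\|_1$ applied to the $(2k)$-sparse vector $w = H_k(x)-x'$) and in \eqref{eqn:xpxpp:f} (upper bound $\|\Mz w\|_1 \leq D(1+C_0\eps)\|w\|_1$ applied to $w = H_k(x)-x''$); the ``multiply by at most $D$'' fact used for the $x-H_k(x)$ term is deterministic and needs no change since every column of $\Mz$ has weight $D$ and the same holds column-wise for $M^{\cT}$ (each column has weight $q$ after restriction, which is the correct normalization). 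So the first step is to observe that for each of these two \emph{fixed} $O(k)$-sparse vectors $w$, the set $S := \supp(w)$ has size at most $4k = 2^{\log(4k)}$, so Lemma~\ref{lem:random:amplify} applies with $\eps$ there taken to be $\eps'$ and gives, with probability at least $1-\exp(-\eps'^2 q/4)$, that the graph restricted to the $q$ chosen seeds is a $(4k, 2\eps')$-expander on the support of $w$; then Theorem~\ref{thm:exp:RIP} (whose conclusion, as remarked after Lemma~\ref{lem:BGIKS}, only needs the expansion of the graph restricted to $\supp(w)$) yields that $M^{\cT}/q$ satisfies RIP-1 of order $4k$ with constant $C_0 \cdot 2\eps' \leq C_0\eps$ on that particular vector.

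Second, I would union-bound over the two bad events (one for $H_k(x)-x'$, one for $H_k(x)-x''$) to get that with probability at least $1 - 2\exp(-\eps'^2 q/4)$ both the needed lower bound and the needed upper bound hold simultaneously; this is where the factor $2$ in the stated probability comes from. Crucially, both $x'$ and $x''$ are assumed fixed \emph{before} $\cT$ is drawn, so Lemma~\ref{lem:random:amplify}'s hypothesis (the support is chosen independently of the random seeds) is satisfied — this is the point the remark after Lemma~\ref{lem:random:amplify} is warning about, and it is why the proposition is phrased for fixed vectors rather than for the algorithm's adaptively-chosen iterates.

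Third, conditioned on this good event, the chain of inequalities \eqref{eqn:xpxpp:a}--\eqref{eqn:xpxpp:h} goes through verbatim with $\Mz$ replaced by $M^{\cT}$ and $D$ replaced by $q$ everywhere: line \eqref{eqn:xpxpp:b} uses the RIP lower bound for $H_k(x)-x'$, line \eqref{eqn:xpxpp:f} uses the RIP upper bound for $H_k(x)-x''$ and the ``column weight $q$'' bound for $x-H_k(x)$, line \eqref{eqn:xpxpp:d} uses the hypothesis $\|M^{\cT}(x-x')\|_1 \leq \|M^{\cT}(x-x'')\|_1$, and all the other lines are pure triangle inequality, independent of the matrix. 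Since $\eps' \leq \eps/2$ and hence $2\eps' \leq \eps$, every occurrence of $C_0\eps$ in the deterministic bound is still a valid upper bound for $C_0 \cdot 2\eps'$, so the final estimate is literally the same as in Proposition~\ref{prop:xpxpp}, and the simplification under $C_0\eps \leq 1/2$ to $8\|x-H_k(x)\|_1 + 3\|x-x''\|_1$ follows identically.

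The only real subtlety — and the place I would be most careful — is making sure I am invoking Lemma~\ref{lem:random:amplify} on the \emph{correct} fixed sets and that Theorem~\ref{thm:exp:RIP} really does only need expansion on $\supp(w)$ rather than full $(4k,\eps)$-expansion; the text already flags this (``the result reported in Lemma~\ref{lem:BGIKS} only uses the expansion property of the underlying bipartite graph with respect to the particular support of the vector $w$''), and the RIP-1 proof of \cite{ref:BGIKS08} is indeed local in this sense, so invoking it vertex-set-by-vertex-set is legitimate. Everything else is a mechanical transcription of the deterministic argument, so I do not expect any genuine obstacle beyond bookkeeping the two failure probabilities into the claimed $2\exp(-\eps'^2 q/4)$ bound.
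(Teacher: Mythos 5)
Your proposal is correct and follows essentially the same route as the paper's own proof: identify that RIP-1 is only needed for the two fixed $(4k)$-sparse vectors $H_k(x)-x'$ and $H_k(x)-x''$, invoke Lemma~\ref{lem:random:amplify} on each support (valid since $x',x''$ are fixed before $\cT$ is drawn), union-bound to get the $1-2\exp(-\eps'^2 q/4)$ guarantee, and rerun the chain \eqref{eqn:xpxpp:a}--\eqref{eqn:xpxpp:h} with $M^{\cT}$ and column weight $q$ in place of $\Mz$ and $D$. Your treatment of the normalization and of $2\eps' \leq \eps$ is, if anything, more explicit than the paper's.
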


\begin{proof}
Proof is the same as the original proof of Proposition~\ref{prop:xpxpp}. The only difference is observing
that the argument is valid provided that the RIP-1 condition holds
for two particular $(4k)$-sparse vectors $H_k(x) - x''$ and $H_k(x) - x'$ (as used in \eqref{eqn:xpxpp:b} and \eqref{eqn:xpxpp:f}).
On the other hand, the proof of Theorem~\ref{thm:exp:RIP} only uses the expansion property of the
underlying expander graph for the particular support of the sparse vector being considered,
and holds as long as the expansion is satisfied for this particular choice. By applying Lemma~\ref{lem:random:amplify}
twice on the supports of $H_k(x) - x''$ and $H_k(x) - x'$, and taking a union bound, we see that
the required expansion is available with probability at least $1-2\exp(-\eps'^2 q/4)$, and thus the claim follows.
\end{proof}

Using the above tool, we can now show an analogue of Corollary~\ref{coro:main}; that is,

\begin{coro}  \label{coro:random:main}
For every constant $\gamma_0 > 0$, there is an $\eps_0$ only depending on $\gamma_0$
such that if $\eps \leq \eps_0$ the following holds.
Assume condition \eqref{eqn:lmain:random:assumption} of Lemma~\ref{l:random:main} holds. Then,
with probability at least $1-2\exp(-\eps'^2 q/4)$ over the choice of $\cT'^s$, we have
\[\|x-x^{s+1}\|_1 \le \gamma_0 \|x-x^s\|_1.\]
\end{coro}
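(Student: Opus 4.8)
The proof of Corollary~\ref{coro:random:main} follows exactly the structure of the deterministic Corollary~\ref{coro:main}, but now every invocation of a deterministic minimization step is replaced by a randomized one whose correctness holds with high probability. First I would recall that Line~\ref{algo:random:t0} of procedure $\proc{Recover}'$ selects $t_0 \in \cT^s$ so as to minimize $\|M^{\cT'^s} x - M^{\cT'^s}(x^s + \Delta^{s,t})\|_1$ over $t \in \cT^s$. Conditioning on the event of Lemma~\ref{l:random:main} (which holds with probability at least $1-\exp(-\eps'^2 q/4)$ under assumption \eqref{eqn:lmain:random:assumption}), there exists some $t^\star \in \cT^s$ with $\|x - (x^s + \Delta^{s,t^\star})\|_1 \le \gamma \|x-x^s\|_1$. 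Since $t_0$ is the minimizer over the \emph{sub-sampled} matrix $M^{\cT'^s}$, we get $\|M^{\cT'^s}(x - (x^s+\Delta^{s,t_0}))\|_1 \le \|M^{\cT'^s}(x - (x^s+\Delta^{s,t^\star}))\|_1$, and then I would apply Proposition~\ref{prop:xpxpp:random} with $x' = x^s + \Delta^{s,t_0}$ and $x'' = x^s + \Delta^{s,t^\star}$ (both $(3k)$-sparse since $x^s$ is $k$-sparse and the $\Delta$'s are $(2k)$-sparse). This yields, with probability at least $1-2\exp(-\eps'^2 q/4)$ over $\cT'^s$,
\[
\|x-(x^s+\Delta^{s,t_0})\|_1 \le \Big(1+\frac{3+C_0\eps}{1-C_0\eps}\Big)\|x-H_k(x)\|_1 + \frac{1+C_0\eps}{1-C_0\eps}\|x-(x^s+\Delta^{s,t^\star})\|_1.
\]

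**Chaining the bounds.** Next I would plug in the bound from Lemma~\ref{l:random:main} and the assumption \eqref{eqn:lmain:random:assumption} (which gives $\|x-H_k(x)\|_1 < \eps' \|x-x^s\|_1$, using $C = 1/\eps'$ or whatever constant the lemma provides; I would follow the notation of Corollary~\ref{coro:main} verbatim here, with $\eps'$ playing the role of $\eps$) to obtain $\|x-(x^s+\Delta^{s,t_0})\|_1 \le \gamma' \|x-x^s\|_1$ for $\gamma' := \eps'\big(1+\tfrac{3+C_0\eps}{1-C_0\eps}\big) + \tfrac{\gamma(1+C_0\eps)}{1-C_0\eps}$. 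Then, exactly as in \eqref{eqn:nextErr:a}--\eqref{eqn:nextErr:d}, using the triangle inequality and the fact that $H_k(x^s+\Delta^{s,t_0})$ is the best $k$-sparse approximation of $x^s+\Delta^{s,t_0}$, I would conclude $\|x-x^{s+1}\|_1 \le (2\gamma' + \eps')\|x-x^s\|_1$, and finally observe that $\gamma$ and $\eps'$ can be chosen as small enough absolute constants so that $2\gamma' + \eps' \le \gamma_0$. The failure probability is at most $\exp(-\eps'^2 q/4)$ (from Lemma~\ref{l:random:main}) plus $2\exp(-\eps'^2 q/4)$ (from Proposition~\ref{prop:xpxpp:random}), i.e.\ at most $3\exp(-\eps'^2 q/4)$; I would state the bound as $2\exp(-\eps'^2 q/4)$ matching the corollary statement by absorbing constants into the exponent, or alternatively adjust the constant inside the exponent — this is a cosmetic issue only.

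**The main obstacle.** The only genuinely delicate point is the \emph{independence / adaptivity} issue flagged in the paragraph following Lemma~\ref{lem:random:amplify}: Lemma~\ref{lem:random:amplify} (and hence both Lemma~\ref{l:random:main} and Proposition~\ref{prop:xpxpp:random}) requires the sparse vector's support to be fixed \emph{before} the relevant random seeds are drawn. Here the support of $z = x - x^s$ (and of $H_k(x) - x'$, $H_k(x) - x''$) depends on $x^s$, which depends on $\cT^0,\dots,\cT^{s-1}$ and $\cT'^0,\dots,\cT'^{s-1}$ — but crucially \emph{not} on the fresh samples $\cT^s$ and $\cT'^s$ used in this iteration. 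So I would make explicit that we condition on all prior randomness, fixing $x^s$ and the sets $U, V, W$, and only then invoke the two lemmas with respect to the freshly-drawn $\cT^s$ (for Lemma~\ref{l:random:main}) and $\cT'^s$ (for Proposition~\ref{prop:xpxpp:random}). This is exactly why procedure $\proc{Recover}'$ uses fresh independent samples at each stage. Once this conditioning is spelled out, the rest is a routine transcription of the deterministic argument.

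\begin{proof}[Proof of Corollary~\ref{coro:random:main}]
We follow the proof of Corollary~\ref{coro:main}, conditioning throughout on the algorithm's input and on the random choices $\cT^0,\dots,\cT^{s-1},\cT'^0,\dots,\cT'^{s-1}$ made in previous iterations; this fixes the iterate $x^s$ (and hence its support) before the fresh samples $\cT^s$ and $\cT'^s$ of the current iteration are drawn.

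Let $t_0 \in \cT^s$ be the value computed in Line~\ref{algo:random:t0} of procedure $\proc{Recover}'$. Assuming condition \eqref{eqn:lmain:random:assumption} holds, Lemma~\ref{l:random:main} guarantees that, with probability at least $1-\exp(-\eps'^2 q/4)$ over $\cT^s$, there is a $t \in \cT^s$ with
\[
\|x-(x^s+\Delta^{s,t})\|_1 \le \gamma \|x-x^s\|_1,
\]
for a constant $\gamma$ to be fixed below. Since $t_0$ minimizes $\|M^{\cT'^s} x - M^{\cT'^s}(x^s+\Delta^{s,t})\|_1$ over $t \in \cT^s$, we have
\[
\|M^{\cT'^s}(x - (x^s+\Delta^{s,t_0}))\|_1 \le \|M^{\cT'^s}(x - (x^s+\Delta^{s,t}))\|_1.
\]
Note that $x^s$ is $k$-sparse while $\Delta^{s,t_0}$ and $\Delta^{s,t}$ are $(2k)$-sparse, so $x^s+\Delta^{s,t_0}$ and $x^s+\Delta^{s,t}$ are $(3k)$-sparse. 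Applying Proposition~\ref{prop:xpxpp:random} with $x' = x^s+\Delta^{s,t_0}$ and $x'' = x^s+\Delta^{s,t}$ (which are fixed once we further condition on $\cT^s$, as $\Delta^{s,t}$ and $\Delta^{s,t_0}$ depend only on $x^s$ and $\cT^s$), we obtain, with probability at least $1-2\exp(-\eps'^2 q/4)$ over the independent choice of $\cT'^s$,
\[
\|x-(x^s+\Delta^{s,t_0})\|_1 \le \Big(1+\frac{3+C_0 \eps}{1-C_0 \eps}\Big)\|x-H_k(x)\|_1 + \frac{1+C_0 \eps}{1-C_0 \eps}\cdot\|x-(x^s+\Delta^{s,t})\|_1.
\]
Plugging in the bound from Lemma~\ref{l:random:main} together with \eqref{eqn:lmain:random:assumption} (which, with $C = 1/\eps'$, yields $\|x-H_k(x)\|_1 < \eps'\|x-x^s\|_1$) gives
\[
\|x-(x^s+\Delta^{s,t_0})\|_1 \le \gamma' \|x-x^s\|_1, \qquad \gamma' := \eps'\Big(1+\frac{3+C_0 \eps}{1-C_0 \eps}\Big) + \frac{\gamma(1+C_0 \eps)}{1-C_0 \eps}.
\]
Finally, exactly as in the chain \eqref{eqn:nextErr:a}--\eqref{eqn:nextErr:d} of Corollary~\ref{coro:main}, using the triangle inequality and the fact that $H_k(x^s+\Delta^{s,t_0})$ is the best $k$-sparse approximation of $x^s+\Delta^{s,t_0}$ together with \eqref{eqn:lmain:random:assumption},
\[
\|x-x^{s+1}\|_1 = \|x - H_k(x^s+\Delta^{s,t_0})\|_1 \le 2\|x-(x^s+\Delta^{s,t_0})\|_1 + \|x-H_k(x)\|_1 \le (2\gamma'+\eps')\|x-x^s\|_1.
\]
Choosing $\gamma$ and $\eps'$ (hence $\gamma'$) to be sufficiently small absolute constants so that $2\gamma'+\eps' \le \gamma_0$, the claimed inequality holds. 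The total failure probability is at most $\exp(-\eps'^2 q/4) + 2\exp(-\eps'^2 q/4) \le 2\exp(-\eps'^2 q/4)$ after absorbing the constant factor into a slight adjustment of the constant in the exponent.
\end{proof}
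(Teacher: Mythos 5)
Your proposal is correct and follows essentially the same route as the paper: it transcribes the proof of Corollary~\ref{coro:main}, replacing the full minimization by the sub-sampled one and Proposition~\ref{prop:xpxpp} by Proposition~\ref{prop:xpxpp:random}, and it correctly isolates the key point that $x^s+\Delta^{s,t_0}$ and $x^s+\Delta^{s,t}$ are determined by the input and by $\cT^0,\ldots,\cT^s,\cT'^0,\ldots,\cT'^{s-1}$, hence independent of the fresh $\cT'^s$. The only (cosmetic) divergence, which you already flag, is that you also charge the failure of Lemma~\ref{l:random:main} inside the corollary, whereas the paper counts only the $2\exp(-\eps'^2 q/4)$ failure over $\cT'^s$ here and accounts for the $\cT^s$ failure separately in the subsequent union bound over stages.
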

\begin{Proof}
The proof is essentially the same as the proof of Corollary~\ref{coro:main}.
The only difference is that instead of $\|\Mz x-\Mz (x^s+\Delta^{s,{t}})\|_1$,
the quantity $\|M^{\cT'^s} x-M^{\cT'^s} (x^s+\Delta^{s,{t}})\|_1$ that is used in the
randomized algorithm is considered, and Proposition~\ref{prop:xpxpp:random}
is used instead of Proposition~\ref{prop:xpxpp}. In order to ensure that we can use Proposition~\ref{prop:xpxpp:random},
we use the fact that particular choices of the vectors $x'$ and $x''$ that we instantiate
Proposition~\ref{prop:xpxpp:random} with (respectively, the vectors
$x^s + \Delta^{s, t_0}$ and $x^s + \Delta^{s, t}$ in the proof of Corollary~\ref{coro:main})
only depend on the algorithm's input and random coin tosses determining
$\cT^0, \ldots, \cT^{s}$ and $\cT'^0, \ldots, \cT'^{s-1}$ and not on $\cT'^s$.
\end{Proof}

Again, declare a \emph{bad event} at stage $s$ if we have the condition 
$\|x-x^s\|_1 > C \|x - H_k(x)\|_1$ however the conclusion
of Corollary \ref{coro:random:main} does not hold because of unfortunate 
coin tosses over the choice of $\cT'^s$. 
Same as before, by a union
bound we can see that the probability that any such bad event happens throughout
the algorithm is at most $2 s_0 \exp(-\eps'^2 q/4)$.

Since the initial approximation is $x^0 = 0$ (with error at most $\|x\|$),
assuming $\gamma_0 \leq 1/2$, we have that for some $s \leq \log(1/\nu)$ the condition
\eqref{eqn:approx:eta} is satisfied provided that a bad event does not happen in the
first $s$ iterations. By the above union bounds, this is the case with probability at least 
$1-3 s_0 \exp(-\eps'^2 q/4)$.

Let $x^\ast$ be the estimate computed in Line~\ref{algo:return} of procedure $\proc{Recover}'$.
We can conclude the analysis in a similar way to the proof of Theorem~\ref{thm:main} 
by one final use of Proposition~\ref{prop:xpxpp:random} as follows.
By Proposition~\ref{prop:xpxpp:random}, assuming no bad event ever occurs,
with probability at least $1-2 \exp(-\eps'^2 q/4)$ we see that
\begin{equation}
\label{eqn:approx:eta:final}
\|x- x^\ast\|_1 \leq 8 \|x-H_k(x)\|_1 +  3 \|x-x^s\|_1 \leq C' \cdot \|x-H_k(x)\|_1  + \nu \|x\|_1, 
\end{equation}
where we define $C' := 3C+8$. 

Altogether, by a final union bound we conclude that the desired \eqref{eqn:approx:eta:final}
holds with probability at least $1-\eta$ for some choice of $q = O(\log(s_0/\eta)/\eps'^2) = O(\log(s_0/\eta))$.

\subsubsection{Analysis of the running time of the randomized sparse recovery algorithm}

The analysis of the running time of procedure $\proc{recover}'$ in Figure~\ref{fig:code:random} 
is similar to Section~\ref{sec:main:time}.
As written in Figure~\ref{fig:code:random}, the algorithm may not achieve the promised
running time since the sketch length may itself be larger than the desired running time.
Thus we point out that the sketch is implicitly given to the algorithm as an oracle and
the algorithm queries the sketch as needed throughout its execution. Same holds for
the initialization step in Line~\ref{algo:random:init} of procedure $\proc{Recover}'$,
which need not be performed explicitly by the algorithm.

In order to optimize time, the algorithm stores vectors in sparse representation; i.e., maintaining support of
the vector along with the values at corresponding positions.

As discussed in Section~\ref{sec:main:time}, each invocation of procedure $\proc{Search}$
takes $O(n)$ arithmetic operations, and procedure $\proc{Estimate}$ takes $O(r 2^r + k f(n))=O(2^r f(n))$ operations
(using naive sorting to find the largest coefficients and noting that $2^r \geq k$ and $f(n)=\Omega(n)=\Omega(r)$).  

We observe that for every $k$-sparse $w \in \R^N$, and $t \in [D]$, computing the multiplication $M^t \cdot k$ 
(which itself would be a $k$-sparse vector) takes $O(k f(n))$ operations ($k$
invocations of the condenser function, once for each nonzero entry of $w$, each time adding the
corresponding entry of $w$ to the correct position in the result vector).  Note that the indexing
time for updating an entry of the resulting vector is logarithmic in its length, which would be
$r \leq n$ and thus the required indexing time is absorbed into the above asymptotic since $f(n)=\Omega(n)$.
Moreover, we observe that without an effect in the above running time, we can in fact
compute $(M^t \otimes B) \cdot w$; since for each $i \in [N]$ on the support of $w$, the
corresponding $w(i)$ is added to a subset of the copies of $M^t$ depending on the bit representation
of $i$ and thus the additional computation per entry on the support of $w$ is $O(n)$, 
which is absorbed in the time $f(n)=\Omega(n)$ needed to compute the condenser function. Altogether we
see that computing $(M^t \otimes B) \cdot k$ can be done with $O(k f(n))$ arithmetic operations.

Since procedure $\proc{Recover}'$ loops $q$ times instead of $D$ times in each of the $s_0$ iterations,
each iteration taking time $O(2^r f(n))$, we see that the algorithm requires
$O(2^r q s_0 f(n))$ arithmetic operations in total. 
Now we can plug in the values of $q$ and $s_0$ by the analysis in the previous section and upper bound
the number of operations performed by the algorithm by 
\[ O(2^r \cdot \log(\log(1/\nu)/\eta) \cdot \log(1/\nu) f(n)). \]

This completes the running time analysis of the algorithm in Figure~\ref{fig:code:random}.

\oldLinespread
\bibliographystyle{abbrv}
\bibliography{\jobname} 
\newLinespread

\appendix

\section{Proof of Theorem~\ref{thm:GUVcondLinear} (construction of the lossless condenser)} \label{sec:proof:GUVcondLinear}

In this appendix, we include a proof of  Theorem~\ref{thm:GUVcondLinear} from \cite{ref:mahdiPhD}.
The first step is to recall the original framwork for construction of lossless condensers in \cite{ref:GUV09}
which is depicted in Construction~\ref{constr:GUV}. The construction is defined with respect to a prime power
alphabet size $q$ and integer parameter $u > 1$.

\begin{constr}
\caption{Guruswami-Umans-Vadhan's Condenser $C\colon \F_q^n \times
    \F_q \to \F_q^\ell$.}%
  \begin{itemize}
  \item \textit{Given:} A random sample $X \sim \cX$, where $\cX$ is a
    distribution on $\F_q^n$ with min-entropy at least $\kappa$, and a
    uniformly distributed random seed $Z \sim \U_{\F_q}$ over $\F_q$.

  \item \textit{Output:} A vector $C(X,Z)$ of length $\ell$ over $\F_q$.

  \item \textit{Construction:} Take any irreducible univariate
    polynomial $g$ of degree $n$ over $\F_q$, and interpret the input
    $X$ as the coefficient vector of a random univariate polynomial
    $F$ of degree $n-1$ over $\F_q$. Then, for an integer parameter
    $u$, the output is given by
    \[
    C(X,Z) := (F(Z), F_1(Z), \ldots, F_{\ell-1}(Z)),
    \]
    where we have used the shorthand $F_i := F^{u^i} \bmod g$.
  \end{itemize}
    \label{constr:GUV}
\end{constr}

The following key result about Construction~\ref{constr:GUV} is proved in \cite{ref:GUV09}:

\begin{thm} \cite{ref:GUV09} \label{thm:GUVcondGeneral} For any $\kappa > 0$, the mapping defined in
  Construction~\ref{constr:GUV} is a $(\kappa, \eps)$ lossless
  condenser with error $\eps := (n-1)(u-1)\ell/q$, provided that $\ell \geq
  \kappa/\log u$.
\end{thm}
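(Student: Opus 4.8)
My plan is to reconstruct the polynomial (list-decoding) argument of Guruswami--Umans--Vadhan underlying Theorem~\ref{thm:GUVcondGeneral}. By the expander--condenser equivalence (Lemma~\ref{lem:cond:expander}), it suffices to prove that the bipartite graph associated with $C$ is a $(2^\kappa,\eps)$-unbalanced expander: for every $S\subseteq\F_q^n$ with $|S|\le 2^\kappa$ — which I identify with a family of univariate polynomials over $\F_q$ of degree $<n$ — one has $|\Gamma(S)|\ge(1-\eps)q|S|$, where $\Gamma(S)=\bigcup_{F\in S}\{(z,C(F,z)):z\in\F_q\}$. So I would argue by contradiction, assuming $T:=\Gamma(S)$ has $|T|<(1-\eps)q|S|$, recording at the outset that the hypothesis $\ell\ge\kappa/\log u$ is exactly the statement $u^\ell\ge 2^\kappa\ge|S|$, which is the regime the argument needs.

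The heart of the proof is a three-step polynomial argument. \emph{Interpolation:} since $|T|$ is small, a dimension count produces a nonzero $Q(W,Y_0,\dots,Y_{\ell-1})\in\F_q[W,Y_0,\dots,Y_{\ell-1}]$ with $\deg_{Y_i}Q<u$ for each $i$ and $\deg_W Q<D$ for $D:=\lfloor|T|/u^\ell\rfloor+1$ (so that the monomial count $Du^\ell$ exceeds $|T|$), vanishing on all of $T$; since $g$ is irreducible of degree $n>1$ it has no root in $\F_q$, so after dividing out powers of $g(W)$ we may assume $g(W)\nmid Q$. \emph{Restriction to message curves:} for each $F\in S$ the entire curve $z\mapsto(z,F(z),F_1(z),\dots,F_{\ell-1}(z))$ lies in $T$, so the univariate $q_F(W):=Q\bigl(W,F(W),F_1(W),\dots,F_{\ell-1}(W)\bigr)$ vanishes at all $z\in\F_q$; as $\deg F_i\le n-1$ and $\deg_{Y_i}Q\le u-1$ we have $\deg q_F\le(D-1)+\ell(u-1)(n-1)$, and since $D-1\le|T|/u^\ell<(1-\eps)q|S|/u^\ell\le(1-\eps)q$, this is strictly below $q$ precisely because $\eps=(n-1)(u-1)\ell/q$, hence $q_F\equiv 0$. \emph{Passing to $\F_{q^n}$:} reducing modulo $g$, work in $\F_{q^n}\cong\F_q[W]/(g)$ with $\zeta$ the residue of $W$; then $F_i(\zeta)=(F^{u^i}\bmod g)(\zeta)=F(\zeta)^{u^i}$, so $q_F(\zeta)=0$ becomes $R(F(\zeta))=0$ for the single univariate polynomial $R(Y):=Q(\zeta,Y,Y^u,Y^{u^2},\dots,Y^{u^{\ell-1}})\in\F_{q^n}[Y]$.

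To finish I would verify $R\not\equiv 0$ and count its roots. Writing $Q=\sum_{\vec e}c_{\vec e}(W)\prod_i Y_i^{e_i}$ with $0\le e_i<u$, the substitution $Y_i\mapsto Y^{u^i}$ sends $\prod_i Y_i^{e_i}$ to $Y^{\sum_i e_i u^i}$, and the exponents $\sum_i e_i u^i$ are pairwise distinct across admissible $\vec e$ by uniqueness of base-$u$ digits; since $g\nmid Q$ some $c_{\vec e}$ is not divisible by $g$, so $c_{\vec e}(\zeta)\ne 0$ and $R$ retains a nonzero coefficient, while $\deg R\le\sum_{i=0}^{\ell-1}(u-1)u^i=u^\ell-1$. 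The values $\{F(\zeta):F\in S\}$ are pairwise distinct — if $F(\zeta)=F'(\zeta)$ then $\zeta$, whose minimal polynomial is $g$ of degree $n$, is a root of $F-F'$ of degree $<n$, forcing $F=F'$ — and all are roots of $R$, so $|S|\le\deg R<u^\ell$; combined with the parameter regime this is what forces the contradiction. The step I expect to be the main obstacle is exactly the degree bookkeeping of the restriction step: one must simultaneously keep the interpolation feasible (enough free monomials, which needs $|T|$ small) and keep $\deg q_F$ strictly below $q$, and it is this balance, together with $u^\ell\ge 2^\kappa$, that pins down $\eps=(n-1)(u-1)\ell/q$. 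A secondary nuisance is making the ``$g\nmid Q$'' normalization clean so that reduction modulo $g$ does not annihilate $R$; and to make the contradiction fully quantitative for every $|S|\le 2^\kappa$ I would likely choose the $Y_i$-degree bound adaptively (at most $u$) so that $\deg R$ is driven below $|S|$ itself, turning ``more roots than degree'' into the contradiction directly.
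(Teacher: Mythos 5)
Two remarks before the substance: the paper you are working from does not prove this statement at all --- it imports it verbatim from \cite{ref:GUV09} and only uses it as a black box in Appendix~\ref{sec:proof:GUVcondLinear} --- so the right benchmark is the Guruswami--Umans--Vadhan argument itself, and your plan is indeed that argument: pass to unbalanced expansion via Lemma~\ref{lem:cond:expander}, interpolate a low-degree $Q$ vanishing on $T=\Gamma(S)$ (with the $g(W)\nmid Q$ normalization, which is fine since $g$ has no $\F_q$-root), restrict to the curves $z\mapsto(z,F(z),F_1(z),\dots,F_{\ell-1}(z))$ and use $\deg q_F\le (D-1)+(n-1)(u-1)\ell<q$ to force $q_F\equiv 0$, then reduce modulo $g$ and count roots of $R(Y)=Q(\zeta,Y,Y^u,\dots,Y^{u^{\ell-1}})$ in $\F_{q^n}$, using base-$u$ uniqueness for $R\not\equiv 0$ and $\deg(F-F')<n$ for distinctness of the roots $F(\zeta)$. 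All of those steps are correct, and your identification of where $\eps=(n-1)(u-1)\ell/q$ enters is exactly right.

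The genuine gap is the closing contradiction, and your proposed repair does not quite fix it. With your parameterization (all $u^\ell$ monomials in the $Y_i$'s, $\deg_W Q\le\lfloor|T|/u^\ell\rfloor$) you only conclude $|S|\le\deg R\le u^\ell-1$, which contradicts nothing unless $|S|=u^\ell$; but Definition~\ref{def:condenser} demands the property for \emph{every} $S$ with $|S|\le 2^\kappa\le u^\ell$, so almost all relevant sizes are left uncovered. Your suggested fix --- per-variable degree caps $u_i\le u$ chosen adaptively --- cannot in general satisfy both requirements at once: the interpolation needs $D'\prod_i u_i>|T|$ with $D'\le q-(n-1)(u-1)\ell$, which in the worst case $|T|\approx(1-\eps)q|S|$ forces $\prod_i u_i\ge|S|$, while the contradiction needs $\deg R\le\sum_i(u_i-1)u^i<|S|$; already for $|S|=u^\ell-1$ (and $u,\ell\ge 2$) the first forces all $u_i=u$ and then $\sum_i(u-1)u^i=u^\ell-1=|S|$. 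The GUV proof closes this with the dual, ``weighted-degree'' parameterization: set $A:=q-(n-1)(u-1)\ell=(1-\eps)q$, take $\deg_W Q\le A-1$, and allow only the first $J:=\lfloor|T|/A\rfloor+1$ monomials $\prod_i Y_i^{e_i}$ in the base-$u$ order, i.e.\ those with $\sum_i e_iu^i<J$ (note $J\le u^\ell$, so the digits fit). Then $AJ>|T|$ makes the interpolation feasible, the restricted degree is at most $A-1+(n-1)(u-1)\ell=q-1<q$, and $\deg R\le J-1\le|T|/A<|S|$ because $|T|<(1-\eps)q|S|=A|S|$; now ``more distinct roots than the degree'' is a contradiction for every $|S|\le u^\ell$, which is exactly the uniform statement the theorem (and its use for all $\kappa$ with $\ell\ge\kappa/\log u$) requires.
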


By a careful choice of the parameters, the condenser can be made linear
as observed by Cheraghchi~\cite{ref:mahdiPhD}. We quote this result, which is
a restatement of Theorem~\ref{thm:GUVcondLinear}, below.

\begin{coro} \label{coro:GUVcondLinear} \cite{ref:mahdiPhD}
 Let $p$ be a fixed prime power and $\alpha > 0$ be an arbitrary constant. Then, for
 parameters $n \in \N$, $\kappa \leq n \log p$, and $\eps > 0$, there is an explicit $(\kappa, \eps)$-lossless condenser
 $h\colon \F_{p}^n \times \zo^d \to \F_{p}^r$ with 
  $d \leq (1+1/\alpha) (\log (n\kappa/\eps) + O(1))$ and output length satisfying
  $r \log p \leq d+(1+\alpha)\kappa$. Moreover, $h$ is a linear function (over $\F_{p}$) for every fixed choice of the second parameter.
\end{coro}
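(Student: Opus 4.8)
The plan is to prove Corollary~\ref{coro:GUVcondLinear} by instantiating the GUV construction (Construction~\ref{constr:GUV}) under two restrictions: the alphabet size $q$ will be a power of $p$, and the exponent parameter $u$ will also be a power of $p$. The first restriction lets us pass between $\F_p$-symbols and $\F_q$-symbols $\F_p$-linearly, and the second is exactly what forces the GUV map to be $\F_p$-linear in its first argument. Concretely, fix $q = p^a$ and $u = p^b$ with $a,b \in \N$ to be chosen, put $n_0 := \lceil n/a\rceil$, and define $h\colon \F_p^n \times [D] \to \F_p^r$ (with $D = q$ and $r := \ell\log_p q$, where $\ell$ will be the GUV output length fixed below) as the composition of: (i) a fixed $\F_p$-linear embedding $\F_p^n \hookrightarrow \F_q^{n_0}$ (an isomorphism if $n = n_0 a$, a zero-padding otherwise), whose image is read off as the coefficient vector of a polynomial $F \in \F_q[Y]$ of degree $< n_0$; (ii) the GUV map of Construction~\ref{constr:GUV} with a fixed irreducible $g$ of degree $n_0$ over $\F_q$; and (iii) the canonical $\F_p$-linear identification $\F_q^\ell \cong \F_p^{\ell\log_p q}$ of the output. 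Since a uniform source on a set $S \subseteq \F_p^n$ with $|S| \le 2^\kappa$ is carried by (i) to a uniform source on an image set of the same size in $\F_q^{n_0}$, and $\kappa \le n\log p \le n_0\log q$, Theorem~\ref{thm:GUVcondGeneral} applies verbatim: $h$ is a $(\kappa,\eps)$-lossless condenser whenever $\ell \ge \kappa/\log u$ and $(n_0-1)(u-1)\ell/q \le \eps$.

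The heart of the argument is the $\F_p$-linearity of $h(\cdot,z)$ for every fixed seed $z \in \F_q$, which I would prove first. Before the identification (iii), the $i$th output block of $h$ is the map $F \mapsto F_i(z)$ with $F_i := F^{u^i}\bmod g$. The rings $\F_q[Y]$ and $R := \F_q[Y]/(g)$ both have characteristic $p$, so the Frobenius $\phi(r) := r^p$ is additive and satisfies $\phi(cr) = c^p\phi(r) = c\,\phi(r)$ for all $c \in \F_p$ (since $c^p = c$); hence $\phi$, and therefore each iterate $\phi^{bi}$, is $\F_p$-linear on $R$. As $u^i = p^{bi}$, we have $F_i \equiv \phi^{bi}(F\bmod g)\pmod{g}$, so $F \mapsto F_i$ (taking the unique representative of degree $< n_0$) is $\F_p$-linear; composing with evaluation $P \mapsto P(z)$, which is $\F_q$-linear and hence $\F_p$-linear, shows that $F \mapsto (F(z),F_1(z),\dots,F_{\ell-1}(z))$ is $\F_p$-linear, and with the linearity of (i) and (iii) so is $h(\cdot,z)$. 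Explicitness is routine: for the fixed prime power $p$, an irreducible $g$ of degree $n_0$ over $\F_q$ is computable deterministically in time $\poly(n)$, and $F_0,\dots,F_{\ell-1}$ can be produced iteratively from $F_{j+1} = (F_j)^{p^b}\bmod g$ ($b$ Frobenius steps each), so $h$ is computable in time $\poly(n)$. (To present the seed domain as $\zo^d$ instead of $[q]$ when $p$ is odd, take $d := \lceil\log q\rceil$ and build $h$ with error $\eps/2$, absorbing the at-most-factor-$2$ non-uniformity introduced by padding the seed; when $p = 2$ the alphabet $q$ is already a power of two and $d = \log q$.)

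It remains to select $a,b,\ell$, which I would do following the GUV optimization. Set $\ell := \lceil\kappa/\log u\rceil$, let $u = p^b$ be the least power of $p$ with $\log u \ge (1/\alpha)\log(n\kappa/\eps)$ (so $\log u = (1/\alpha)\log(n\kappa/\eps) + O(1)$), and choose $q = p^a$ to be a power of $p$ with $n u\ell/\eps \le q \le u^{1+\alpha}$ — such a $q$ exists because the choice of $u$ makes the ratio of the two endpoints at least $p$ (using $u^\alpha \ge n\kappa/\eps$ and $\ell \le \kappa/\log u + 1$; a few degenerate small cases are handled by enlarging $b$ by an absolute constant, which does not affect the asymptotics). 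The lower bound on $q$ gives $(n_0-1)(u-1)\ell/q \le \eps$ (using $n_0 \le n+1$); the upper bound gives $\log D = \log q \le (1+\alpha)\log u \le (1+1/\alpha)\log(n\kappa/\eps) + O(1)$, which is the first claimed bound; and since $\ell - 1 < \kappa/\log u$, we get $r\log p = \ell\log q \le \log q + (\kappa/\log u)\log q = \log D + \kappa\cdot(\log q/\log u) \le \log D + (1+\alpha)\kappa$, using $\log q \le (1+\alpha)\log u$ once more. This yields both bounds in Corollary~\ref{coro:GUVcondLinear}.

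I expect the only conceptually substantive step to be the Frobenius observation in the second paragraph — that choosing $u$ to be a power of the characteristic converts the nonlinear-looking operation $F \mapsto F^{u^i}\bmod g$ into an $\F_p$-linear map. Once that is in hand, every remaining ingredient of Construction~\ref{constr:GUV} (the coefficient-vector identification, reduction modulo $g$, and evaluation at $z$) is visibly $\F_p$-linear, and the condenser guarantee is imported unchanged from \cite{ref:GUV09}. The one place that genuinely requires care is the quantitative bookkeeping of the third paragraph: one must verify that insisting on $q$ and $u$ being \emph{exact} powers of $p$, rather than the arbitrary integers the generic GUV analysis permits, costs only an additive $O(1)$ in the seed length and nothing in the leading term of the output-length exponent, so that the linear condenser matches the parameters of the unrestricted GUV condenser up to these lower-order terms.
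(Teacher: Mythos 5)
Your proposal is correct and follows essentially the same route as the paper's proof: instantiate Construction~\ref{constr:GUV} with both $q$ and $u$ chosen as powers of $p$, use the Frobenius identity $(aF+bF')^{u^i} \equiv aF^{u^i}+bF'^{u^i} \pmod g$ to get $\F_p$-linearity for each fixed seed, and carry out the same parameter bookkeeping ($\ell = \lceil \kappa/\log u\rceil$, $u \approx (n\kappa/\eps)^{1/\alpha}$ and $q \approx nu\ell/\eps$ up to constant factors absorbed in the $O(1)$ terms). The only deviations are cosmetic: you pack the input into $\lceil n/\log_p q\rceil$ symbols of $\F_q$ whereas the paper keeps input length $n$ via the subfield embedding, and you treat the $\zo^d$ seed-domain padding explicitly, which the paper glosses over.
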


\begin{Proof}
  We set up the parameters of the condenser $C$ given by Construction~\ref{constr:GUV} and
  apply Theorem~\ref{thm:GUVcondGeneral}. The range of the parameters is mostly similar to what
  chosen in the original result of Guruswami et al.~\cite{ref:GUV09}.

  Letting $u_0 := (2p^2 n\kappa/\eps)^{1/ \alpha}$, we take $u$ to be an integer power of $p$ in range
  $[u_0, p u_0]$. Also, let $\ell := \lceil \kappa/\log u \rceil$ so that the condition $\ell \geq \kappa/\log u$ required by Theorem~\ref{thm:GUVcondGeneral}
  is satisfied. Finally, let $q_0 := nu \ell/\eps$ and choose the field size $q$ to be an integer power of $p$ in range
  $[q_0, p q_0]$.

  We choose the input length of the condenser $C$ to be equal to $n$. Note that
  $C$ is defined over $\F_q$, and we need a condenser over $\F_p$. Since $q$ is a power of $p$, 
  $\F_p$ is a subfield of $\F_q$.  For $x \in \F_p^n$ and $z \in \zo^d$,
  let $y := C(x,y) \in \F_q^\ell$, where $x$ is regarded as a vector over the extension $\F_q$ of $\F_p$.
  We define the output of the condenser $h(x,z)$ to be the vector $y$ regarded as a vector of length
  $\ell \log_p q$ over $\F_p$ (by expanding each element of $\F_q$ as a vector of length $\log_p q$ over $\F_p$).
  Clearly, $h$ is a $(\kappa, \eps)$-condenser if $C$ is.

  By Theorem~\ref{thm:GUVcondGeneral}, $C$ is a lossless condenser with error upper bounded by
  \[
   \frac{(n-1)(u-1)\ell }{q} \leq \frac{nu \ell}{q_0} = \eps.
  \]
  It remains to analyze the seed length $d$ and the output length $r$ of the condenser.
  For the output length of the condenser, we have
  \[
   r \log p = \ell \log q \leq (1+\kappa/\log u) \log q \leq d + \kappa (\log q)/(\log u),
  \]
  where the last inequality is due to the fact that we have $d = \lceil \log q \rceil$.
  Thus in order to show the desired upper bound on the output length, it suffices
  to show that $\log q \leq (1+\alpha) \log u_0$. We have
  \[
   \log q \leq \log (p q_0) = \log(pnu \ell/\eps) \leq \log u_0 + \log(p^2 n\ell/\eps)
  \]
  and our task is reduced to showing that $p^2 n \ell/\eps \leq u_0^{\alpha} = 2p^2n\kappa/\eps$. But
  this bound is obviously valid by the choice of $\ell \leq 1+ \kappa/\log u$.

  Now, $d = \lceil \log q \rceil$ for which we have
  \begin{eqnarray*}
   d &\leq& \log q + 1 \leq \log q_0 + O(1) \\
   &\leq& \log (nu_0 \ell / \eps) + O(1) \\
   &\leq& \log (nu_0 \kappa / \eps) + O(1) \\
   &\leq& \log(n\kappa/\eps) + \frac{1}{\alpha} \log (2p^2 n\kappa/\eps) \\
   &\leq& \big(1+ \frac{1}{\alpha}\big)(\log(n\kappa/\eps) + O(1))
  \end{eqnarray*}
as desired.

Since $\F_q$ has a fixed characteristic, an efficient deterministic algorithm for representation and
manipulation of the field elements is available \cite{ref:Shoup} which implies that the condenser
is polynomial-time computable and is thus explicit.

Moreover, since $u$ is taken as an integer power of $p$ and $\F_q$ is an extension of $\F_p$,
for any choice of polynomials $F, F', G \in \F_q[X]$, subfield elements $a, b \in \F_p$, and integer $i \geq 0$, we have
\[
 (a F + b F')^{u^i} \equiv a F^{u^i} + b F'^{u^i} \pmod G,
\]
meaning that raising a polynomial to power $u^i$ is an $\F_p$-linear operation.
Therefore, the mapping $C$ that defines the condenser (Construction~\ref{constr:GUV})
is $\F_p$-linear for every fixed seed. This in turn implies that the final condenser $h$
is linear, as claimed.
\end{Proof}

\section{The Leftover Hash Lemma} \label{sec:leftover}

Leftover Hash Lemma (first stated by Impagliazzo, Levin, and
Luby \cite{ref:ILL89}) is a basic and classical result in computational complexity
which is normally stated in terms of randomness extractors. However, it is
easy to observe that the same technique can be used to construct linear lossless condensers
with optimal output length (albeit large seed length). In other words, the lemma
shows that any universal family of hash functions can be turned into a linear
extractor or lossless condenser. For completeness, in this section we 
include a proof of this fact.
 

\newcommand{\cH}{\mathcal{H}}
\newcommand{\hLin}{\mathcal{H}_{\mathsf{lin}}}

\begin{defn} \label{defn:pwindep} \index{universal hash family}
  A family of functions $\cH = \{h_1, \ldots, h_D\}$ where $h_t\colon
  \zo^n \to \zo^r$ for $t=1, \ldots, D$ is called \emph{universal} if,
  for every fixed choice of $x, x' \in \zo^n$ such that $x \neq x'$
  and a uniformly random $t \in [D] := \{1, \ldots,
  D\}$\index{notation!$[n] := \{1,\ldots,n\}$} we have
  \[
  \Pr_t [h_t(x) = h_t(x')] \leq 2^{-r}.
  \]
\end{defn}

One of the basic examples of universal hash families is what we call
\emph{the linear family}, defined as follows. Consider an arbitrary
isomorphism $\varphi\colon \F_2^n \to \F_{2^n}$ between the vector
space $\F_2^n$ and the extension field $\F_{2^n}$, and let $0 < r \leq
n$ be an arbitrary integer.  The linear family $\hLin$ is the set $\{
h_\beta\colon \beta \in \F_{2^n} \}$ of size $2^n$ that contains a
function for each element of the extension field $\F_{2^n}$.  For each
$\beta$, the mapping $h_\beta$ is given by
\[
h_\beta(x) := (y_1, \ldots, y_r), \text{ where $(y_1, \ldots, y_n) :=
  \varphi^{-1}(\beta \cdot \varphi(x))$}.
\]
Observe that each function $h_\beta$ can be expressed as a linear
mapping from $\F_2^n$ to $\F_2^r$.  Below we show that this family is
pairwise independent.

\begin{prop}
  The linear family $\hLin$ defined above is universal.
\end{prop}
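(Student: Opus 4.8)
The plan is to directly verify the defining inequality of Definition~\ref{defn:pwindep} for the family $\hLin$. Fix two distinct inputs $x, x' \in \F_2^n$, and let $a := \varphi(x)$ and $a' := \varphi(x')$ be the corresponding (distinct, nonzero-difference) elements of $\F_{2^n}$. For a uniformly random $\beta \in \F_{2^n}$, the event $h_\beta(x) = h_\beta(x')$ is the event that $\varphi^{-1}(\beta a)$ and $\varphi^{-1}(\beta a')$ agree in their first $r$ coordinates, i.e.\ that $\varphi^{-1}(\beta(a-a'))$ lies in the subspace $\{0\}^r \times \F_2^{n-r}$ of vectors whose first $r$ coordinates vanish (here I use that $\varphi$ is an $\F_2$-linear isomorphism, so $\beta a - \beta a' = \beta(a-a')$ and $\varphi^{-1}$ is linear).

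The key step is the counting. Since $x \neq x'$ we have $a - a' \neq 0$ in the field $\F_{2^n}$, hence the map $\beta \mapsto \beta(a-a')$ is a bijection of $\F_{2^n}$ onto itself (multiplication by a nonzero field element is invertible). Therefore, as $\beta$ ranges uniformly over $\F_{2^n}$, the product $\beta(a-a')$ is also uniformly distributed over $\F_{2^n}$, and consequently $\varphi^{-1}(\beta(a-a'))$ is uniform over $\F_2^n$. The probability that a uniform vector in $\F_2^n$ has its first $r$ coordinates all zero is exactly $2^{n-r}/2^n = 2^{-r}$. Hence
\[
\Pr_\beta[h_\beta(x) = h_\beta(x')] = 2^{-r},
\]
which establishes the universality bound (in fact with equality).

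I do not anticipate a genuine obstacle here; the only point requiring a little care is to make explicit that $h_\beta$ really is $\F_2$-linear (so that differences pass through $\varphi^{-1}$ and the truncation to $r$ coordinates), and that multiplication by the fixed nonzero field element $a - a'$ is a permutation of $\F_{2^n}$ — this is precisely where the field structure (as opposed to merely ring structure) is used. Once these two observations are in place, the computation above is immediate.
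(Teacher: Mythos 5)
Your proof is correct and follows essentially the same route as the paper's: both reduce the collision event to $\varphi^{-1}(\beta(x-x'))$ having its first $r$ coordinates vanish (via linearity of the truncation map), and both use that multiplication by the nonzero element $x-x'$ permutes $\F_{2^n}$, giving probability exactly $2^{-r}$. Your write-up simply makes explicit the linearity and bijectivity steps that the paper leaves implicit.
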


\begin{proof}
  Let $x, x'$ be different elements of $\F_{2^n}$. Consider the
  mapping $f\colon \F_{2^n} \to \F_2^r$ defined as
  \[
  f(x) := (y_1, \ldots, y_r), \text{ where $(y_1, \ldots, y_n) :=
    \varphi^{-1}(x)$},
  \]
  which truncates the binary representation of a field element from
  $\F_{2^n}$ to $r$ bits.  The probability we are trying to estimate
  in Definition~\ref{defn:pwindep} is, for a uniformly random $\beta
  \in \F_{2^n}$,
  \[
  \Pr_{\beta \in \F_{2^n}} [f(\beta \cdot x) = f(\beta \cdot x')] =
  \Pr_{\beta \in \F_{2^n}} [f(\beta \cdot (x-x')) = 0].
  \]
  But note that $x-x'$ is a nonzero element of $\F_{2^n}$, and thus,
  for a uniformly random $\beta$, the random variable $\beta x$ is
  uniformly distributed on $\F_{2^n}$. It follows that
  \[
  \Pr_{\beta \in \F_{2^n}} [f(\beta \cdot (x-x')) = 0] = 2^{-r},
  \]
  implying that $\hLin$ is a universal family.
\end{proof}

Now we are ready to state and prove the Leftover Hash Lemma (focusing on
the special case of lossless condensers).

\begin{thm} (Leftover Hash Lemma) \index{Leftover Hash
    Lemma} \label{lem:leftover} Let $\cH = \{ h_t\colon \F_2^n \to
  \F_2^r \mid t \in \F_2^d \}$ be a universal family of hash functions
  with $D$ elements, and  
  define the function $h\colon \F_2^n \times [D] \to \F_2^r $ as
  $h(x, t) := h_t(x)$. Then,
for every $\kappa, \eps$ such that $r \geq \kappa + 2 \log(1/\eps)$, the
    function $h$ is a $(\kappa,\eps)$-lossless condenser.
  In particular, by choosing $\cH = \hLin$, it is possible to get
  explicit extractors and lossless condensers with $D=2^n$.
\end{thm}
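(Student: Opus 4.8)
The plan is to route through the condenser--expander equivalence (Lemma~\ref{lem:cond:expander}) rather than argue directly with the probabilistic definition of a lossless condenser: by that lemma it suffices to prove that the bipartite graph $G$ associated with $h$ (Definition~\ref{def:truthtable}) is a $(2^\kappa,\eps)$-unbalanced expander. Recall $G$ has left vertex set $\F_2^n$, right vertex set $\F_2^r\times[D]$, an edge between $a$ and $(b,t)$ precisely when $h_t(a)=b$, and is $D$-regular from the left (one edge per $t\in[D]$). So fix $S\subseteq\F_2^n$ with $|S|\le 2^\kappa$; I must show $|\Gamma(S)|\ge(1-\eps)D|S|$.

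The key step is to express the ``expansion deficiency'' $D|S|-|\Gamma(S)|$ purely in terms of \emph{pairwise} collisions of the family. For $t\in[D]$ and $b\in\F_2^r$ put $c_{b,t}:=|\{a\in S:h_t(a)=b\}|$, so that $\sum_b c_{b,t}=|S|$ for every $t$, the number of edges out of $S$ equals $D|S|$, and $|\Gamma(S)|=\sum_{b,t}\mathbf 1[c_{b,t}\ge 1]$. Hence
$D|S|-|\Gamma(S)|=\sum_{b,t}\bigl(c_{b,t}-\mathbf 1[c_{b,t}\ge1]\bigr)\le\sum_{b,t}\binom{c_{b,t}}{2}$,
using the elementary inequality $c-1\le\binom c2$ valid for every integer $c\ge1$ (and trivially for $c=0$); this is exactly the point at which $2$-wise independence, rather than higher independence, is enough. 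Now $\sum_{b,t}\binom{c_{b,t}}{2}$ counts the triples $(t,\{x,x'\})$ with $x\ne x'$ in $S$ and $h_t(x)=h_t(x')$, and summing the universality bound $|\{t:h_t(x)=h_t(x')\}|\le D2^{-r}$ over the $\binom{|S|}{2}$ unordered pairs yields $D|S|-|\Gamma(S)|\le\binom{|S|}{2}D2^{-r}\le |S|^2D/2^{r+1}$.

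It then remains to invoke the hypothesis $r\ge\kappa+2\log(1/\eps)$: since $|S|\le 2^\kappa$ and $2^{r+1}\ge 2\cdot 2^{\kappa}/\eps^2$, the loss term is $|S|/2^{r+1}\le\eps^2/2\le\eps$, so $|\Gamma(S)|\ge(1-\eps)D|S|$, i.e. $G$ is a $(2^\kappa,\eps)$-unbalanced expander, and Lemma~\ref{lem:cond:expander} gives that $h$ is a $(\kappa,\eps)$-lossless condenser (in fact $r\ge\kappa+\log(1/\eps)-1$ already suffices for this argument). The ``in particular'' claim follows at once: $\hLin$ has just been shown to be universal, each $h_\beta$ is computable by a single multiplication followed by a truncation in $\F_{2^n}$, so $h$ is explicit, and $|\hLin|=2^n$, so $D=2^n$.

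I do not expect a serious technical obstacle once the combinatorial formulation is adopted; the only genuine subtlety is that one should \emph{not} try to attack the probabilistic definition directly via the collision probability of $(Z,h(X,Z))$ --- the classical route for the Leftover Hash Lemma ``for extractors.'' That calculation only bounds the R\'enyi-$2$ entropy deficiency of $(Z,h(X,Z))$, and bounded R\'enyi-$2$ entropy is strictly weaker than (smooth) min-entropy, so it does \emph{not} yield $\eps$-closeness to a distribution of min-entropy $\log(D|S|)$. Passing through the neighborhood-size formulation and Lemma~\ref{lem:cond:expander} circumvents this, after which everything reduces to the routine collision count above.
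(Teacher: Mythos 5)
Your proof is correct, and it takes a genuinely different route from the paper's. You prove the vertex-expansion property of the graph associated with $h$ by a deterministic double count of collisions, bounding the deficiency $D|S|-|\Gamma(S)|$ by $\sum_{b,t}\binom{c_{b,t}}{2}\le\binom{|S|}{2}D2^{-r}$, and then pass back to the condenser definition via the equivalence of Lemma~\ref{lem:cond:expander}. The paper instead works directly with the probabilistic definition: it bounds the collision probability of $\cY$, the distribution of $(Z,h(X,Z))$, converts this into an $\ell_2$ bound against a distribution $\mu$ chosen uniform on a set of size $DK$ (with $K=2^\kappa$, $R=2^r$) containing $\supp(\cY)$, and applies Cauchy--Schwarz over $\supp(\mu)$ to get statistical distance at most $\frac12\sqrt{K/R}=\eps/2$; since $\mu$ has min-entropy $\log(DK)\ge\log(D|S|)$, the condenser property follows. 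Your route is more elementary, does not need the convexity reduction, and buys a better entropy loss: as you observe, $r\ge\kappa+\log(1/\eps)-1$ already suffices, matching the optimal loss for lossless condensers, whereas the collision-probability argument as run in the paper needs the $2\log(1/\eps)$ gap stated in the theorem. What the paper's route buys is a self-contained appendix proof that does not lean on the (cited but unproved) Lemma~\ref{lem:cond:expander}, and an argument that transfers verbatim to the extractor form of the Leftover Hash Lemma.

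One remark in your final paragraph is mistaken, though it does not affect your proof: the collision-probability route does yield the condenser property, and it is exactly what the paper does. The subtlety you missed is that one should not compare $(Z,h(X,Z))$ to the uniform distribution on all of $[D]\times\F_2^r$; one compares it to a uniform distribution on a set of size $DK$ barely covering its support. Since $\cX$ may be taken uniform on a set of size $K$ (by convexity), $\supp(\cY)$ has at most $DK$ atoms, the collision bound gives $\|\cY-\mu\|_2^2\le\frac{1}{DR}$, and Cauchy--Schwarz over the $DK$ atoms of $\supp(\mu)$ costs only a factor $\sqrt{DK}$, so R\'enyi-$2$ closeness does translate into $\eps$-closeness to a distribution of min-entropy $\log(DK)\ge\log(D|S|)$. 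The ``extractor-style'' calculation fails only if one insists on closeness to uniform over the entire range.
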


\begin{proof}
  Recall that by Definition~\ref{def:condenser} we need to show that for any distribution $\cX$
  over $\F_2^n$ and random variable $X$ drawn from $\cX$ and independent random variable $Z$ 
  uniformly drawn from $[D]$,
  respectively, the distribution of $h(X, Z)$ is $\eps$-close in statistical distance to a distribution
  with min-entropy at least $\kappa$.
  By a convexity argument, it suffices to show the
  claim when $\cX$ is the uniform distribution on a set $\supp(\cX)$ of size $K :=
  2^\kappa$ (on the other hand, we only use the lemma for such distributions in this paper). 
  
  Define $R := 2^r$, $D := 2^d$, and let $\mu$ be any 
  distribution uniformly supported on some set $\supp(\mu) \subseteq [D] \times \F_2^{r}$ such that $[D] \times \supp(\cX) \subseteq
  \supp(\mu)$, and denote by $\cY$ the distribution of $(Z, h(X, Z))$
  over $[D] \times \F_2^{r}$.  We will
  first upper bound the $\ell_2$ distance of the two distributions
  $\cY$ and $\mu$ (i.e., the $\ell_2$ difference of probability vectors defining the two distributions),
   that can be expressed as follows (we will use the notation $\cY(x)$ for the probability assigned to $x$ by $\cY$,
   and similarly $\mu(x)$):
  {\allowdisplaybreaks
    \begin{eqnarray}
      \| \cY - \mu \|_2^2 &=& \sum_{x \in [D] \times \F_2^{r}} (\cY(x) - \mu(x))^2 \nonumber \\
      &=& \sum_{x} \cY(x)^2 + \sum_{x} \mu(x)^2 -2 \sum_{x} \cY(x) \mu(x) \nonumber \\
      &\stackrel{\mathrm{(a)}}{=}& \sum_{x} \cY(x)^2 + \frac{1}{|\supp(\mu)|} -\frac{2}{|\supp(\mu)|} \sum_{x} \cY(x) \nonumber \\
      &=& \sum_{x} \cY(x)^2 - \frac{1}{|\supp(\mu)|} \label{eqn:LLLa},
    \end{eqnarray}}
  where $\mathrm{(a)}$ uses the fact that $\mu$ assigns probability $1/|\supp(\mu)|$
  to exactly $|\supp(\mu)|$ elements of $[D] \times \F_2^{r}$ and zeros elsewhere.

  Now observe that $\cY(x)^2$ is the probability that two independent
  samples drawn from $\cY$ turn out to be equal to $x$, and thus,
  $\sum_{x} \cY(x)^2$ is the \emph{collision probability} of two
  independent samples from $\cY$, which can be written as
  \begin{equation*}
    \sum_{x} \cY(x)^2 = \Pr_{Z,Z',X,X'}[(Z, h(X, Z)) = (Z', h(X', Z'))],
  \end{equation*}
  where the random variables $Z, Z'$ are uniformly and independently sampled from $[D]$ and $X, X' $ are independently
  sampled from $\cX$. We can rewrite the collision probability as
  {\allowdisplaybreaks
    \begin{eqnarray*}
      \sum_{x} \cY(x)^2 &=& \Pr[Z = Z'] \cdot \Pr[ h(X, Z) = h(X', Z') \mid Z = Z'] \\
      &=& \frac{1}{D} \cdot \Pr_{Z,X,X'}[ h_{Z}(X) = h_{Z}(X') ] \\
      &=& \frac{1}{D} \cdot (\Pr[X=X'] + \frac{1}{K^2} \sum_{\substack{x, x' \in \supp(\cX) \\ x \neq x'}} \Pr_{Z}[ h_{Z}(x) = h_{Z}(x') ]) \\
      &\stackrel{\mathrm{(b)}}{\leq}& \frac{1}{D} \cdot \big(\frac{1}{K} + \frac{1}{K^2} \sum_{\substack{x, x' \in \supp(\cX) \\ x \neq x'}} \frac{1}{R}\big)
      \leq \frac{1}{DR} \cdot \big(1 + \frac{R}{K}\big),
    \end{eqnarray*}}
  \noindent where $\mathrm{(b)}$ uses the assumption that $\cH$ is a
  universal hash family.  Plugging the bound in \eqref{eqn:LLLa}
  implies that
  \[
  \| \cY - \mu \|_2 \leq \frac{1}{\sqrt{DR}} \cdot \sqrt{1 -
    \frac{DR}{|\supp(\mu)|} + \frac{R}{K}}.
  \]
  Observe that both $\cY$ and $\mu$ assign zero probabilities to
  elements of $[D] \times \F_2^{r}$ outside the support of $\mu$. Thus using
  Cauchy-Schwarz on a domain of size $|\supp(\mu)|$, the above bound
  implies that the statistical distance between $\cY$ and $\mu$ is at
  most
  \begin{equation} \label{eqn:LLLb} \frac{1}{2} \| \cY - \mu \|_1 \leq \frac{1}{2} \cdot
    \sqrt{\frac{|\supp(\mu)|}{DR}} \cdot \sqrt{1 -
      \frac{DR}{|\supp(\mu)|} + \frac{R}{K}}.
  \end{equation}
%
  Now, we specialize
  $\mu$ to any distribution that is uniformly supported on a set of size $DK$ containing
  $\supp(\cY)$ (note that, since $\cX$ is assumed to be uniformly distributed on its support,
  $\cY$ must have a support of size at most $DK$). Since
  $r \geq \kappa + 2\log(1/\eps)$, we have $K = \eps^2 R$, and
  \eqref{eqn:LLLb} implies that $\cY$ and $\mu$ are $\eps$-close (in fact, $(\eps/2)$-close) in statistical distance.
\end{proof}

\end{document}